\newtheorem{theorem}{Theorem}
\newtheorem{corollary}[theorem]{Corollary}
\newtheorem{lemma}[theorem]{Lemma}
\newtheorem{proposition}[theorem]{Proposition}
\newtheorem{remark}[theorem]{Remark}
\DeclareMathOperator{\Real}{Re}
\DeclareMathOperator{\Imaginary}{Im}
\renewcommand{\Re}{\Real}
\renewcommand{\Im}{\Imaginary}
\newcommand{\C}{{\mathbb C}}
\newcommand{\R}{{\mathbb R}}
\newcommand{\Z}{{\mathbb Z}}
\newcommand{\G}{{\mathcal G}}
\renewcommand{\H}{{\mathcal H}}
\newcommand{\D}{{\mathbb D}}
\newcommand{\be}{\begin{equation}}
\newcommand{\ee}{\end{equation}}
\newcommand{\GD}{\G_{\text D}}
\definecolor{calpolypomonagreen}{rgb}{0.12, 0.65, 0.17}
\title[Dimers and circle patterns]{Dimers and circle patterns}
\author{Richard Kenyon, Wai Yeung Lam, Sanjay Ramassamy, Marianna Russkikh}
\address{Richard Kenyon\\
	Mathematics Department \\Yale University\\ New Haven\\ CT 06520}
\email{richard.kenyon at yale.edu}
\address{Wai Yeung Lam\\
	Beijing Institute of Mathematical Sciences and Applications\\ Huairou District\\ Beijing\\ P.R. China}
\email{lam at bimsa.cn}
\address{Sanjay Ramassamy\\ 
	D\'epartement de math\'ematiques et applications \\ \'Ecole normale sup\'erieure \\ CNRS \\ PSL University \\ 45 rue d'Ulm \\ 75 005 Paris, France}
\email{sanjay.ramassamy at ipht.fr}
\address{Marianna Russkikh\\
	Section de math\'ematiques \\ Universit\'e de Gen\`eve \\ 2-4 rue du Li\`evre, case postale 64 \\ 1211 Gen\`eve 4, Switzerland}
\email{marianna.russkikh at unige.ch}
\begin{document}

\maketitle
\selectlanguage{english}
\begin{abstract} 
We establish a correspondence between the dimer model on a bipartite graph and a circle pattern with the combinatorics of that graph, which holds for graphs that are either planar or embedded on the torus. The set of positive face weights on the graph gives a set of global coordinates on the space of circle patterns with embedded dual. Under this correspondence, which extends the previously known isoradial case, the urban renewal (the local move for dimer models) is equivalent to the Miquel move (the local move for circle patterns). As a consequence, we show that \emph{Miquel dynamics} on circle patterns is a discrete integrable system governed by the octahedron recurrence. As special cases of these circle pattern embeddings, we recover harmonic embeddings for resistor networks and s-embeddings for the Ising model.
\end{abstract} {\selectlanguage{french}
\begin{abstract}
	Nous \'etablissons une correspondance entre le modèle de dimères sur un graphe biparti et un agencement de cercles avec la combinatoire de ce graphe, valable pour des graphes plongés sur le plan ou sur le tore. Les poids positifs sur les faces du graphe fournissent des coordonnées globales sur l'espace des agencements de cercles dont le dual est plongé. Via cette correspondance, qui étend le cas isoradial découvert précédemment, le renouvellement urbain (mouvement local pour les modèles de dimères) est équivalent au mouvement de Miquel (mouvement local pour les agencements de cercles). Il en découle que la dynamique de Miquel sur les agencements de cercles est un système intégrable discret gouverné par la récurrence de l'octaèdre. Comme cas particuliers de ces plongements comme agencements de cercles, on retrouve les plongements harmoniques pour les réseaux de résistances et les s-plongements pour le modèle d'Ising.
\end{abstract}}

\tableofcontents
\section{Introduction}

The \emph{bipartite planar dimer model} is the study of random perfect matchings (``dimer coverings") of a bipartite planar graph. The dimer model is a classical statistical mechanics model, and can be analyzed using determinantal methods: partition functions and correlation
kernels are computed by determinants of associated matrices defined from the weighted graph \cite{Kenyon.localstats}. 
Several other
two-dimensional models of statistical mechanics, including the Ising model and the spanning tree model, 
can be regarded as special cases of the dimer model by subdividing the underlying graph \cite{Fisher1966,Temperley1974,Dubedat2011, Kenyon2000}. Natural parameters for the dimer model, defining the underlying probability measure, 
are \emph{face weights}, which are positive real parameters on the bounded faces of the graph~\cite{GK2013}.

A \emph{circle pattern} is a realization of a graph in $\C$ with cyclic faces, i.e. where all vertices on a face lie on a circle.
Circle patterns are central objects in discrete differential geometry, related to (hyperbolic) polyhedra, Teichm\"uller space, and discrete conformal geometry. For example, following original ideas of William Thurston, two circle patterns with the same intersection angles are considered discretely conformally equivalent, see e.g.~\cite{Bobenko2004}.

In \cite{Kenyon2002} a relation was found between a special subset of dimer models,
called \emph{critical dimer models}, and \emph{isoradial circle patterns}, i.e. circle patterns in which all the circles have the same radius. The partition function
and various probabilistic quantities were related to the underlying 3D hyperbolic geometry. At that time there was no clear relation between general dimer models and general circle patterns and this question was raised again in~\cite{Bobenko2010} and~\cite{Ramassamy2017}.

The main purpose of this paper is to answer this question, establishing a correspondence between face-weighted bipartite planar graphs and circle patterns, which generalizes the isoradial case. This correspondence is formulated for two classes of planar graphs, finite graphs and infinite bi-periodic graphs. Under this correspondence, dimer face weights correspond to biratios of distances between circle centers. A major feature of this correspondence is to identify the \emph{spider move} (also known as \emph{urban renewal},
or cluster mutation), which is a local move for the dimer model, to an application of Miquel's six-circles theorem to the underlying circle pattern. This establishes a new connection between circle patterns and cluster algebras.

The circle patterns arising under this correspondence are those with a bipartite graph and with an \emph{embedded dual}, where the dual graph is the graph of circle centers.
Having embedded dual does not imply that the primal pattern is embedded, although the set of circle patterns with embedded dual includes all embedded circle patterns in which each face contains its circumcenter. Centers of bipartite circle patterns arise in various places. They coincide with the crease patterns of origami that are locally flat-foldable \cite{Hull2002}. In discrete differential geometry, they are called conical meshes \cite{Pottmann2008,Muller2015} and related to discrete minimal surfaces \cite{Lam2016}. 
Circle center embeddings are also considered in~\cite{CLR} under the name of t-embeddings with an emphasis on the convergence of discrete holomorphic functions to continuous ones in the small mesh size limit, i.e., when the circle radii tend to~$0$.

This correspondence between dimer models of statistical mechanics and circle patterns from discrete differential geometry should allow to transfer results from one field to the other. As a first application of this correspondence, we show that \emph{Miquel dynamics}, a discrete-time dynamical system for periodic circle patterns introduced in \cite{Ramassamy2017} and also studied in \cite{Glutsyuk2018}, is a discrete integrable system governed by the octahedron recurrence, answering a conjecture made in \cite{Ramassamy2017}. In the dimer model our natural parameters are the positive face weights, which correspond on the level of circle patterns to embedded circle centers. However our results apply to general real face weights and non-embedded circle centers as well; in particular Miquel dynamics is algebraic in nature and the signs of the weights do not matter.

A central question in 2D statistical mechanics is to find embeddings of planar graphs which are adapted to a model and at the same time universal, i.e. with a definition valid for any planar graph, see e.g. \cite{Beffara08}. While the definition of a statistical mechanics model on a planar graph (e.g. random walk, dimer model, Ising model) does not depend on the embedding of the graph, stating and proving scaling limit results to conformally covariant objects such as Brownian motion or SLE curves requires one to pick an appropriate embedding for the graph. \emph{Harmonic embeddings} (also known as Tutte embeddings) provide such adapted embeddings for resistor networks and random walks (see e.g. \cite {Biskup11}). The \emph{s-embeddings} recently introduced by Chelkak \cite{Chelkak2017} (see also \cite{Lis2017}) are embeddings adapted to the Ising model. 

Our main result is that circle center embeddings are the right universal framework to study the planar bipartite dimer model. A first indication of this is the aforementioned compatibility between the local moves for the dimer model and for circle patterns. A second indication is that both resistor networks and the Ising model on planar graphs can be seen as special cases of the bipartite dimer model \cite{Kenyon2000, Dubedat2011} and we show in this article that both harmonic embeddings and s-embeddings arise as special cases of circle center embeddings.

There is an intriguing algebraic similarity between the dimer model and Teichm\"uller theory: The face weights describing the dimer model \cite{GK2013} and the shear coordinates for Teichm\"uller space \cite{Fock2006} both behave like X-variables from cluster algebras. The relation between Teichm\"uller theory and circle patterns together with the correspondence between dimers and circle patterns could help to shed light on this similarity.

During the completion of this work, a preprint by Affolter \cite{Affolter2018} appeared, which shows how to go from circle patterns to dimers and observes that the Miquel move is governed by the central relation. Affolter notes that there is some information missing to recover the circle pattern from the $X$ variables. We provide here a complete picture, both in the planar and torus cases.

\subsection*{Organization of the paper}
In Section~\ref{sec:bipartiteembeddings}, we introduce circle center embeddings associated with bipartite graphs with positive face weights in the planar case. 
Section~\ref{sectionbiperiodic} is devoted to circle center embeddings in the torus case.
In Section~\ref{sec:Miquel} we show the equivalence between the 
spider move/urban renewal for the bipartite dimer model and the central move coming from Miquel's theorem for circle patterns. In particular this
gives a cluster algebra structure underlying Miquel dynamics. Section~\ref{sec:networks} is devoted to translating into circle geometry the generalized Temperley bijection between resistor networks and dimer models. In 
Section~\ref{sec:sembeddings} we show that the s-embeddings for the Ising model arise as a special case of circle center embeddings.

\section{Background on dimers and the Kasteleyn matrix}

For general background on the dimer model, see \cite{Kenyon.lectures}.
A \emph{dimer cover}, or \emph{perfect matching}, of a graph $\G$ is a set of edges with the property that every vertex
is contained in exactly one edge of the set. We assume our graphs are finite, connected and embeddable either on the plane or on the torus. 
A graph is \emph{nondegenerate} (for the dimer model) if it has dimer covers,
and each edge occurs in some dimer cover.

If $\nu:E(\G)\to\R_{>0}$ is a positive weight function on edges of $\G$ , we associate a
weight $\nu(m)=\prod_{e\in m}\nu(e)$ to a dimer cover which is the product of its edge weights. We can also associate 
to this data a probability measure $\mu$ on the set
$M$ of dimer covers, giving a dimer cover $m$ a probability $\frac{1}{Z}\nu(m)$, where $Z=\sum_{m\in M}\nu(m)$ 
is a normalizing constant,
called the \emph{partition function}. 

Two weight functions $\nu_1,\nu_2$
are said to be \emph{gauge equivalent} if
there is a function $F:V(\G)\to\R$ such that for any edge $vv'$, $\nu_1(vv') = F(v)F(v')\nu_2(vv').$
Gauge equivalent weights define the same probability measure $\mu$.
For a planar bipartite graph, two weight functions are gauge equivalent if and only if their \emph{face weights} are equal,
where the face weight of a face with vertices $w_1,b_1,\dots,w_k,b_k$ is the ``alternating product'' of its edge weights,
\be\label{Xdef}X=\frac{\nu(w_1b_1)\dotsm\nu(w_kb_k)}{\nu(b_1w_2)\dotsm\nu(b_kw_1)}.\ee

If $\G$ is a planar bipartite graph which has dimer covers, a \emph{Kasteleyn matrix} is a signed, weighted 
adjacency matrix, with rows indexed by the
 white vertices and columns indexed by the black vertices, with $K_{wb}=0$ if $w$ and $b$ are not adjacent, 
and $K_{wb} = \pm\nu(wb)$ otherwise, where the signs are chosen so that the product of signs around a face
is~$(-1)^{k+1}$ for a face of degree $2k$. Kasteleyn \cite{Kast} showed that the determinant of a Kasteleyn matrix is the weighted
sum of dimer covers: 
$$|\det K| = Z = \sum_{m\in M}\nu(m).$$

Different choices of signs satisfying the Kasteleyn condition correspond to multiplying $K$ on the right and/or left by
diagonal matrices with $\pm1$ on the diagonals. Different choices of gauge correspond to 
multiplying $K$ on the right and left by
diagonal matrices with positive diagonal entries (see e.g. \cite{GK2013}). We call two matrices  \emph{gauge equivalent} if they are related by these two operations: multiplication on the right and left by diagonal matrices with real nonzero diagonal entries.
Note that in terms of any (gauge equivalent) Kasteleyn matrix we can recover the face weights via the formula
\be\label{XdefK}X=(-1)^{k+1}\frac{K_{w_1b_1} K_{w_2b_2}\dotsm K_{w_kb_k}}{K_{w_1b_k} K_{w_2b_1}\dotsm K_{w_kb_{k-1}}}.\ee

In some circumstances it is convenient to take complex signs $e^{i\theta}$ in the Kasteleyn matrix, rather than just  $\pm1$;
in that case the required condition on the signs is that the quantity $X$ in (\ref{XdefK}) is positive, see \cite{Percus}.
This generalization will be used below.

Certain elementary transformations of $\G$ preserve the measure $\mu$; see Figure \ref{elemtransfs}.
\begin{figure}
\begin{center}
\includegraphics[width=3.5in]{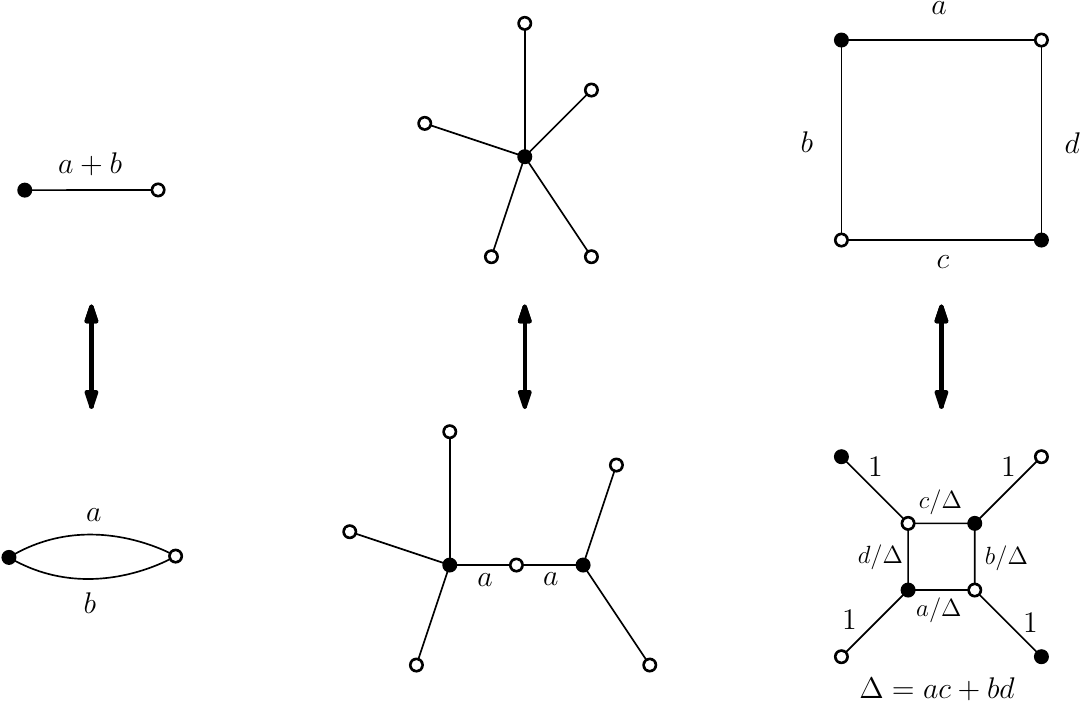}
\end{center}
\caption{\label{elemtransfs}Elementary transformations preserving the dimer measure~$\mu$. {\bf Left:} Replacing parallel edges
with weights $a,b$ by a single edge with weight $a+b$; {\bf Center:} contracting a degree $2$ vertex whose edges have equal weights; {\bf Right:} the spider move, with weights transformed as indicated.}
\end{figure}

\section{Bipartite graphs and circle patterns}
\label{sec:bipartiteembeddings}

In this section, we establish a correspondence between planar bipartite graphs with positive face weights and circle patterns with embedded dual. The construction can be extended to general real weights, but in general the dual will not be embedded unless weights are positive  (Theorem~\ref{thm:planardimerstocircles} below). 

The correspondence holds for several different types of boundary conditions. Although in the intermediate steps
we discuss somewhat general boundary conditions, for the final result (Theorem \ref{thm:planardimerstocircles}) we will 
need to consider only the (simplest) case of a circle pattern with outer face of degree $4$ (which is necessarily cyclic).

\subsection{Centers of circle patterns}

Let $\G$ be a finite connected embedded bipartite planar graph. Let $\hat\G$ be obtained from $\G$ by adding a vertex $v_\infty$ connected to all vertices on $\G$'s outer face.
Let $\G^*$ be the planar dual of $\hat\G$,  where~$v_\infty$ corresponds to the outer face of~$\G^*$. We call the vertices of $\G^*$ on its outer face the \emph{outer dual vertices}.
There is one outer dual vertex for every edge on the outer face of $\G$. We 
refer to $\G^*$ as the \emph{augmented dual} of $\G$ to distinguish it from the usual dual. 

Suppose $c: V(\G) \to \mathbb{C}$ is an embedding of 
$\G$ with cyclic faces (i.e. for all vertices $v$ of a single face $f$, all points $c(v)$ lie on a single circle), except perhaps the outer face, which we assume to be convex. Assume also that each bounded
face contains its circumcenter. 
  
The circumcenters then form an embedding $\widetilde\phi:V(\G^*) \to \mathbb{C}$ 
of the graph $\G^*$, except for the outer dual vertices. For each outer dual vertex $f$ of $\G^*$ define 
$\widetilde\phi(f)$
to be a point on the perpendicular bisector of the corresponding edge of $\G$, and external to the convex hull of $\G$.
We can think of $\widetilde\phi(f)$ as the center of a circle passing 
through the two vertices of the corresponding outer edge of $\G$.

Since each dual edge connects the centers of two circles with the corresponding primal edge as a common chord, each 
dual edge is a perpendicular bisector of the primal edge. 

Recalling that $\G$ is bipartite, note that 
the alternating sum of angles around every non-outer
vertex of $\widetilde\phi(\G^*)$ is zero. 
Moreover note that the faces of the augmented dual graph $\widetilde\phi(\G^*)$ are convex: we have a 
\emph{convex embedding}, that is, an embedding with convex faces, of $\G^*$.

The following proposition provides a partial converse to this construction.

\begin{proposition}\label{angles}
Suppose $\G$
is a bipartite planar graph and $\widetilde\phi:V(\G^*) \to \mathbb{C}$ is a convex embedding of $\G^*$. 
Then there exists a circle pattern $c:V(\G)  \to \mathbb{C}$ with $\widetilde\phi$ as centers if and only if the alternating sum of angles
around every non-outer dual vertex is zero.
\end{proposition}
 
Note that we do not conclude that $c(V(\G))$ is an embedding, only a realization with the property that vertices on each face lie on a circle. It seems difficult to give conditions under which $c( V(\G))$  will be an embedding, although the space of circle pattern embeddings in which each face contains the circumcenter
is an open subset of our space of realizations. Furthermore, if a circle pattern exists, it is not unique. Indeed there is a two-parameter family of circle patterns with $\widetilde\phi$  as centers, 
which depends on the position of an initial vertex as shown in the proof.

\begin{proof}
It remains to show that given such an embedding $\widetilde\phi$, there is a circle pattern 
with~$\widetilde\phi$ as centers. We construct such a circle pattern~$c$  as follows. Pick a vertex~$v_i$ and assign the vertex to some arbitrary point~$c(v_i)$ in the plane. We then define $c(v_j)$ for a neighboring vertex $v_j$ in such a way that $c(v_j)$ is the image of $c(v_i)$  under reflection across the line connecting images of the neighboring dual vertices $f$ and $f'$. 
Because of the angle condition, iteratively defining the~$c$ value around a face will return to the initial value. Hence the map~$c$  is well defined and independent of the path chosen. Note that $|c(v_i)\widetilde\phi(f)| = |c(v_j)\widetilde\phi(f)|$ and $|c(v_i)\widetilde\phi(f')| = |c(v_j)\widetilde\phi(f')|$, therefore the faces under the map $c$ are cyclic with centers at $\widetilde\phi$.
\end{proof}

\subsection{From circle patterns to face weights}
\label{circlestoX}

Suppose $\G$
is a bipartite planar graph and we have an \emph{embedded} circle pattern $c: V(\G) \to\C$, in which each bounded face contains its circumcenter, with outer face convex but not necessarily cyclic. Let $ \widetilde\phi:V(\G^*) \to \C$ be the circle centers (defining $\widetilde\phi$ on outer dual vertices as above). 

Now define a function $\omega(w,b)= \widetilde\phi(f_l) - \widetilde\phi(f_r)$ where  $f_l, f_r$ 
denote the left and the right face of the edge $wb$ 
oriented from $w$ to $b$. 
Define a matrix $K$ with rows indexing the white vertices and columns indexing the black 
vertices by $K_{wb}=\omega(w,b)$. We claim that $K$ 
is a Kasteleyn matrix (with complex signs).
To see this, suppose a bounded face $f$ with center $u=\widetilde\phi(f)$ 
has vertices $c(w_1),c(b_1),\dots, c(w_k),c(b_k)$ in counterclockwise order. 
We denote the centers of the neighboring faces as $u_1,u_2,\dots,u_{2k}$, where $u_1$ is the left face of $w_1b_k$ and $u_2$ is the right face of $w_1b_1$. Then
	\[
	\frac{K_{w_1b_1}K_{w_2b_2}\dotsm K_{w_kb_k}}{K_{w_1b_k}K_{w_2b_1}\dotsm K_{w_kb_{k-1}}}= \frac{(u-u_2)(u-u_4)\dotsm(u-u_{2k})}{(u_1-u)(u_3-u)\dotsm(u_{2k-1}-u)}.
	\]
The angle condition is equivalent to saying that the face weight
\be\label{eq:realcenter}
	X_{f} =  (-1)^{k+1}\frac{(u-u_2)(u-u_4)\dotsm(u-u_{2k})}{(u_1-u)(u_3-u)\dotsm(u_{2k-1}-u)}
\ee
is positive and $K$ is a Kasteleyn matrix. This associates a positive-face-weighted bipartite planar graph to a circle pattern.

We claim, additionally, that if the outer face of $c(\G)$ is cyclic 
(which will be the case if it has degree $4$, see below) 
the graph $\G$ has dimer covers and is nondegenerate (each edge is an element of at least one dimer cover).
The existence of dimer covers follows if we can find a \emph{fractional dimer cover} (fractional matching),
that is, an element of $[0,1]^E$ summing to $1$ at each vertex: recall that the set of dimer covers of a graph is the set
of vertices of the polytope of fractional dimer covers \cite{LovaszPlummer}. To find a fractional dimer cover,
associate to each edge $wb$ the quantity $\frac{\theta_{wb}}{2\pi}$ where $\theta_{wb}$ is the angle at $w$ 
(or $b$, they are the same) 
of the quad $(w,f,b,f')$ whose vertices are the vertices $w,b$ 
and the two dual vertices $f,f'$ of that edge.
In the case that one of these dual vertices
is an outer dual vertex, define the angle
$\theta_{wb}$ at $w$ instead as follows. Let $u_{f_o}$ be the circumcenter of the outer face, and $u$ be the other dual 
vertex of the edge $wb$. Define $\theta_{wb}=\angle uwb + \angle bwv$ where~$v$ is a point lying past $w$ on the ray from $u_{f_o}$ through~$w$.
Note that the angle obtained at~$b$ by the analogous method will equal that at $w$. 
Since $\sum_{b}\theta_{wb}=2\pi = \sum_{w}\theta_{wb}$, this defines a fractional dimer cover.
The nondegeneracy follows from the fact that the fractional matching is nonzero on each edge.

\subsection{Coulomb gauge for finite planar graphs with outer face of degree $4$}

In this section let $\G$ be a face-weighted bipartite planar graph, which is nondegenerate, and
has outer face of degree $4$. Let $Q$ be a convex quadrilateral. We 
construct a circle pattern for $\G$ with $\G^*$ embedded in $Q$. See Figure \ref{quadexample} for an example.
Our inductive construction will in principle work for graphs with outer face of higher degree,
but the initial step of the induction proof is more complicated and is not something we can currently handle.

\begin{figure}[htbp]
\center{\includegraphics[width=1.6in]{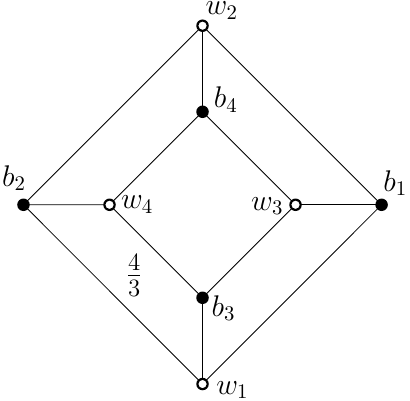}\hskip1cm\includegraphics[width=1.75in]{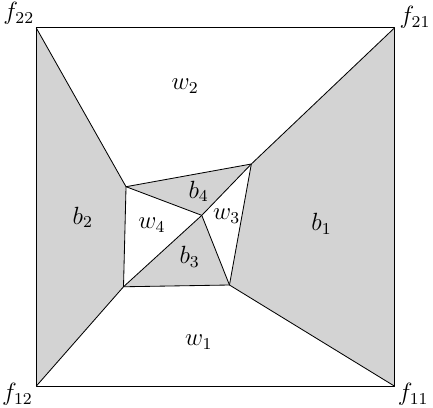} \hskip1cm \includegraphics[width=1.6in]{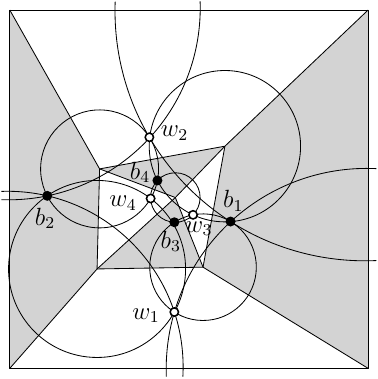}}
\caption{\label{quadexample}  The ``cube" graph is on the left with all face weights $1$ except the one indicated with weight $4/3$. Its augmented dual graph $\G^*$ is realized as circle centers having outer boundary a square $Q$ (middle). A circle pattern with these centers is shown on the right.}
\end{figure}

Let $\G$ have outer boundary vertices $w_1,b_1,w_2,b_2$. 
For  each bounded face $f$, let $X_f>0$ be its face weight.

The graph $\G^*$ has outer face of degree $4$; denote the vertices of its outer face by $f_{11}$, $f_{12}$, $f_{21}$ and $f_{22}$, where $f_{ij}$ is adjacent to the edge $(w_ib_j)^*$.

We construct a convex embedding in $Q$ of $\G^*$, with the outer vertices of $\G^*$ going to the vertices of $Q$, 
satisfying the property that the vertices of $\G^*$ go to the circle centers of a circle pattern with
the combinatorics of $\G$ (in the sense that the angles satisfy Proposition \ref{angles}), and moreover the face variables $X_f$
of $\G$ 
give the ``alternating product of edge lengths" as in (\ref{Xdef}). 

Let $W_1,B_1,W_2,B_2\in\C$ be the edges of $Q$ (summing to zero), oriented counterclockwise.
Let $K$ be a Kasteleyn matrix associated to $\G$ with face weights $X$.
Let $G(w)$ and $F(b)$ be functions on white and black vertices of $\G$
satisfying the properties that: for all internal white vertices $w$, we have
\be\label{cocycle1}\sum_b G(w) K_{wb}F(b) =0,\ee 
and for all internal black vertices $b$, we have
\be\label{cocycle2}\sum_w G(w) K_{wb}F(b) =0,\ee
and for $i=1,2$
\begin{align}
\label{bdycrit1}\sum_w G(w) K_{wb_i}F(b_i) &= B_i\\
\label{bdycrit2}\sum_b G(w_i) K_{w_ib}F(b) &= -W_i.
\end{align}
Functions $G,F$ satisfying (\ref{cocycle1}) and (\ref{cocycle2}) are said to give a \emph{Coulomb gauge} for $\G$. The reason for such a name is that the edge weights $G(w) K_{wb}F(b)$ have zero divergence at each internal vertex, which is similar to the case of the Coulomb gauge in electromagnetism, corresponding to the choice of a divergence-free vector potential \cite[Section 6.5]{Jackson1999}.
The existence of a Coulomb gauge $G,F$
taking the boundary values $(\vec B,-\vec W)$ (\ref{bdycrit1}),(\ref{bdycrit2})  (for graphs with boundary lengths $4$ or more) is discussed in Section~\ref{Coulombexistence} below.
As shown there, equations (\ref{cocycle1})-(\ref{bdycrit2}) 
determine $G$ and $F$ up to a finite number of choices: in fact one or two (and typically two) choices for boundary of length $4$, see below. 

Given $G,F$ satisfying the above, 
define a function $\omega$ on oriented edges by $\omega(w,b) = G(w)K_{wb}F(b)$
(and $\omega(b,w)=-\omega(w,b)$, so that $\omega$ is a flow, or 1-form).

Equations (\ref{cocycle1}) and (\ref{cocycle2}) imply that $\omega$ is co-closed (divergence free) at internal vertices.
Thus $\omega$ can be integrated to define a mapping $\phi$ from the augmented dual graph $\G^*$ into $\C$
by the formula 
\be\label{dphi}
\phi(f_1)-\phi(f_2) = \omega(w,b)\ee
where $f_1,f_2$ are the faces adjacent to edge $wb$, with $f_1$ to the left and
$f_2$ to the right when traversing the edge from $w$ to $b$. The mapping $\phi$ is defined up to an additive constant; by (\ref{bdycrit1}) and (\ref{bdycrit2}) 
we can choose the constant so that the vertices $f_{11},f_{12},f_{22},f_{21}$ go to the vertices of $Q$.

\begin{theorem}
\label{thm:planardimerstocircles}
Suppose $\G$ has outer face of degree $4$.
The mapping $\phi$ defines a convex embedding into $Q$ of $\G^*$ sending the outer vertices to the corresponding vertices of~$Q$. The images of the vertices of $\G^*$ are the centers of a circle pattern with the given combinatorics of $\G$ and face weights $X$.
Moreover the outer face of $\G$ will also be cyclic.
\end{theorem}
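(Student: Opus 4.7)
I would proceed in two stages. The first, mostly formal, establishes that $\phi$ is well-defined with the correct boundary behavior. The cocycle conditions (\ref{cocycle1})--(\ref{cocycle2}) express that $\omega$ is divergence-free at every interior vertex of $\G$, so that integration of $\omega$ along dual paths is path-independent; this makes $\phi$ well-defined on the vertices of $\G^*$ up to a single additive constant. The boundary conditions (\ref{bdycrit1})--(\ref{bdycrit2}) compute the signed $\omega$-flux across each of the four outer edges of $\G$, showing it equals the corresponding edge vector $W_i$ or $B_i$ of $P$. Fixing the constant of integration appropriately, $\phi$ then sends the four outer dual vertices $f_{ij}$ precisely to the vertices of $P$.

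The next step is to extract the circle pattern. A key observation is that the face weight identity (\ref{eq:realcenter}) from Section~\ref{circlestoX} continues to hold with $K_{wb}$ replaced by $\omega(wb)=G(w)K_{wb}F(b)$, since the factors $G(w_i)$ and $F(b_j)$ cancel in the alternating product. Hence, at each interior dual vertex of $\G^*$, the alternating product of the complex vectors $\omega$ along the incident edges equals $(-1)^{k+1}X_f$, a positive real number. Once the embedding is known to be convex, this positivity forces the alternating sum of angles around that dual vertex to equal $\pi$ rather than a larger odd multiple of $\pi$, and Proposition~\ref{angles} then produces a circle pattern with combinatorics of $\G$ whose circle centers are the images $\phi(f)$. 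The outer face is handled by noting that the four primal boundary vertices must lie on a common circle, since the four infinite dual edges are perpendicular bisectors of the outer primal edges and emanate from vertices of $P$.

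The main obstacle, and the heart of the theorem, is showing that $\phi$ is actually a convex embedding of $\G^*$ into $P$, rather than a possibly self-overlapping realization. My plan is to argue by induction on the combinatorial complexity of $\G$, for instance on the number of interior vertices. The inductive step would use the elementary moves of Figure~\ref{elemtransfs}---parallel edge reduction, degree-$2$ vertex contraction, or a spider move---to pass to a simpler face-weighted graph $\G'$ with the same outer quadrilateral $P$ and the same boundary data; by the inductive hypothesis $\G'$ admits a convex embedding with the required circle-pattern structure, and one must then verify that the inverse local move lifts this embedding to a convex embedding of $\G^*$, which reduces to a local geometric check near the affected faces. The base case, where $\G$ has no interior vertices, together with the existence (up to a finite choice) of the canonical gauge $G,F$ satisfying (\ref{cocycle1})--(\ref{bdycrit2}), will rely on the material of Section~\ref{canonicalexistence} and on the nondegeneracy and well-connectedness assumptions on $\G$. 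The subtlest point is ruling out degenerations---collapsing dual faces, loss of local injectivity, or the alternating angle sum exceeding $\pi$---which I would control by combining openness of the convex-embedding locus in face-weight space with a connectedness/continuity argument deforming from an explicitly convex starting configuration such as a symmetric isoradial example in $P$.
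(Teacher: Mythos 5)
Your overall strategy---reduce to a base case by elementary transformations, with the spider move as the only nontrivial step---is the same as the paper's, and your first two stages (well-definedness of $\phi$ from the cocycle conditions, and the gauge-invariance of the alternating product so that the angle condition of Proposition~\ref{angles} applies) match the setup. But the two facts you defer are precisely the content of the proof, and as stated your plan has gaps there. First, your induction ``on the number of interior vertices'' has no engine: it is not clear that an arbitrary nondegenerate, well-connected face-weighted graph with outer face of degree $4$ admits an elementary move that strictly reduces complexity while preserving the four marked boundary vertices. The paper supplies this via a theorem of D.~Thurston: every such graph can be built up from the $4$-cycle by a sequence of elementary transformations, with the marked vertices persisting throughout. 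Without citing or proving such a reduction result, the induction does not get off the ground. (One also has to extend the notion of convex embedding to the intermediate graphs, which may have parallel edges and degree-$2$ vertices; the paper handles these explicitly as ``near-embeddings.'')

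Second, the ``local geometric check near the affected faces'' that you postpone is exactly Lemma~\ref{lem:solutionssquare}: for a convex quadrilateral of neighboring centers and any $X>0$, \emph{both} roots of the central-move quadratic \eqref{ueq} lie strictly in the interior of the quadrilateral. This single lemma simultaneously settles your base case (the $4$-cycle) and rules out all the degenerations you worry about in the inductive step, since the spider move on the embedding is the central move and convexity of the previously constructed faces places the new dual vertex inside its quadrilateral of neighbors. Your proposed substitute---openness of the convex-embedding locus plus a connectedness/continuity deformation from an isoradial configuration---is not what the paper does and would require proving path-connectedness of the good locus and analyzing its boundary, which is essentially the same lemma in disguise. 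Finally, your one-line argument that the outer face of $\G$ is cyclic does not work as stated: perpendicularity of the boundary dual edges to the primal boundary edges does not by itself force the four boundary vertices onto a common circle. The paper obtains this by induction: it holds for the $4$-cycle by construction, and central moves neither move the outer dual vertices nor change their radii.
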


Boundary length $4$ is special in the sense that if $\G$ has outer face of degree strictly larger than $4$, 
the outer face of the associated circle pattern will not necessarily
be cyclic.

\begin{proof}

Our proof relies on two lemmas, whose statements and proofs are postponed until 
after the proof of the theorem. Lemma \ref{lem:postnikov} shows that $\G$ can be obtained from the $4$-cycle graph using a sequence of \emph{elementary transformations} (see Figure \ref{elemtransfs}). Lemma~\ref{lem:solutionssquare} shows that the theorem holds true when $\G$ is equal to the $4$-cycle graph. Therefore to complete the proof of the theorem it suffices to show that, if it holds for a graph then it holds for
any elementary transformation applied to that graph.

To use this argument we must extend slightly our notion of convex embedding to include the case 
when $\G$ has degree $2$ vertices, and when $\G$ has parallel edges, 
because these necessarily occur at intermediate stages when we build up the graph $\G$ from the $4$-cycle.

When $\G$ has parallel edges connecting two vertices $w$ and $b$, the graph $\G^*$ has one or more degree-$2$ vertices there; 
we assign a location to these vertices as shown in  Figure~\ref{parallel}.
Note that under the elementary move merging those edges of $\G$,
we can simply forget the corresponding circle and circle center of $\G^*$. 

\begin{figure}[htbp]
\center{\includegraphics[width=4.7in]{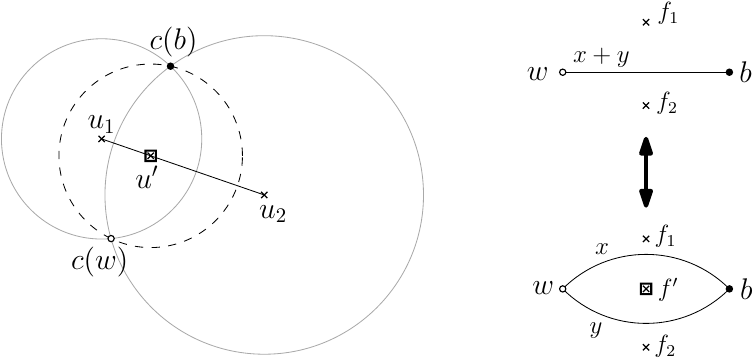}}
\caption{\label{parallel}
 Replacing a single edge with weight $(x+y)$ by parallel edges
with weights $x,y$ corresponds to adding a point~$u'=\phi(f')$ on segment~$u_1u_2$, where~$u_{1}=\phi(f_{1})$ and $u_{2}=\phi(f_2)$. The position of the point~$u'$ on the segment is chosen so that~$\frac{|u_1-u'|}{|u_2-u'|}=\frac xy$. The circle (in dashed stroke) corresponding to~$f'$ is centered at~$u'$ and passes through~$c(w)$ and~$c(b)$.}

\end{figure}

When $\G$ has a degree-$2$ vertex $v$, connected to neighbors $v_1$ and $v_2$, 
then for the associated Coulomb $1$-form $\omega$ we necessarily have $\omega{(v,v_1)}+\omega{(v,v_2)}=0$.
This implies that under $\phi$ the duals of these edges get mapped to the same edge. We call this a ``near-embedding"
since faces of degree $2$ in $\G^*$ get collapsed to line segments. 
Note however that for any such graph $\G$,
contracting degree-$2$ vertices results in a new graph with the same mapping $\phi$, minus those paired edges.

Consequently, among the elementary transformations of Figure \ref{elemtransfs}, only the spider move
has a nontrivial effect on the embedding.

Now let $\H$ be a graph obtained from $\G$ by applying a 
spider move. The embedding of $\H^*$ is obtained from the embedding of $\G^*$ by a ``central move", see equation (\ref{eq:quadeq}) and Figure~\ref{fig:Miquelcenters} in Section \ref{sec:Miquel}.
This move gives a convex embedding by convexity of the faces: the new central vertex is necessarily in the convex
hull of its neighbors: see Lemma \ref{lem:solutionssquare} below.

For the case $\G$ is a $4$-cycle, see Corollary~\ref{cor} and Figure~\ref{square}.
\begin{figure}[htbp]
\center{\includegraphics[width=3.5in]{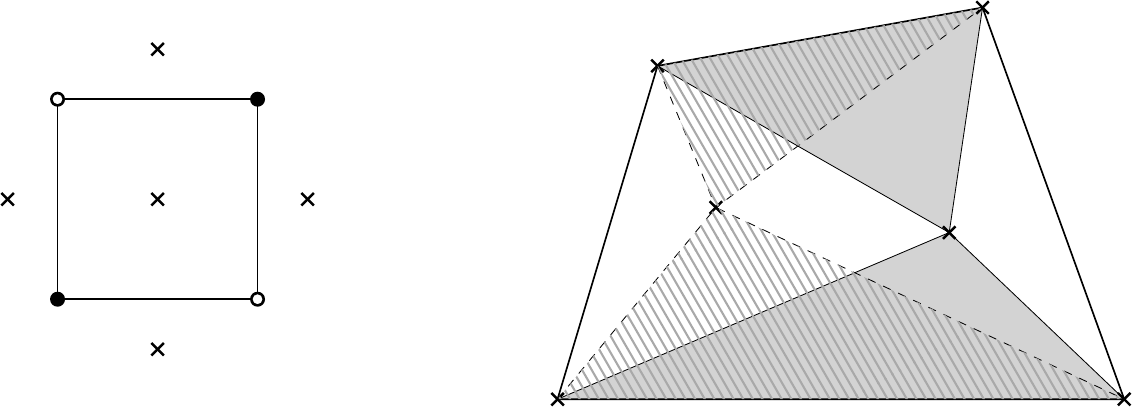}}
\caption{\label{square} For the $4$-cycle
there are generically two Coulomb gauges and thus two embeddings of $\G^*$, see Corollary \ref{cor}. These are shown for a particular choice of boundary.}
\end{figure}

Finally, the fact that $\phi$ maps the vertices of $\G^*$ to centers of a circle pattern follows from the proof of 
Proposition \ref{angles}
and the fact that the sum of the angles of the corners of the white/black faces around a given vertex of $\phi(\G^*)$ equals $\pi$. 
The fact that the outer face is cyclic also follows by induction: this is true for the $4$-cycle, and the central moves
do not move the outer dual vertices, or change their radii.
\end{proof}

We now show that $\G$ can be reduced to the $4$-cycle graph using elementary transformations.

\begin{lemma}
\label{lem:postnikov}
Let $\G$ be a finite connected nondegenerate planar bipartite graph with $4$ marked boundary vertices (two black and two white, with colors alternating while going around the outer face). Then $\G$ can be reduced to the $4$-cycle graph by applying a sequence of elementary transformations described in Figure \ref{elemtransfs}, without modifying the $4$ marked vertices at intermediate stages.
\end{lemma}

\begin{proof}
We rely on the theory of planar bicolored graphs in the disk (also known as plabic graphs) developed by Postnikov \cite{postnikov2006}, of which finite planar bipartite graphs with marked boundary vertices are a special case: just draw the graph inside a disk and attach each boundary vertex to the boundary of the disk using an additional edge. We declare two graphs to be \emph{equivalent} if one can get from one to the other using a sequence of spider moves and degree $2$ vertex contractions or their inverses. A graph is said to be \emph{reduced} if there is no graph in its equivalence class to which one can apply the merging of parallel edges. By applying merging of parallel edges as much as necessary, we first transform $\G$ via a sequence elementary transformations into a reduced graph $\G'$. Note that $\G'$ is connected and nondegenerate, since these properties are preserved by elementary transformations.

We define a \emph{zigzag path} of $\G'$ to be any path starting at the boundary of the disk and turning maximally left (resp. maximally right) at each white (resp. black) vertex. By \cite[Theorem 13.2]{postnikov2006}, since $\G'$ is reduced, every zigzag path ends on the boundary of the disk. Label the boundary points of the disk cyclically and define the \emph{boundary permutation} $\pi$ by setting $\pi(i)=j$ if the zigzag path starting at the $i$th boundary point ends at the $j$th boundary point. By \cite[Theorem 13.2]{postnikov2006}, since $\G'$ is connected, $\pi$ has no fixed point. It follows from \cite[Theorem 13.4]{postnikov2006} that $\G'$ is equivalent to another reduced graph $\G''$ if and only if they have the same boundary permutation. The boundary permutation of the plabic graph associated with the four-cycle graph is $(13)(24)$, where we write permutations as products of cycles with disjoint supports. We will show that all the other permutations of four elements cannot arise as the boundary permutation of $\G'$. Figure \ref{fig:plabic} displays three reduced graphs associated respectively with the three permutations $(1234)$, $(1324)$ and $(12)(34)$. These graphs cannot be equivalent to $\G'$ since they are respectively degenerate for the first two and not connected for the third. All the remaining permutations are obtained by symmetry from these three permutations, hence the boundary permutation of $\G'$ has to be $(13)(24)$. 
\end{proof}

\begin{figure}[htbp]
\center{\includegraphics[width=1.5in]{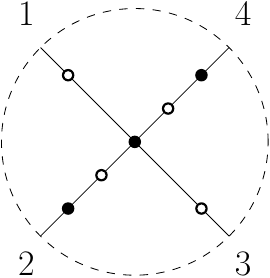}\hskip1cm\includegraphics[width=1.6in]{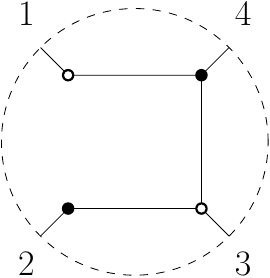}}\hskip1cm\includegraphics[width=1.5in]{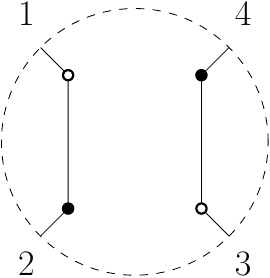}
\caption{\label{fig:plabic}Three reduced graphs with respective boundary permutations $(1234)$, $(1324)$ and $(12)(34)$, from left to right. The graph on the left has no dimer cover since it has more white vertices than black vertices, the graph in the middle has an edge which is used by no dimer cover and the graph on the right is not connected.}
\end{figure}

We now treat the base case of the induction in the above proof, namely the case of the $4$-cycle. It has one inner face with face weight $X$. By an Euclidean motion and scaling, we can assume the four outer vertices of $\G^{*}$ are placed at locations $0,1,\rho, \xi \in \mathbb{C}$. It remains to determine the only inner vertex $u$ of  $\G^{*}$.

\begin{lemma}
\label{lem:solutionssquare}
Let $Q'$ be a convex quadrilateral with vertices $0,1,\rho,\xi$ in counterclockwise order
and let $X\in(0,\infty)$. The equation 
\be\label{ueq}-\frac{(1-u)(\xi-u)}{(0-u)(\rho-u)}=X\ee
has two solutions $u$ (counted with multiplicity), both of which lie strictly inside $Q'$.
\end{lemma}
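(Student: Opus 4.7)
My plan is to reduce the equation to a quadratic and then combine a boundary-exclusion argument with continuity of the roots in $X$. Clearing denominators in \eqref{ueq} gives
$(1+X)u^2 - (1+w+Xz)u + w = 0$,
and since $X>0$ the leading coefficient $1+X$ is nonzero, so this is a genuine quadratic with exactly two roots $u_1,u_2 \in \C$ counted with multiplicity. It remains to show that both roots lie in the open interior of $Q$.

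The first main step is to verify that for every $X>0$ no root lies on $\partial Q$. At the four vertices the left-hand side of \eqref{ueq} is either $0$ or $\infty$ and cannot equal the positive real $X$. On each of the four open edges I substitute the edge parametrization and cancel positive real factors; the equation collapses in each case to the requirement that $u$ lie on the line through the two vertices of the opposite edge of $Q$ (for example, on $[0,1]$ it becomes $(w-t)/(z-t) \in \R_{>0}$, i.e.\ $t,w,z$ collinear; on $[1,z]$ it becomes $u$ on the line through $0$ and $w$; on $[z,w]$ it becomes $u$ real, i.e.\ $u$ on the line through $0$ and $1$; and so on). In a convex polygon the supporting line of any edge leaves the rest of the polygon strictly on one side, so in particular it never meets the opposite open edge, and these collinearities cannot occur.

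With the polynomial $\Psi(u) := (1+X)u^2 - (1+w+Xz)u + w$ nonvanishing on $\partial Q$ for all $X>0$, the argument principle shows that the number $N(X)$ of roots of $\Psi$ inside $Q$ is an integer-valued continuous function of $X$, hence constant on $(0,\infty)$. It then suffices to compute $N$ for one convenient value, for which I would use the limit $X \to 0^+$. In that limit the two roots tend to $u=1$ and $u=w$ (the zeros of $\Psi$ at $X=0$), and a first-order expansion of $\Psi$ about each gives $u-1 \approx X(z-1)/(1-w)$ and, symmetrically, $u-w \approx Xw(z-w)/(w-1)$. A short convexity check verifies that each perturbation direction lies in the interior angular sector of $Q$ at the corresponding vertex: for vertex $1$, $\arg[(z-1)/(1-w)] = \pi + \arg(z-1) - \arg(w-1)$, and the diagonal from $1$ to $w$ being inside the convex quadrilateral forces $\arg(w-1)\in(\arg(z-1),\pi)$, which places $\arg[(z-1)/(1-w)]$ inside the interior sector $(\arg(z-1),\pi)$ at vertex $1$; the check at $w$ is analogous. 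Hence both roots lie in $Q^{\circ}$ for small $X>0$, and therefore $N(X)=2$ for all $X>0$.

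The main obstacle is the second paragraph: the four open edges require separate but parallel reductions, each of which must be correctly traced back to the same convexity obstruction (supporting line of one edge does not re-enter the polygon). Once this uniform statement is in place, the argument-principle step and the perturbation calculation at $X\to 0^+$ are entirely routine.
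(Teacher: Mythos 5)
Your proof is correct and follows essentially the same route as the paper: exclude solutions on $\partial Q$ for all $X>0$, then use continuity of the root count in $X$ together with the first-order expansion of the two roots near $u=1$ and $u=w$ as $X\to 0^+$ (your expansions and the sector check at the vertices $1$ and $w$ match the paper's exactly). The only real difference is cosmetic: the paper dispatches the boundary exclusion with a one-line angle-sum observation (one of $\angle 0u1+\angle zuw$ or $\angle 1uz+\angle wu0$ exceeds $\pi$ on $\partial Q$), whereas you carry out an edge-by-edge reduction to a collinearity obstruction and phrase the continuity step via the argument principle; both are valid.
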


\begin{proof}[Proof of Lemma~\ref{lem:solutionssquare}] When $X=0$ the solutions to (\ref{ueq}) are $u=1,u=\xi$ and when $X=\infty$
the two solutions are $u=0,u=\rho$. Notice that no other point on the boundary of $Q$ can be a solution for any $X$
because 
one of the angle sums $0u1+\rho u\xi$ or $1u \rho+\xi u0$ would be larger than $\pi$. So by continuity it suffices
to show that for small $X>0$ there is one solution inside $Q'$ near $1$ and one solution inside $Q'$ near $w$.
Solving (\ref{ueq}) for $u$ and expanding near $X=0$ gives the solutions
\begin{align*}u&=1-\frac{\rho-1}{\xi-1}X + O(X^2)\\
u&=\xi-\xi\frac{\rho-\xi}{1-\xi}X + O(X^2).
\end{align*}
Note that for the first solution, $\arg\frac{\xi-1}{\rho-1}$ is less than the angle at $1$ of $Q'$, so the vector
$-\frac{\rho-1}{\xi-1}$ points into $Q'$ from the point $1$; thus this solution
is inside $Q'$ for small $X>0$. For the second, $\arg\frac{\rho-\xi}{1-\xi}$ is less than the angle of $Q'$ at 
$\xi$,
so the vector $-\xi\frac{\rho-\xi}{1-\xi}$ points into the interior of $Q'$ from $\xi$. 
\end{proof}

Each solution $u$ determines a Coulomb gauge for the 4-cycle graph sending the outer vertices to $Q'$, and hence to any convex quadrilateral $Q$:

\begin{corollary}\label{cor}
	Suppose that $\G=\{w_1, b_1, w_2, b_2\}$ is a single 4-cycle with the inner face~$f$. Let  $u$ solve (\ref{ueq}) and suppose the convex quadrilateral $Q$ differs from $Q'$ by a complex dilation $\lambda\in \mathbb{C}$. Then the pair of functions~$G: \{w_1,w_2\}\to\mathbb{C}$ and~$F:\{b_1,b_2\}\to\mathbb{C}$ defined by
	\begin{align*}
	&G(w_1):=\lambda,\\
	&F(b_1):=\lambda\frac{0-u}{K_{w_1b_1}}, \quad
	F(b_2):=\lambda\frac{\xi-u}{K_{w_1b_2}},\\
	&G(w_2):=\lambda\frac{\rho-u}{F(b_2)K_{w_2b_2}}=\lambda \frac{1-u}{F(b_1)K_{w_2b_1}},
	\end{align*}
	gives a Coulomb gauge for $\G$; and the mapping~$\phi$ defines a convex embedding of~$\G^*$ into~$Q$, where $\phi(f)=u$.
\end{corollary}

\begin{remark} 
The \emph{isogonal conjugate} of a point $U$ with respect to a quadrilateral  $ABCD$ is constructed 
by reflecting the lines $UA$, $UB$, $UC$  and $UD$ about the angle bisectors of $A$, $B$, $C$, and $D$ respectively. 
If these four reflected lines intersect at one point, then this point is called the isogonal conjugate of $U$. 
Not all points have an isogonal conjugate with respect to a quadrilateral, but only those lying on a certain
cubic curve associated with the quadrilateral~\cite{Akopyan2007}.
One can show that the two solutions to (\ref{ueq}) are isogonally conjugate with respect to $Q'$. 
Moreover all possible pairs of isogonal conjugate points inside $Q'$ can be achieved upon varying $X>0$.
\end{remark}

\subsection{Existence of Coulomb gauge}
\label{Coulombexistence}

In this subsection we consider the case when the outer face of the planar bipartite graph has an arbitrary degree. 
We prove the existence of at least one Coulomb gauge in this setting.

\begin{lemma} 
Let $\G$ have outer boundary vertices $w_1,b_1, \ldots, w_k,b_k$. Let $Q\subset\C$ be a convex polygon, with non-zero edges $W_1,B_1, \ldots, W_k,B_k \in\C$ summing to zero. 
Suppose that $(\G\smallsetminus\{w_1,b_1, \ldots, w_k,b_k\})$ has a dimer covering and $(\G\smallsetminus\{w_1,b_1, \ldots, w_k,b_k\})\cup\{b_i,w_j\}$ for any $b_i, w_j$ on the boundary has a dimer covering as well. Then there exists a solution to (\ref{cocycle1})-(\ref{bdycrit2}) for boundary values $(\vec B,-\vec W)$.\end{lemma}

\begin{proof} 
It suffices to assume that $Q$ is \emph{generic}, that is its sides have no algebraic relations
beyond summing to zero. Indeed, a limit of dual embeddings of $\G^*$ into nearby generic 
polygons $Q$ will be
a dual embedding into $Q$.

Let $K$ be a Kasteleyn matrix of $\G$.
Note that $K$ is invertible. 
Let $G(w)$ and $F(b)$ be functions on white and black vertices of $\G$ defined by
\[
G(w)=\sum_{i = 1}^k\alpha_iK^{-1}_{b_iw}  \quad \text{and} \quad 
F(b)=\sum_{j = 1}^k K^{-1}_{bw_j}\beta_j
\]
for some $\alpha_i,\beta_j \in \mathbb{C}$. For all internal white vertices $w$, we have
\[\sum_b G(w) K_{wb}F(b) =0,\]
and for all internal black vertices $b$, we have
\[\sum_w G(w) K_{wb}F(b) =0,\]
and for all $i,j\in\{1,\ldots, k\}$
\begin{align*}
\sum_w G(w) K_{wb_i}F(b_i) &= \sum_{j = 1}^k \alpha_iK^{-1}_{b_iw_j}\beta_j\\
\sum_b G(w_j) K_{w_jb}F(b) &= \sum_{i = 1}^k \alpha_iK^{-1}_{b_iw_j}\beta_j.
\end{align*}

Let $M$ be the $k\times k$ matrix defined by $m_{ij}=K^{-1}_{b_iw_j}$.  By the generalized Cramer's rule
\[
\det M= \pm \frac{\det(K_{\G\smallsetminus\{w_1,b_1, \ldots, w_k,b_k\}})}{\det K},
\]
where $K_{\G\smallsetminus\{w_1,b_1, \ldots, w_k,b_k\}}$ denotes the submatrix of $K$ formed by choosing the rows indexing the white vertices of $\G\smallsetminus\{w_1,b_1, \ldots, w_k,b_k\}$ and columns indexing the black vertices of $\G\smallsetminus\{w_1,b_1, \ldots, w_k,b_k\}$. Note that  $K_{\G\smallsetminus\{w_1,b_1, \ldots, w_k,b_k\}}$ is a Kasteleyn matrix of the graph $\G\smallsetminus\{w_1,b_1, \ldots, w_k,b_k\}$, hence
\[
|\det(K_{\G\smallsetminus\{w_1,b_1, \ldots, w_k,b_k\}})|= Z_{\G\smallsetminus\{w_1,b_1, \ldots, w_k,b_k\}} = \sum_{m\in M_{\G\smallsetminus\{w_1,b_1, \ldots, w_k,b_k\}}}\nu(m).
\]
Therefore the condition that $(\G\smallsetminus\{w_1,b_1, \ldots, w_k,b_k\})$ has a dimer covering implies that  $\det M\neq 0$, i.e. the matrix $M$ is invertible.
Similarly, since $(\G\smallsetminus\{w_1,b_1, \ldots, w_k,b_k\})\cup\{b_i,w_j\}$ has a dimer covering for any $b_i, w_j$ on the boundary, all elements of the inverse matrix $M^{-1}$ are non-zero ($m^{-1}_{ji}\neq0$). We are looking for $k$-tuples $\vec \alpha=(\alpha_1,\dots,\alpha_k)$ and $\vec \beta=(\beta_1,\dots,\beta_k)$ such that 
\[
 \sum_{j = 1}^k \alpha_im_{ij}\beta_j=B_i \quad \text{and} \quad 
\sum_{i = 1}^k \alpha_im_{ij}\beta_j=-W_j.
\]

The existence of a solution to a general problem of this type for generic $\vec B,\vec W$ is given in the appendix.
\end{proof}

\noindent{\bf Question:} If there are $2k$ outer boundary vertices (where $k>2$), how many solutions are there with embedded centers? For generic weights this number is a function only of the graph $\G$, and is
invariant under elementary transformations preserving the boundary.

\section{Biperiodic bipartite graphs and circle patterns}\label{sectionbiperiodic}

In this section we deal with the case of a bipartite graph embedded on a torus, or equivalently a biperiodic bipartite 
planar graph.

\subsection{Embedding of $\G^*$}
Let $\G$ be a bipartite graph, having dimer covers, nondegenerate and embedded on a torus $T$ with complementary regions (faces) which are topological disks.
Let $\nu:E\to\R_{>0}$ be a set of positive edge weights. 
We fix two simple cycles $l_1$ and $l_2$ in the dual graph $\G^*$ which together generate the homology $H_1(T,\Z)$, and have intersection number 
$l_1\wedge l_2=+1$. Let $K$ be a real Kasteleyn matrix.

We define new edge weights on $\G$
by multiplying, for $i=1,2$, each original edge weight by $\lambda_i$ (resp. $\lambda_i^{-1}$) if the edge 
crosses $l_i$ with a white
vertex to its left (respectively right).
Define $K(\lambda_1,\lambda_2)$ to be a Kasteleyn matrix of $\G$ with the new edge weights 
and define a Laurent
polynomial $P$ by $P(\lambda_1,\lambda_2):=\det K(\lambda_1,\lambda_2)$. Note that the graph $\G$ with these new edge weights still has the \emph{same face weights} as the original graph.
We shall see exactly for which $(\lambda_1,\lambda_2)\in(\C^*)^2$ there is a corresponding circle pattern (Theorem \ref{thm:correspondence} below).

The \emph{spectral curve 
of the dimer model on $\G$} is defined to be the zero-locus of~$P$ in $(\C^*)^2$. 
The \emph{amoeba} of $P$ is the image in $\R^2$ of the spectral curve 
under the mapping $(\lambda_1,\lambda_2)\mapsto(\log|\lambda_1|,\log|\lambda_2|).$ 
The spectral curve is a \emph{simple Harnack curve}~\cite{KOS}; this has the following consequences (see \cite{Mikh}). 
Every point $(\lambda_1,\lambda_2)$ of the spectral curve is a simple zero of $P(\lambda_1, \lambda_2)$ 
or it is a double zero which is real (a real node). 
The partial derivatives $P_{\lambda_1}$ and  $P_{\lambda_2}$ vanish only at real points, and vanish
simultaneously only at real nodes;
the quantity $\zeta:=\frac{\lambda_2P_{\lambda_2}}{\lambda_1P_{\lambda_1}}$ is the \emph{logarithmic slope} and is real
exactly on the boundary of the amoeba (where it equals the slope of the amoeba boundary). 
At a real node the logarithmic slope $\zeta$ has exactly two 
limits, which are nonreal and conjugate.

When  $(\lambda_1,\lambda_2)$ is a simple zero, $(\bar\lambda_1,\bar\lambda_2)$ is also a zero of $P(\lambda_1,\lambda_2)$, and in this case it is shown in \cite{KO} that
$K(\lambda_1,\lambda_2)$ has a kernel which is one-dimensional. 
Hence there exists a pair of functions $(F,G)$ unique up to scaling, with $F$ defined on black vertices
and $G$ defined on white vertices,  with $F\in \ker K(\lambda_1, \lambda_2)$ and $G\in \ker K^t(\lambda_1, \lambda_2)$.
When $\lambda_1,\lambda_2$ are not both real we call it an \emph{interior} simple zero,
it corresponds to a point in the interior of the amoeba, but not at a node. 

At a real node, $\lambda_1$ and $\lambda_2$ are both real and the kernel of $K(\lambda_1, \lambda_2)$ is two-dimensional. The kernel is spanned by the limits of the kernels for nearby simple zeros and their conjugates.
Let $F,G$ be functions in the kernel of $K(\lambda_1, \lambda_2)$ (resp. of $K^t(\lambda_1, \lambda_2)$)
which are limits of those for simple zeros for which $\Im\zeta>0$.

Let $\tilde{\G}$ be the lift of $\G$ to the plane (the universal cover of the torus). 
Let $p_1, p_2$ be the horizontal and vertical periods of $\tilde{\G}$ corresponding to $l_1,l_2$ respectively. 
We lift $F$ and $G$ to $\tilde{\G}$ by, for $i=1,2$
\be\label{FGplane}
F(b+p_i)=\lambda_iF(b), \quad G(w+p_i)=\lambda^{-1}_iG(w).
\ee
The extended version of $F$ (resp. $G$) is in the kernel of $K_{wb}$ (resp. $K_{wb}^t$), where 
 \[K_{wb}=\begin{cases}
K_{wb}(\lambda_1,\lambda_2) & \text{if the edge } wb \text{ belongs to a fundamental domain} \\    
  \lambda_i^{-1}\cdot K_{wb}(\lambda_1,\lambda_2) & \text{if the edge } wb \text{ crosses } l_i  \text{ with a white vertex to its left}\\
   \lambda_i\cdot  K_{wb}(\lambda_1,\lambda_2) & \text{if the edge } wb \text{ crosses } l_i  \text{ with a white vertex to its right}\\
     0  & \text{if } w \text{ and } b \text{ are not adjacent}\\
   K_{w'b'} & \text{if } w'=w+p_i \text{ and } b'=p+p_i 
\end{cases}\]
is a real-valued biperiodic Kasteleyn matrix of $\tilde{\G}$.

We define two co-closed 1-forms
\be\label{omegaomega}
\omega(w,b)=G(w)K_{wb}F(b)\quad \text{ and } \quad \hat\omega(w,b)=\overline{G(w)}K_{wb}F(b)
\ee
and use them to define two mappings $\phi,\hat\phi:\tilde\G^*\to \C$ using (\ref{dphi}).

\begin{remark}
\label{rem:phiperiodic}
The image of the mapping $\phi$ is periodic, in the sense that (for $i=1,2$)
$\phi(v+p_i) = \phi(v)+ V_i$
for constant vectors $V_1,V_2$ called the periods. Indeed, 
\be\label{translate}F(b+p_{1})G(w+p_{1})=F(b)G(w)\ee
and similarly for $p_2$. In the case of a real node, $\lambda_1$ and $\lambda_2$ are real hence \eqref{translate} also holds with $G$ replaced with $\overline G$ so $\hat \phi$ also has a periodic image. 
\end{remark}

As a consequence of Remark~\ref{rem:phiperiodic}, one can project the centers $\{\phi(\tilde\G^*)\}$
down to a flat torus and one can give an explicit formula for the aspect ratio of that torus:

\begin{proposition}\label{logslope} The periods $V_1$ and $V_2$ of $\phi$ are nonzero as long as $P_{\lambda_1}$ and  $P_{\lambda_2}$ are nonzero, and can also be chosen nonzero at a real node by an appropriate scaling. 
The ratio of the periods is $V_2/V_1 = \zeta = \frac{\lambda_2P_{\lambda_2}}{\lambda_1P_{\lambda_1}}.$
\end{proposition}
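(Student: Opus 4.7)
The plan is to compute $V_1$ and $V_2$ explicitly as sums of $\omega$ over primal edges, and to identify those sums with logarithmic derivatives of $P$ via Jacobi's formula for the determinant.

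First I would reduce $V_j$ to a flux sum across $l_j$. The defining relation $\phi(f_1)-\phi(f_2)=\omega(wb)$ expresses $V_j=\phi(v+p_j)-\phi(v)$ as a signed sum of $\omega(wb)=G(w)K_{wb}F(b)$ along any dual path in $\tilde\G^*$ from $v$ to $v+p_j$. Because $\phi$ is single-valued on $\tilde\G^*$, the loop obtained on the torus by projecting such a path may be deformed within its homology class without changing the sum. With the convention that $l_i$ is parallel to $p_i$ and $l_1\wedge l_2=+1$, this class is $[l_j]$, yielding
\[
V_j \;=\; \sum_{wb\text{ crosses }l_j} \epsilon_j(wb)\,G(w)\,K_{wb}(\lambda_1,\lambda_2)\,F(b),
\]
where $\epsilon_j(wb)\in\{\pm1\}$ records the side of $l_j$ on which $w$ lies.

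Next I would apply Jacobi's formula $P_{\lambda_i}=\operatorname{tr}\bigl(\operatorname{adj}(K)\,\partial_{\lambda_i}K\bigr)$. Since $K_{wb}(\lambda_1,\lambda_2)$ depends on $\lambda_i$ only through the twist $\lambda_i^{\pm 1}$ on edges crossing $l_i$, one has $\lambda_i\partial_{\lambda_i}K_{wb}=\epsilon_i(wb)\,K_{wb}$. At an interior simple zero, the identities $K\operatorname{adj}(K)=\operatorname{adj}(K)K=0$ together with one-dimensionality of the left and right kernels of $K$ force $\operatorname{adj}(K)_{bw}=c\,F(b)G(w)$ for some nonzero constant $c=c(\lambda_1,\lambda_2)$. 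Substituting and tracing,
\[
\lambda_i P_{\lambda_i}\;=\;c\sum_{wb\text{ crosses }l_i}\epsilon_i(wb)\,G(w)\,K_{wb}\,F(b)\;=\;c\,V_i,
\]
with the same $c$ for $i=1,2$. Hence $V_i=c^{-1}\lambda_i P_{\lambda_i}$, so $V_2/V_1=\lambda_2 P_{\lambda_2}/(\lambda_1 P_{\lambda_1})=\zeta$, and $V_i\neq 0$ precisely when $P_{\lambda_i}\neq 0$.

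At a real node $\operatorname{adj}(K)=0$ and both $P_{\lambda_i}$ vanish, so the identity $V_i=c^{-1}\lambda_i P_{\lambda_i}$ degenerates to $0/0$. I would handle this by approaching the node along a one-parameter family of interior simple zeros with $\Im\zeta>0$; the pair $(F,G)$ has the limit fixed in the proposition, and $c$ vanishes at a determined rate. Rescaling $F,G$ by an appropriate power of the distance to the node makes $c^{-1}\lambda_i P_{\lambda_i}$ converge to finite, generically nonzero limits, giving nonzero periods; the ratio $V_2/V_1$ is scale-invariant and equals the nonreal limit of $\zeta$. The main obstacle throughout is the bookkeeping of three orientation conventions, namely $l_1\wedge l_2=+1$, the left/right choice in $\phi(f_1)-\phi(f_2)=\omega(wb)$, and the signs $\epsilon_i$ arising from $\lambda_i\partial_{\lambda_i}K$, which must be aligned so that the two occurrences of $\epsilon_i$ in the argument above refer to the same $\pm 1$-valued function.
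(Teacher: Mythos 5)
Your proposal is correct and follows essentially the same route as the paper: reduce each period to a signed flux of $\omega$ across $l_i$, use Jacobi's formula together with the rank-one factorization $\operatorname{adj}(K)_{bw}=c\,F(b)G(w)$ at a simple zero to get $\lambda_i P_{\lambda_i}=cV_i$, and take the ratio; the real node is likewise handled by a limit of nearby simple zeros with $\Im\zeta>0$ and a rescaling. The orientation bookkeeping you flag is real but routine, and the paper elides it in the same way.
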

\begin{proof}
For a matrix $M$ we have the identity 
$\frac{\partial(\det M)}{\partial M_{i,j}} =M^*_{i,j}$, where $M_{i,j}$ is the $(i,j)$-entry of $M$ and $M^*_{i,j}$
is the corresponding cofactor. 
Recalling that $P=\det K$, we have
$$\lambda_1 \frac{\partial P}{\partial \lambda_1} = \sum_{\gamma_1} \pm K_{wb}K^*_{bw}$$
where the sum is over edges $wb$ crossing $\gamma_1$ (i.e. those edges of $\G$ with a weight 
involving $\lambda_1^{\pm 1}$),
the sign is given by the corresponding exponent of $\lambda_1$ for that edge, and $K^*$ is the cofactor matrix. 
When $(\lambda_1,\lambda_2)$ is a simple zero of $P$, we have $K^* K=K K^* = (\det{K})\, \mbox{Id}  = 0$ and hence the columns of $K^*$ are multiples of $F$ and the rows are multiples of $G$. In particular, we can write $K^*_{bw} = \operatorname{cst}\cdot F(b)G(w)$ for some scale factor~$\operatorname{cst}$.
We find
$$\lambda_1 \frac{\partial P}{\partial \lambda_1} = \operatorname{cst}\sum_{\gamma_1} \pm K_{wb}F(b)G(w) = \operatorname{cst}\cdot V_1.$$
Similarly
$$\lambda_2 \frac{\partial P}{\partial \lambda_2} = \operatorname{cst}\cdot V_2$$
and we conclude by taking the ratio of these.

If $(\lambda_1,\lambda_2)$ is a node, we can take a limit of nearby simple zeros with, say, $\Im(\zeta)>0$,
and scaling so that $V_1$ is of constant length;
since $\zeta$ has a well-defined nonreal limit, $V_2$ will also have a limit of finite length.  
\end{proof}

\begin{theorem}
\label{thm:periodicdimers} 
If $(\lambda_1,\lambda_2)$ is in the interior of the amoeba of $P$,
the realization $\phi$ is a periodic convex embedding of $\tilde \G^*$,
dual to a circle pattern.
\end{theorem}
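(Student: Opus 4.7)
My plan is to break the proof into three parts: well-definedness with quasi-periodicity, the alternating-angle condition at each dual vertex, and convexity of each dual face. Well-definedness is immediate from the kernel property: since $F\in\ker K(\lambda_1,\lambda_2)$ and $G\in\ker K(\lambda_1,\lambda_2)^t$, at every white vertex $w$ of $\tilde\G$ one has $\sum_b\omega(wb)=G(w)\sum_b K_{wb}F(b)=0$, and similarly at every black vertex, so $\omega$ is co-closed and $\phi$ defined by \eqref{dphi} is path-independent. Quasi-periodicity with periods $V_1,V_2$ is Remark~\ref{rem:phiperiodic}, and Proposition~\ref{logslope} gives $V_2/V_1=\zeta\notin\R$ in the interior of the amoeba, so $V_1,V_2$ are $\R$-linearly independent and $\phi$ descends to the quotient torus $\C/(\Z V_1+\Z V_2)$.

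For the angle condition, fix a face $f$ of $\tilde\G$ with cyclic vertex sequence $w_1,b_1,\dots,w_k,b_k$. The alternating product $\prod_{i=1}^k\omega(w_ib_i)/\omega(w_{i+1}b_i)$ telescopes the $F(b_i)$ and $G(w_i)$ factors and reduces to the alternating product of entries of $K$, which by \eqref{XdefK} equals $(-1)^{k+1}X_f$ and is positive. This positivity of the alternating product of edges of $\tilde\G^*$ emanating from the dual vertex $\phi(f)$ is exactly the condition required by Proposition~\ref{angles}, so once convexity is established, Proposition~\ref{angles} identifies the vertices of $f$ with points on a common circle centered at $\phi(f)$.

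The main obstacle is convexity of each face of $\tilde\G^*$. Such a face corresponds to a vertex $v$ of $\tilde\G$, say white with cyclic neighbors $b_1,\dots,b_d$; its boundary consists of the vectors $\pm G(v)K_{vb_i}F(b_i)$, and since $G(v)$ is a common scalar, convexity amounts to showing that the $d$ complex numbers $K_{vb_i}F(b_i)$ appear in counterclockwise cyclic order. I would attempt this by induction along the elementary transformations of Figure~\ref{elemtransfs}, in parallel with the proof of Theorem~\ref{thm:planardimerstocircles}: by the classification of minimal bipartite graphs on the torus, any two minimal graphs with the same zig-zag (equivalently, spectral) data are joined by spider moves, and non-minimal graphs reduce to minimal ones via contractions and parallel-edge mergers; each such move preserves the convex-embedding property of $\phi$ via the local central/Miquel move analysis of Section~\ref{sec:Miquel}.

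The delicate point is the base case: establishing convexity for one minimal graph in each spectral-data class. For a minimal fundamental-domain graph the kernel functions $F,G$ admit nearly explicit expressions, and the simple-Harnack property of the spectral curve together with $\Im\zeta>0$ (or its well-defined limit at a real node) should force the arguments of $K_{vb_i}F(b_i)$ to rotate monotonically as $i$ traverses the neighbors of $v$; this monotonicity is where I expect the real work to lie. Once convexity of every dual face is established, quasi-periodicity promotes the local convex embedding to a global embedding of $\tilde\G^*$ by convex polygons tiling $\C$, and Proposition~\ref{angles} yields the claimed circle pattern.
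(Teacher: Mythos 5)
Your outline is sensible up through the angle condition, but the proof has a genuine gap exactly where the theorem is hard: you never establish convexity of the dual faces. Your plan is to induct over elementary transformations and you explicitly defer the base case (``this monotonicity is where I expect the real work to lie''). That base case is not a routine verification: on the torus there is no single trivial graph to reduce to (unlike the planar $4$-cycle of Theorem~\ref{thm:planardimerstocircles}); at best one reduces to a minimal graph for each Newton polygon, and for each such graph one must still prove that the arguments of $K_{vb_i}F(b_i)$ wind monotonically around every vertex, for every point of the spectral curve over the origin of the amoeba. That statement is essentially equivalent to the full local convexity claim, so the induction has relocated the difficulty rather than resolved it. The paper's proof goes a different route: it shows (Lemma~\ref{tgraph}) that the auxiliary realization $\phi_1$ built from $\Re(G(w))K_{wb}F(b)$ is a T-graph embedding, with convexity of white faces extracted from a maximum principle that follows directly from the Kasteleyn sign condition, together with local finiteness coming from the decomposition $2\phi_1=\phi+\hat\phi$ and boundedness of $\hat\phi$; convexity for $\phi$ then follows because its faces are scaled copies of those of $\phi_1$ (and of the analogous $\phi_2$ for black faces).

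A second omission: even granting convexity of each face separately, you must check that the white and black dual faces of $\phi$ are \emph{consistently oriented}, since a priori the correct orientations could be realized by $\hat\phi$ instead of $\phi$ (the realization from $\overline{G(w)}K_{wb}\Re(F(b))$ is a T-graph embedding with reversed orientation). The paper resolves this with Lemma~\ref{phihatbdd}: $\hat\phi$ is bounded, hence cannot be the globally embedded one. Your proposal does not address this dichotomy at all. The final step — quasi-periodicity with nonreal period ratio promoting a local homeomorphism to a global embedding via properness — is correct and matches the paper.
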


Note that if $(\lambda_1,\lambda_2)$ is on the \emph{boundary} of the amoeba, then $\zeta$ is real and so by Proposition 
\ref{logslope}, $\phi$ cannot define an embedding.

\begin{proof}
If $(\lambda_1,\lambda_2)$ is an interior simple zero, then we show in Lemma \ref{tgraph} below that the realization $\phi_1$ 
defined from $\Re(G(w))K_{wb}F(b)$ is a ``T-graph embedding" (see definition in Section~\ref{subsec:periodicTgraphs}), mapping each white face to a convex polygon. (This result is stated in \cite{Kenyon2004} without proof). 
In particular for $\phi_1$
the sum of the angles of white polygons
at vertices of $\tilde\G^*$
is $\pi$. This implies that for the realization defined by $\phi$, the sum of angles of the white polygons
at vertices of $\tilde\G^*$ is also $\pi$, since these polygons are simply scaled copies of those for $\phi_1$. 
Likewise the realization $\phi_2$ 
defined from $G(w)K_{wb}\Re(F(b))$ is a T-graph embedding, mapping each black face to a convex polygon. 
It suffices to show that the orientations of $\phi_1$ and $\phi_2$ agree. Note that the realization defined
by $\overline{G(w)}K_{wb}\Re(F(b))$ is also a T-graph embedding with the \emph{reverse} orientation to that of $\phi_2$.
Thus the orientations of the white and black faces agree in exactly one of $\phi$ or $\hat\phi$.
We claim that they agree in $\phi$, not $\hat\phi$. This is a consequence of Lemma~\ref{bound_simple} 
below.
Thus $\phi$ is a local homeomorphism.
 
By Remark~\ref{rem:phiperiodic} and Proposition \ref{logslope}, the image of $\phi$ is periodic with nonzero periods (at interior simple zeros 
the $P_{\lambda_i}$ are nonzero) and so 
$\phi$ is proper, and thus a global embedding (a proper
local homeomorphism is a covering map). 

If $(\lambda_1,\lambda_2)$ is a real node then one can argue similarly as above;
the question of orientation is resolved by taking a limit of simple zeros, since the embeddings depend continuously on 
$(\lambda_1,\lambda_2)$. 
\end{proof}

\subsection{The circles}

Let $\phi$ be the embedding of $\G^*$ defined from $\omega$ in (\ref{omegaomega}), and $\hat\phi$ the realization defined 
from $\hat\omega$. From $\phi$ we can define the associated circle pattern as in the finite case. However the circles may  grow without bound in radius as we move away from the initial vertex; we give
a criterion (Lemmas \ref{phihatbdd}, \ref{bound_simple} and \ref{lem:doublepoint}) for 
determining when the radii are bounded. 

\begin{lemma}
\label{phihatbdd}
The boundedness of the map $\hat\phi$ is equivalent to the boundedness of the radii in any circle pattern.
\end{lemma}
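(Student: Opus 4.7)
The strategy is to exploit the quasi-periodicity of both $\hat\phi$ and the primal realization $z$ of the circle pattern on the universal cover $\tilde\G$. From the transformation laws \eqref{FGplane} one immediately obtains $\hat\omega\bigl((w+p_i)(b+p_i)\bigr)=(\lambda_i/\bar\lambda_i)\,\hat\omega(wb)$, so
\[
\hat\phi(f+p_i)=\mu_i\,\hat\phi(f)+c_i,\qquad \mu_i:=\lambda_i/\bar\lambda_i,\ |\mu_i|=1.
\]
Hence $\hat\phi$ has bounded image on $\tilde\G^*$ iff the two affine maps $T_i:z\mapsto\mu_iz+c_i$ share a common fixed point; when $\mu_i=1$ this reduces to $c_i=0$, and when $\mu_i\ne1$ to $c_1/(1-\mu_1)=c_2/(1-\mu_2)$.

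Given the centers $\phi$, the realization $z$ is determined up to a global translation by the reflection construction of Proposition~\ref{angles}. Since $\tilde\G$ is bipartite, any lift of the loop $p_i$ to $\tilde\G$ uses an even number of edges, so the monodromy $M_i$ of the reflections is an orientation-preserving isometry $M_i(z)=\rho_iz+d_i$ with $|\rho_i|=1$. I would then compute $\rho_i$ directly: each reflection line $L_{wb}$ has direction $\arg\omega(wb)=\arg G(w)+\arg K_{wb}+\arg F(b)\pmod\pi$, and in the alternating sum of these angles governing the rotation factor, each interior white or black vertex on the path contributes twice with opposite signs so that the $F$ and $G$ terms telescope, while the (real) lifted Kasteleyn weights contribute $0\pmod\pi$. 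The only surviving contribution comes from the endpoint phase $G(v+p_i)/G(v)=\lambda_i^{-1}$, yielding $\rho_i=e^{-2i\arg\lambda_i}=\bar\mu_i$; in particular $\rho_i=1\iff\mu_i=1$. Since $r_f=|z_v-\phi(f)|$ and $\phi(f+np_i)=\phi(f)+nV_i$, while $z_{v+np_i}=M_i^n(z_v)$ stays on a bounded rotation orbit when $\rho_i\ne1$, the radii are bounded iff $\rho_i=1$ (hence $\mu_i=1$) and moreover $d_i=V_i$.

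It remains to show, in the case $\mu_i=1$, the equivalence $d_i=V_i\iff c_i=0$. The plan is to compute the translational part of $M_i$ directly: with trivial rotation each individual reflection contributes a translation of the form $p_0-e^{2i\theta}\overline{p_0}$, and the full composition telescopes into an explicit combination of $V_i$ and $c_i$ (up to complex conjugation) via the identity $\omega-\hat\omega=2i\,\Im(G)\,KF$. This precise identification of the translational parts of the primal monodromy with the contour integrals defining $V_i$ and $c_i$ is the main technical obstacle; once established it gives $d_i-V_i=\bar c_i$ (or an equivalent expression linear in $V_i,c_i$), completing the equivalence. For $\mu_i\ne1$, $\rho_i\ne1$ forces unbounded radii, and a parallel analysis of translational parts should show that the common-fixed-point condition for the $T_i$'s likewise fails, so $\hat\phi$ is also unbounded; both sides of the equivalence are thus simultaneously false in this regime.
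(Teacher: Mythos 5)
Your route is genuinely different from the paper's, and much heavier. The paper's proof is a short geometric observation: passing from $\omega$ to $\hat\omega$ amounts to folding the plane along every edge of $\phi(\G^*)$, and since two adjacent circle-pattern vertices $z_w,z_b$ are symmetric across $\phi((wb)^*)$, all vertices of the pattern fold onto a single point $\hat z$; the radii are then exactly the distances $|\hat z-\hat\phi(f)|$ from that point to the vertices of $\hat\phi(\G^*)$, so the two boundedness statements are literally the same statement. No periodicity, no case analysis and no monodromy computation are needed. Your plan instead compares the quasi-period data of $\hat\phi$ with the monodromy of the reflection construction of Proposition~\ref{angles}; this could in principle be made to work, but as written it contains both an error and an acknowledged gap.

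The error: the identity $z_{v+np_i}=M_i^n(z_v)$ is false. Since $\phi(f+p_i)=\phi(f)+V_i$, the reflection lines used along the translated path are the $V_i$-translates of the original ones, so the step from $z_{v+np_i}$ to $z_{v+(n+1)p_i}$ is $\tau_{V_i}^{\,n}\,M_i\,\tau_{V_i}^{-n}$, not $M_i$. Setting $\zeta_n=z_{v+np_i}-nV_i$ one gets $\zeta_{n+1}=\rho_i\zeta_n+(d_i-V_i)$ and $r_{f+np_i}=|\zeta_n-\phi(f)|$. Consequently, when $\rho_i\neq1$ the recentered orbit is a rotation orbit and the radii are \emph{bounded} --- the opposite of what you conclude. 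Your stated criterion (radii bounded iff $\rho_i=1$ and $d_i=V_i$), combined with Lemma~\ref{bound_simple}, would contradict the very statement you are proving: at an interior simple zero with both $\lambda_i$ nonreal one has $\mu_i\neq1$ and $\hat\phi$ bounded, hence the radii must be bounded. For the same reason your final sentence about the $\mu_i\neq1$ regime is wrong twice over: two commuting rotations always share a fixed point, so $\hat\phi$ is automatically bounded there, and so are the radii. The gap: in the remaining case $\mu_i=1$ the whole equivalence rests on the identity $d_i-V_i=\bar c_i$ (or some linear substitute), which you explicitly defer as ``the main technical obstacle''; without it nothing is established even in that case. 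I would abandon the monodromy bookkeeping in favour of the folding argument, which yields the pointwise identity $r_f=|\hat z-\hat\phi(f)|$ and hence the lemma at once.
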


\begin{figure}[htbp]
\center{\includegraphics[width=4.7in]{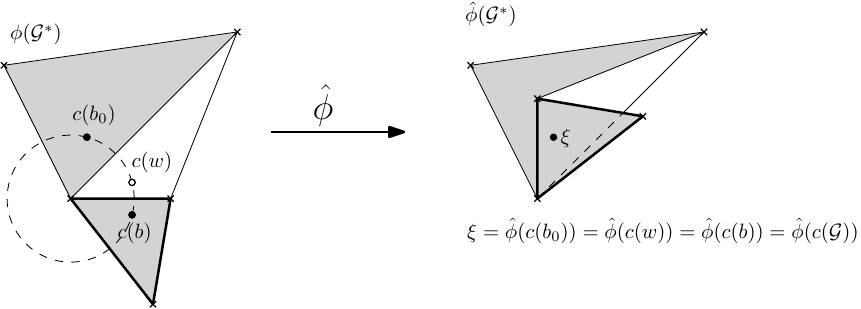}}
\caption{\label{folding_map} Under folding, all vertices of a circle pattern coincide.}
\end{figure}

\begin{proof}
First, note that $\hat\phi(\G^*)$ is defined up to an additive constant. For any face $v$ of $\G^*$ one can describe a face $\hat\phi(v)$ using the following procedure: fix a face $\phi(b_0)$, consider a face-path $g$ from $\phi(v)$ to $\phi(b_0)$ and take a sequence of reflections of $\phi(v)$ across edges of $\phi(\G^*)$ intersecting $g$. Note that $\hat\phi(v)$ is independent of the choice of a face-path, since $\phi(\G^*)$ satisfies the angle condition around each vertex. In other words to get $\hat\phi(\G^*)$ from $\phi(\G^*)$ one can chose a root face $\phi(b_0)$ and fold the plane along every edge of the embedding, see Figure~\ref{folding_map}. Note that all black faces of $\hat\phi(\G^*)$ have their orientation preserved while white ones are reversed. 
Note also that two adjacent vertices of a circle pattern corresponding to $b, w\in\G$ are symmetric with respect to the edge $\phi((wb)^*)$. Therefore they coincide after one folds the plane along $\phi((wb)^*)$. Hence each circle pattern corresponds to a single point under the mapping $\hat\phi$, and the radii in the circle pattern are distances from this point to vertices of  $\hat\phi(\G^*)$. To finish the proof note that the boundedness of these distances is equivalent to the boundedness of the map~$\hat\phi$.
\end{proof}

We now explain when $\hat\phi$ is bounded.

\begin{lemma}\label{bound_simple}
If $(\lambda_1,\lambda_2)$ is an interior simple zero, then $\hat\phi$ is bounded.
\end{lemma}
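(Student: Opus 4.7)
The plan is to show that $\hat\phi$ is quasi-periodic under the two deck translations of $\tilde\G$ with multipliers of modulus one, and that, as a consequence, the two translations act on the image values as rotations about a \emph{common} center, preserving distance from that center and so forcing $\hat\phi$ to be bounded.

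First I would compute the transformation of $\hat\omega$ under the period translations. Using $F(b+p_i)=\lambda_i F(b)$ and $\overline{G(w+p_i)}=\bar\lambda_i^{-1}\overline{G(w)}$ together with the translation invariance on $\tilde\G$ of the un-twisted Kasteleyn entries, one gets
\[
\hat\omega(w+p_i,b+p_i) \;=\; \mu_i\,\hat\omega(w,b), \qquad \mu_i := \lambda_i/\bar\lambda_i,
\]
and clearly $|\mu_i|=1$. Normalizing $\hat\phi(v_0)=0$ at some vertex $v_0\in\tilde\G^*$ and integrating along translated paths yields
\[
\hat\phi(v+p_i) \;=\; \mu_i\,\hat\phi(v) + c_i, \qquad c_i := \hat\phi(v_0+p_i).
\]
Comparing the two composed paths $v_0\to v_0+p_1\to v_0+p_1+p_2$ and $v_0\to v_0+p_2\to v_0+p_1+p_2$, whose integrals must agree since $\hat\phi$ is single-valued, forces the consistency relation $c_1(1-\mu_2)=c_2(1-\mu_1)$.

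The key observation is then the following. At an interior simple zero, not both $\lambda_i$ are real, so at least one $\mu_i\ne 1$. When both $\mu_i\ne 1$, the consistency relation shows that the two affine maps $z\mapsto\mu_iz+c_i$ share a common fixed point $a:=c_i/(1-\mu_i)$, independent of $i$. When exactly one $\mu_i$, say $\mu_1$, equals $1$, the consistency relation forces $c_1=0$, and one sets $a:=c_2/(1-\mu_2)$; the $p_1$-shift is then the identity on $\hat\phi$, while the $p_2$-shift rotates about $a$. In all cases
\[
\hat\phi(v+p_i)-a \;=\; \mu_i\,\bigl(\hat\phi(v)-a\bigr),
\]
so both shifts preserve the distance $|\hat\phi(\cdot)-a|$. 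Iterating gives $|\hat\phi(v+np_1+mp_2)-a|=|\hat\phi(v)-a|$ for all $n,m\in\Z$. Since $\G$ is embedded on a torus, a fundamental domain of the $\Z^2$-action on $\tilde\G^*$ contains only finitely many vertices $v_0^{(1)},\dots,v_0^{(N)}$, and every vertex of $\tilde\G^*$ has the form $v_0^{(k)}+np_1+mp_2$. Therefore $|\hat\phi(v)-a|\le \max_k|\hat\phi(v_0^{(k)})-a|<\infty$, proving that $\hat\phi$ is bounded.

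The point I expect to be the main obstacle is noticing that both translations must rotate about the \emph{same} center $a$; this is not put in by hand but forced by the well-definedness of $\hat\phi$, i.e., by the commutativity of the deck translations. Once this is seen, boundedness follows cleanly, with the ``not both real'' clause in the definition of an interior simple zero being exactly what rules out the degenerate case where both $\mu_i=1$ and $\hat\phi$ would drift linearly.
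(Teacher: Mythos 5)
Your proof is correct and rests on the same mechanism as the paper's: the multiplier $\lambda_i/\bar\lambda_i$ has modulus one, so each period shift acts on the image of $\hat\phi$ as a Euclidean rotation, forcing orbits to stay on circles. The only difference is organizational: the paper treats the case of one real $\lambda_i$ separately via a parallelogram argument, whereas you unify both cases by observing that commutativity of the deck translations forces the two rotations to share a fixed point $a$, so that $|\hat\phi(\cdot)-a|$ is invariant under the whole $\Z^2$-action --- a clean refinement of the same idea.
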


\begin{proof}
Assume first that neither of $\lambda_1,\lambda_2$ is real. Fix a dual vertex $f\in\tilde\G^*$. We have \[
\hat{\phi}(f+2 p_1)-\hat{\phi}(f+p_1)=\lambda_{1}\bar\lambda^{-1}_{1} (\hat{\phi}(f+ p_1)-\hat{\phi}(f)).\]  
Since $|\lambda_1\bar\lambda^{-1}_1|=1$, the segment $\hat{\phi}(f+p_1)\hat{\phi}(f+2p_1)$ differs 
from $\hat{\phi}(f)\hat{\phi}(f+p_1)$ by a rotation around the center of the circle $C_{\phi(f),1}$ 
through $\hat{\phi}(f),\hat{\phi}(f+p_1),\hat{\phi}(f+2p_1)$ with angle $\arg \lambda_{1}\bar\lambda^{-1}_{1} \neq 0$. In particular this implies all the $\hat\phi(f+kp_1)$ for $k\in\Z$ lie on $C_{\phi(f),1}$. 
A similar argument holds for $\hat\phi(f+kp_2)$. 
So all the dual vertices $\hat\phi(f+k_1p_1+k_2p_2)$ with $(k_1,k_2)\in\Z^2$ have distance at most the sum of the diameters of these two circles from $\hat\phi(f)$ and thus lie in a compact set.

Assume now that $\lambda_1$ is real and $\lambda_2$ is non-real; then the image of $\hat\phi$ is periodic in the direction of $p_1$ and almost periodic in the direction of $p_2$. 
We claim that this is possible only if the period in the direction of $p_1$ is zero: 
on the one hand the four points $\hat\phi(f)$, $\hat\phi(f+p_1)$, $\hat\phi(f+p_2)$ and $\hat\phi(f+p_1+p_2)$ form a parallelogram (maybe degenerate) because of the periodicity in the direction of $p_1$, and on the other hand, the vectors $\hat\phi(f)\hat\phi(f+p_1)$ and $\hat\phi(f+p_2)\hat\phi(f+p_1+p_2)$ differ by multiplication by $\lambda_2\bar\lambda_2^{-1}\neq1$, so these vectors must be zero. Therefore $\hat\phi$ is also bounded in this case.
\end{proof}

\newcommand{\uu}{{\tau}}
\newcommand{\vv}{{\eta}}
\newcommand{\aaa}{{V_{\phi}}}
\newcommand{\aA}[1]{V_{\raisebox{-1pt}{\tiny#1}}}
\newcommand{\bB}[1]{V^{\star}_{\raisebox{-1pt}{\tiny#1}}}
\newcommand{\bbb}{{V_{\widehat\phi}}}

\begin{lemma}
\label{lem:doublepoint}
For real nodes on the spectral curve, there is a two-parameter family of embeddings $\phi$, up to similarity, 
but exactly one of them has a bounded $\hat\phi$. Moreover, boundedness of 
$\hat\phi$ (in this case) is equivalent to the biperiodicity of the radii in any circle pattern.
\end{lemma}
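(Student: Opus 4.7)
The lemma has three assertions: at a real node, (i) the embeddings $\phi$ form a one-parameter family modulo similarity, (ii) exactly one member of this family has bounded $\hat\phi$, and (iii) boundedness of $\hat\phi$ is equivalent to biperiodicity of the radii of any circle pattern. I would address them in this order.

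For (i), I would parametrize the two-dimensional kernels of $K(\lambda_1,\lambda_2)$ and $K^t(\lambda_1,\lambda_2)$ at the node, spanned respectively by $F_\pm$ and $G_\pm$, where $F_+,G_+$ are the limits from simple zeros on the branch with $\Im\zeta>0$ and $F_-=\bar F_+$, $G_-=\bar G_+$ are their conjugates (from $\Im\zeta<0$). Writing $F = F_+ + sF_-$ and $G = G_+ + tG_-$ with $s,t\in\C$, the $1$-form $\omega = GKF$ and hence $\phi$ is invariant under the redundancy $(F,G)\mapsto(cF,c^{-1}G)$, and overall rescaling produces similarities of $\phi$. The residual moduli, cut down further by the constraint that $\phi$ be an embedding (concretely, the agreement of T-graph orientations for white and black faces as used in the proof of Theorem~\ref{thm:periodicdimers}), gives a one-parameter family.

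For (ii), since $\lambda_1,\lambda_2\in\R$ and all entries of $K$ are real at the node, one verifies directly that $\hat\omega(wb)=\overline{G(w)}K_{wb}F(b)$ is co-closed (using $\bar K = K$ and $G\in\ker K^t$). The calculation of Remark~\ref{rem:phiperiodic} then goes through for $\hat\omega$, and $\hat\phi$ is periodic with periods $\hat V_i = \sum_{(wb)\text{ crossing }\gamma_i} \overline{G(w)}K_{wb}F(b)$. Boundedness of $\hat\phi$ amounts to the pair of conditions $\hat V_1=\hat V_2=0$. I would compute $\hat V_i$ explicitly in terms of the family parameter, using the cofactor identity $\lambda_i\partial_{\lambda_i}P = \sum_{\gamma_i}\pm K\,K^*$ from the proof of Proposition~\ref{logslope}: the two branches of the spectral curve through the node contribute two distinct limits of this expression, and the unique bounded $\hat\phi$ corresponds to the conjugate-symmetric combination in which these contributions cancel. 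Uniqueness then follows from the transversality of the two Harnack branches at the real node, which guarantees that the two contributions are linearly independent as functions of the family parameter.

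For (iii), by Lemma~\ref{phihatbdd} the radii of the circle pattern equal distances from a distinguished basepoint $z_0$ (obtained by folding the plane along all edges of $\phi(\tilde\G^*)$) to the vertices of $\hat\phi(\tilde\G^*)$. Since $\hat\phi$ is periodic with periods $\hat V_i$, boundedness of $\hat\phi$ is equivalent to $\hat V_1=\hat V_2=0$; in that case $\hat\phi$ descends to the torus and the radii take only finitely many values, biperiodic under $p_1,p_2$. Conversely, biperiodicity of the radii yields $|z_0-\hat\phi(f+p_i)|=|z_0-\hat\phi(f)|$ for every face $f$, and combining with $\hat\phi(f+p_i)=\hat\phi(f)+\hat V_i$ and the fact that the vectors $z_0-\hat\phi(f)$ span multiple directions, one concludes $\hat V_i=0$.

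The main obstacle is step (ii): producing an explicit expression for $\hat V_i$ within the one-parameter family and showing that its vanishing locus is a single point. This requires carefully combining the linear algebra of the two-dimensional kernel at the node with the geometry of the spectral curve branches there, the key input being the transversality coming from the simple Harnack property.
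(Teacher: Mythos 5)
Your overall scaffolding (parametrize the deformation inside the two-dimensional kernel at the node, observe that $\hat\phi$ is still periodic there, and characterize boundedness by the vanishing of its periods) matches the paper's setup, and your parts (i) and (iii) are essentially fine. But the heart of the lemma is exactly the step you flag as ``the main obstacle,'' and what you offer in its place does not work as stated. You propose to compute the periods $\hat V_1,\hat V_2$ via the cofactor identity and to get uniqueness from ``transversality of the two Harnack branches''; this is not an argument, and there is also a structural mismatch: $\hat V_1=\hat V_2=0$ is \emph{two} complex conditions, so if you really restrict to a one-complex-parameter family before imposing them, you should generically expect \emph{no} solution rather than exactly one. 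The paper works instead with the full two-complex-parameter family $(F+u\bar F,\,G+v\bar G)$, $u,v\in\D$, so that the number of equations matches the number of unknowns.

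The actual mechanism in the paper's proof (due to Chelkak) is the following, and nothing in your sketch substitutes for it. First, one shows that every $\phi_{u,v}$ with $u,v\in\D$ is a non-degenerate embedding (black and white faces undergo $z\mapsto z+v\bar z$, resp.\ $z\mapsto z+u\bar z$, followed by homotheties, and these preserve orientation for $u,v\in\D$). Hence every period $a+u\bar b+vb+uv\bar a$ is nonzero on $\D\times\D$, which forces the key inequality $|a|\ge|b|$ for every period pair $a=na_1+ma_2$, $b=nb_1+mb_2$. This inequality is then the input that makes the two vanishing-period equations
\[
\frac{b_1+\bar v a_1}{\bar a_1+\bar v\bar b_1}=-u=\frac{b_2+\bar v a_2}{\bar a_2+\bar v\bar b_2}
\]
tractable: both sides are M\"obius maps sending $\D$ to $\D$, the compatibility condition is a quadratic in $\bar v$, and a root-location argument (one root inside, one outside; both on the circle is excluded by the reformulated inequality) gives existence and uniqueness of $(u,v)\in\D^2$ with both periods zero. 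Your appeal to the simple-Harnack transversality at the node never produces the inequality $|a|\ge|b|$, never reduces the problem to a quadratic, and gives no reason why the solution set should be a single point rather than empty or positive-dimensional. As written, the proposal proves (i) and (iii) but leaves (ii) -- the substance of the lemma -- unestablished.
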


\begin{proof} This proof is due to Dmitry Chelkak. By Remark~\ref{rem:phiperiodic}, the images of $\phi$ and $\hat\phi$ are periodic in the case of a real node.
Denote by $\aA{1},\aA{2}$ the corresponding periods  of $\phi$ and by $\bB{1},\bB{2}$ 
the periods of the map $\hat\phi$. Note that for each $\uu, \vv$ in the unit disk $\D=\{z : |z|<1\}$ the pair of functions $(F+\uu\bar F, G+\vv\bar G)$ 
also defines, via (\ref{dphi}), a non-degenerate embedding $\phi_{\uu,\vv}$:
a black face of $\phi_{\uu,\vv}$ is the image of a black face $b$ of $\phi$ under the linear map $z\mapsto z+\vv\bar z$,
followed by a homothety with factor $\frac{F(b)+\uu\bar F(b)}{F(b)}$.
Similarly the white faces undergo the linear map $z\mapsto z+\uu\bar z$ followed by a homothety; these 
linear maps have positive determinant when $\uu,\vv\in\D$, and so preserve orientation (and convexity).  
Let $\hat\phi_{\uu,\vv}$ be a map defined via (\ref{dphi}) by the 
pair of functions $(F+\uu\bar F, \overline{G+\vv\bar G}).$

If we let $\aaa=n\aA{1}+m\aA{2}$ be a period of $\phi$ and $\bbb=n\bB{1}+m\bB{2}$ be the corresponding period of $\hat\phi$, 
then the corresponding period of $\phi_{\uu,\vv}$ is $\aaa+\uu\overline \bbb+\vv \bbb +\uu \vv\overline \aaa$. In order for $\phi_{\uu,\vv}$ to be embeddings for all $\uu, \vv \in\D$ we need
\be\label{abcondition}
\aaa+\uu\overline\bbb+\vv \bbb +\uu\vv\overline \aaa \neq 0 \,\text{ for all }\, \uu, \vv \in\D.
\ee

Note that $\aaa+\uu\overline\bbb+\vv \bbb +\uu\vv\overline \aaa=0$ when $\uu=-\frac{\bbb\vv+\aaa}{\vv\overline \aaa+\overline\bbb}$. Under what conditions
are there no solutions with $\uu,\vv\in \D$? 
The map $\vv \mapsto -(\bbb\vv+\aaa) / (\bar{\aaa}\vv+\overline\bbb)$ sends the unit circle to itself, and maps the unit disk strictly
outside the unit disk if and only if $|\aaa|\ge |\bbb|$.
So the above condition (\ref{abcondition}) is equivalent to the condition $|\aaa|\geq|\bbb|$ for all~$n, m$. Note that a period of $\hat\phi$ does not exceed in absolute value a period of $\phi$, to see this recall the folding interpretation of $\hat\phi$ described in the proof of Lemma~\ref{phihatbdd}.
This condition can be further reformulated as follows: the image of $\mathbb{D}$ under the mapping $z\mapsto (\aA{1}+\bB{1}z)/(\aA{2}+\bB{2}z)$ does not intersect the real line: otherwise, one would have $(\aA{1}-t\aA{2})+(\bB{1}-t\bB{2})z=0$ for some real $t$,
a contradiction with $|\aaa|\geq|\bbb|$.

Recall that the image of $\hat\phi_{\uu,\vv}$ is periodic, hence it is bounded if and only if its periods are zero. Thus we need to find $\uu,\bar \vv\in\mathbb{D}$ such that:
\begin{align*}
\bB{1}+\uu\overline{\aA{1}}+\bar \vv \aA{1}+\uu\bar{\vv}\overline{\bB{1}}&=0\\
\bB{2}+\uu\overline{\aA{2}}+\bar \vv \aA{2}+\uu\bar{\vv}\overline{\bB{2}}&=0
\end{align*}
or equivalently,
\[
\frac{\bB{1}+\bar \vv \aA{1}}{\overline{\aA{1}}+\bar \vv \overline{\bB{1}}}=-\uu=\frac{\bB{2}+\bar \vv \aA{2}}{\overline{\aA{2}}+\bar \vv  \overline{\bB{2}}}.
\]
Both fractional-linear mappings send the unit disk to itself, since $|\aA{1,2}|\geq|\bB{1,2}|$. Therefore it is enough to show 
that this quadratic equation in $\bar \vv$ has a root in $\D$; the corresponding $\uu$ will lie in $\D$ too.

Clearly, either one of the roots is inside the unit disk and the other outside, or both are on the unit circle. Finally, note that the latter is impossible as one would have 
\[
\frac{\overline{\aA{1}}+\bar \vv\overline{\bB{1}}}{\overline{\aA{2}}+\bar \vv \bar b_2}=\frac{\bB{1}+\bar \vv \aA{1}}{\bB{2}+\bar \vv \aA{2}}=\frac{ \aA{1}+\vv \bB{1}}{\aA{2}+ \vv \bB{2}}\in\mathbb{R}
\]
which is in contradiction with the fact that the image of $\mathbb{D}$ under the mapping $z\mapsto (\aA{1}+\bB{1}z)/(\aA{2}+\bB{2}z)$ does not intersect the real line.

\end{proof}

\subsection{T-graphs for periodic bipartite graphs}
\label{subsec:periodicTgraphs}

The notion of T-graph was introduced in~\cite{Kenyon2004}. A pairwise disjoint collection $L_1, L_2, \dots , L_n$ 
of open line segments in $\R^2$
\emph{forms a T-graph in $\R^2$} if $\cup_{i=1}^n L_i$ is connected and contains all of its limit points except for some set $R = \{r_1, . . . , r_m\}$, where each $r_i$ lies on the boundary of the infinite component of $\R^2\smallsetminus \overline{\cup_{i=1}^n L_i}.$ Elements in $R$ are called root vertices. 
Starting from a T-graph one can define a bipartite graph, whose black vertices are the open line segments $L_i$ and
whose white vertices are the (necessarily convex) faces of the T-graph. A white vertex is adjacent to a black vertex if the  corresponding face contains a portion of the  corresponding segment as its boundary. Using a T-graph one can define in a natural geometric way (real) Kasteleyn weights on this bipartite graph: the weights are a sign $\pm$ times 
the lengths of the corresponding segments, where the sign depends on which side of the black segment the white face is on; changing the choice of which side corresponds to the $+$ sign is a gauge change.
Conversely, as described in \cite{Kenyon2004},
for a planar bipartite graph with Kasteleyn weights one can construct a T-graph 
corresponding to this bipartite graph. For infinite bi-periodic bipartite graphs one can similarly construct 
infinite T-graphs without boundary. For any $(\lambda_1,\lambda_2)$ in the liquid phase (see Section~\ref{sebsect:corresp}), we consider the realization $\phi_1:\tilde\G^* \to \mathbb{C} $ defined from $\omega_1(w,b) = \Re(G(w))K_{wb}F(b)$.

\begin{lemma}\label{tgraph} The realization $\phi_1$ defined above is a T-graph embedding.
\end{lemma}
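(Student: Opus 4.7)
The plan is to verify the two defining properties of a T-graph embedding: every white face of $\tilde\G^*$ should be realized as a convex polygon by $\phi_1$, and for each black vertex $b$ of $\tilde\G$ all edges of $\tilde\G^*$ dual to primal edges at $b$ should lie on a single line $L_b$ (the T-graph segment attached to $b$).

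First I would check co-closedness of $\omega_1$ at white vertices: at any white $w$,
\[
\sum_b \omega_1(wb) \;=\; \Re G(w)\cdot\sum_b K(w,b)\,F(b) \;=\; 0,
\]
since $F\in\ker K(\lambda_1,\lambda_2)$. This guarantees that $\phi_1$ sends each white face of $\tilde\G^*$ to a closed polygon. For the line-segment structure along $L_b$, I would show that all displacements $\omega_1(w_i b)=\Re G(w_i)K(w_i,b)F(b)$ with $b$ fixed and $w_i$ ranging over the neighbors of $b$ are parallel in $\C$. This uses the kernel relation $\sum_w G(w)K(w,b)=0$ together with the freedom in the choice of Kasteleyn signs and gauge: taking real parts and exploiting that $\Re G(w_i)$ is a real scalar, the displacements align along a common direction, which is the direction of $L_b$ (determined up to a real rescaling by $F(b)$ and the Kasteleyn data).

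The heart of the proof is the convexity of the white polygons. I would argue that as the neighbors $b_1,\ldots,b_k$ of a white $w$ are visited in cyclic counterclockwise order around $w$, the arguments of the displacements $\omega_1(wb_i)$ increase strictly monotonically, accumulating exactly $2\pi$ (the total turn being forced by the co-closedness above). The monotonicity itself is the substantive content and reflects that $(\lambda_1,\lambda_2)$ is an interior simple zero of the spectral curve: at such a liquid-phase point, $F$ plays the role of the boundary values of a discrete-holomorphic function whose argument winds exactly once around each white vertex, and the complex Kasteleyn weights $K(w,b_i)$ preserve this monotonicity up to a uniform rotation depending only on $w$. Once local injectivity is established (from convex white faces and collinear black faces), the global embedding of $\tilde\G^*$ follows from biperiodicity of $\phi_1$ (essentially the calculation of Proposition~\ref{logslope}) via the standard argument that a proper local homeomorphism into $\C$ is a covering map and hence a homeomorphism onto its image on the universal cover.

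The main obstacle is making the argument-monotonicity rigorous purely from the kernel conditions, since this is where the spectral structure of $K(\lambda_1,\lambda_2)$ at an interior simple zero must be used in full force. The cleanest route, rather than redoing this delicate spectral-theoretic computation, is to recognize $\phi_1$ as a periodic instance of the Kenyon--Sheffield T-graph construction \cite{Kenyon2004} and invoke the embedding theorem stated there; the authors explicitly note that the result is quoted from \cite{Kenyon2004} without proof, which suggests this is the route they have in mind.
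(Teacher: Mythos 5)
Your two structural observations are sound (co-closedness at white vertices makes the white faces close up, and for fixed $b$ the displacements $\Re(G(w_i))K(w_i,b)F(b)$ are real multiples of $F(b)$, so each black face maps to a segment), and your final globalization step (proper local homeomorphism $\Rightarrow$ covering map) matches the paper. But the heart of the lemma --- convexity of the white faces --- is not proved in your proposal. Your claimed local mechanism, that the arguments of the displacements around a white vertex increase monotonically with total turning $2\pi$, is exactly what needs to be shown and does not follow from anything you write: co-closedness only says the polygon closes up, not that it is convex or winds once, and the displacements around $w$ are real multiples of the $F(b_i)$, whose arguments have no a priori monotonicity. Your fallback --- ``invoke the embedding theorem stated in \cite{Kenyon2004}'' --- rests on a misreading: the paper's parenthetical says the \emph{periodic} result is stated in \cite{Kenyon2004} \emph{without proof}, i.e.\ the cited reference does not prove it; supplying that proof is precisely the content of this lemma. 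So the citation cannot close the gap.

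The paper's actual argument for convexity is global, not local. It first establishes a maximum principle for $\psi(v)=\phi_1(v)\cdot u$ ($u$ a generic direction): by the Kasteleyn sign condition the ratios $-\Re(G(w_i))K(w_i,b_i)\big/\bigl(\Re(G(w_{i+1}))K(w_{i+1},b_i)\bigr)$ around a dual vertex cannot all be positive, so some black segment passes through the vertex rather than ending there, giving neighbors with larger and smaller $\psi$. It then proves $\phi_1$ is locally finite, using that $2\phi_1=\phi+\hat\phi$ with $\phi$ periodic of nonzero period (Proposition~\ref{logslope}) and $\hat\phi$ bounded at an interior simple zero (Lemma~\ref{bound_simple}) --- an ingredient entirely absent from your proposal. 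Finally, if some white face were non-convex one could launch four disjoint paths on which $\psi$ is alternately increasing and decreasing; local finiteness forces them to exit a large disk, and their first hitting points on the bounding circle would contradict the convexity of the circle. A similar winding argument handles local injectivity near the black segments. Without the maximum principle, the local finiteness input, and this escape-to-infinity argument, your proof does not go through.
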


\begin{proof} The proof starts along the lines of Theorem $4.6$ of \cite{Kenyon2004}, which deals with the finite case. 
The $\phi_1$-image
of each black face is a line segment.
For a generic direction~$\tau$, consider the inner products $\psi(f) := \phi_1(f)\cdot\tau$ as $f$ runs over vertices of $\tilde\G^*$; 
we claim that this function $\psi$ 
satisfies a maximum principle: it has no local
maxima or minima. This fact follows from the Kasteleyn matrix orientation: If a face $f$ of $\tilde\G$
has vertices $w_1,b_1,\dots,w_k,b_k$ in counterclockwise order, we denote the neighboring faces as $f_1,f_2, \dots, f_{2k}$. Then the ratios 
\[\frac{\phi_1(f)-\phi_1(f_{2i-1})}{\phi_1(f)-\phi_1(f_{2i})}=\frac{\omega_1(w_i,b_i)}{-\omega_1(w_{i+1},b_i)} = -\frac{\Re(G(w_i))K_{w_ib_i}}{\Re(G(w_{i+1}))K_{w_{i+1}b_i}}\]
(with cyclic indices)
cannot be all positive, by the Kasteleyn condition. Thus not all black faces of $\tilde \G^*$ adjacent to $f$ have $\phi_1$-image with an endpoint at $\phi_1(f)$: at least one has $\phi_1(f)$ in its interior and thus there is a neighbor of $f$ with larger value
of $\psi$ and a neighbor with smaller value of $\psi$.

It follows from Remark~\ref{rem:phiperiodic} and Proposition \ref{logslope} that $\phi_1$ is 
a locally finite realization, in the sense that any compact set contains only finitely many points of the form $\phi_1$. Indeed, in the real node case, $2\phi_1=\phi+\hat\phi$ has a periodic image of nonzero period while in the interior simple zero case, it is the sum of the periodic realization $\phi$ of nonzero period and of the realization $\hat \phi$ which is bounded by Lemma~\ref{bound_simple}.

We claim that the $\phi_1$ image of a white face $w$ is a convex polygon. If not,
we could find a vector $\tau$ and four vertices $f_1,f_2,f_3,f_4$ of $w$ in clockwise order such that both
$\psi(f_1),\psi(f_3)$ are larger than either of $\psi(f_2),\psi(f_4)$.
By the maximum principle we can then find four disjoint infinite paths starting from $f_1,f_2,f_3,f_4$ respectively 
on which $\psi(\cdot)$ is respectively increasing, decreasing, increasing, decreasing. We linearly extend $\phi_1$ in order to define it on the edges of $\tilde\G^*$. Consider a circle $\mathcal{C}$ such that the disk that it bounds contains $\phi_1(f_i)$ for all $1 \leq i \leq 4$. By the local finiteness property, this disk contains finitely many points of the realization $\phi_1$, hence the four paths must intersect~$\mathcal{C}$. Denote by $A_i$ the point at which the $i$-th path intersects $\mathcal{C}$ 
for the first time for $1 \leq i \leq 4$; these points are in cyclic order on $\mathcal{C}$, 
since the paths are disjoint. We obtain that $\psi(A_1),\psi(A_3)$ are larger than either of $\psi(A_2),\psi(A_4)$, 
which contradicts the convexity of $\mathcal{C}$ and completes the proof of the claim that the $\phi_1$-image 
of each white face is convex.

A similar argument applied to the black segments shows that the set of white faces adjacent to a black segment
winds exactly once around the black segment, rather than multiple times, so $\phi_1$ is a local embedding near a black segment.

Since $\phi_1$ is also locally finite, it has to be proper hence it is a global embedding.
\end{proof}

\subsection{Correspondence}\label{sebsect:corresp}

Let $\G$ be a bipartite graph on the torus, with an equivalence class of positive edge weights under gauge equivalence. Recall the definition of the spectral curve $P(\lambda_1,\lambda_2)=0$ defined from this data. We say that $\G$ is in a \emph{liquid phase} if the origin is in the interior of the amoeba of $P$ (this terminology comes from \cite{KOS} and refers
to the polynomial decay of correlations for the corresponding dimer model on $\tilde\G$).
In this case the roots $(\lambda_1,\lambda_2)$ of $P$ for which $|\lambda_1|=|\lambda_2|=1$ consist
in either a pair of conjugate simple 
roots $(\lambda_1,\lambda_2),(\bar\lambda_1,\bar\lambda_2)$ 
or a real node $(\lambda_1,\lambda_2)=(\pm1,\pm1)$.

By Theorem \ref{thm:periodicdimers} above, 
associated to the data of a liquid phase dimer model is a periodic, orientation preserving 
convex embedding $\phi$ of $\G^*$, well defined up to homothety and translation. 
The converse is also true, giving us a bijection between these spaces:

\begin{theorem}
\label{thm:correspondence}
For toroidal graphs, the correspondence between liquid phase dimer models and periodic circle center embeddings is bijective. 
\end{theorem}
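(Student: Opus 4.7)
The forward direction is Theorem~\ref{thm:periodicdimers}; the plan is to construct an explicit inverse map and verify mutual inversion. Given a periodic convex embedding $\phi\colon \tilde\G^*\to\C$ dual to a circle pattern, define a 1-form on $\tilde\G$ by $\omega(wb):=\phi(f_l)-\phi(f_r)$, where $f_l,f_r$ are the faces to the left and right of $wb$ oriented from $w$ to $b$. Periodicity of $\phi$ up to the translations $V_1,V_2$ makes $\omega$ genuinely periodic, so it descends to an edge function on $\G$ on the torus. Proposition~\ref{angles} combined with formula \eqref{eq:realcenter} then shows that the matrix $K(w,b):=\omega(wb)$ has positive alternating-product face weights, hence is a Kasteleyn matrix with complex signs for a positive face weight function $X\colon F\to\R_{>0}$ that defines a dimer model.

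To show that the resulting dimer model is in liquid phase, observe that $\omega$ is divergence-free at every vertex: cyclically listing the neighbors $b_1,\dots,b_d$ of a white vertex $w$ with incident faces $f_0,\dots,f_{d-1}$, the sum $\sum_i\omega(wb_i)=\sum_i(\phi(f_{i-1})-\phi(f_i))$ telescopes to zero (and similarly at black vertices). Hence on the torus $\mathbf{1}\in\ker K\cap\ker K^t$, so $\det K=0$. Lifting to the universal cover, factor $\tilde K(w,b)=G(w)K_\nu(w,b)F(b)$, where $K_\nu$ has real entries $\pm|\omega(wb)|$ satisfying the Kasteleyn sign condition; the phase of $\omega$ in $H^1(T,U(1))$ is exhibited as a coboundary by complex gauge functions $G,F$, quasi-periodic with monodromies $F(b+p_i)=\lambda_i F(b)$ and $G(w+p_i)=\lambda_i^{-1}G(w)$. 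Periodicity of $|\omega|$ and $|K_\nu|$ forces $|G|,|F|$ periodic, hence $|\lambda_i|=1$. Thus $(\lambda_1,\lambda_2)$ sits on the spectral curve over the origin of the amoeba, giving the liquid phase condition, and by construction $\omega$ agrees with the 1-form \eqref{omegaomega} associated with the dimer model at this point.

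For mutual inversion, starting from a liquid phase dimer model and applying Theorem~\ref{thm:periodicdimers} produces $F,G,\omega,\phi$; the inverse construction applied to this $\phi$ recovers $\omega$ directly by \eqref{dphi}, and hence the original face weights and $(\lambda_1,\lambda_2)$ up to scaling of $F,G$, which corresponds exactly to similarity of $\phi$. Conversely, starting from $\phi$ and applying the inverse construction followed by Theorem~\ref{thm:periodicdimers} returns the original $\phi$ because $\ker K_\nu(\lambda_1,\lambda_2)$ is one-dimensional at an interior simple zero, uniquely pinning down $F,G$ up to scaling; at a real node the correct embedding in the one-parameter family is singled out by the boundedness of $\hat\phi$, as in Lemma~\ref{lem:doublepoint}. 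The main obstacle is the liquid-phase step: verifying that the gauge factorization of $K$ into a real-weighted Kasteleyn matrix times complex diagonal gauges lifts to the universal cover with monodromies of unit modulus. This rests on the combination of periodicity of $|\omega|$ and the structure of the spectral curve at simple zeros and real nodes.
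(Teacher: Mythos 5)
Your overall route matches the paper's: define $\omega(wb)=\phi(f_l)-\phi(f_r)$, observe that the resulting $K$ has positive face weights and vanishing row and column sums (canonical gauge), and factor $K=GK_0(\lambda_1,\lambda_2)F$ through a real-weighted magnetically altered Kasteleyn matrix so that $\det K=0$ places $(\lambda_1,\lambda_2)$ on the spectral curve. But there is a genuine gap at the final step. ``Sits on the spectral curve over the origin of the amoeba'' only shows that the origin lies in the \emph{closed} amoeba; the liquid phase condition requires the origin to lie in its \emph{interior}. For a Harnack curve the boundary of the amoeba is exactly the image of the real points that are not nodes, so if $\lambda_1$ and $\lambda_2$ both turn out to be real you must additionally show that $(\lambda_1,\lambda_2)$ is a \emph{node}, i.e.\ that $K_0(\lambda_1,\lambda_2)$ has two-dimensional kernel --- otherwise the origin could sit on the amoeba boundary and the model would not be liquid. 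The paper closes this case by noting that $\Re(F)$ and $\Im(F)$ both lie in $\ker K_0$ (and $\Re(G),\Im(G)$ in the left kernel), and that at least one of these pairs is linearly independent precisely because the embedding $\phi$ is genuinely two-dimensional. Your proposal never makes this case distinction; you invoke the real-node situation only later, in the mutual-inversion discussion, where it is assumed rather than established.

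A smaller point: your justification of $|\lambda_1|=|\lambda_2|=1$ is not valid as written. Periodicity of $|\omega|$ and $|K_\nu|$ only gives periodicity of the product $|G(w)||F(b)|$, which holds automatically for any monodromies since $|\lambda_i^{-1}||\lambda_i|=1$; it does not by itself force $|G|$ and $|F|$ to be individually periodic. The conclusion is still reachable from your setup: since you chose $K_\nu(wb)=\pm|\omega(wb)|$ exactly, the ratio $G(w)F(b)=\omega(wb)/K_\nu(wb)$ is unimodular on every edge, and connectivity of the graph then forces $|G|$ and $|F|$ to be constant (reciprocal) on the universal cover, whence $|\lambda_i|=1$. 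You should also record why the phase of $\omega/K_\nu$ is a coboundary at all, namely that its alternating product around each face equals $1$ because both $\omega$ and $K_\nu$ have alternating product $(-1)^{k+1}X_f$ there. With these two repairs --- the node argument in the real case and the corrected unimodularity argument --- your proof would align with the paper's.
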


\begin{proof}
Given a periodic, orientation-preserving embedding $\phi$ of $\tilde\G^*$ satisfying the angle condition 
(and thus a convex embedding),
we define edge weights by associating to each edge $e$ in $\tilde \G$ the complex number corresponding to the dual edge $e^*$ dual to $e$, oriented in such a way that the white dual face lies on its left. 
These edge weights define positive $X$ variables, 
because the sum of black angles equals the sum of white angles around each dual vertex. 

Let $K$ be the associated Kasteleyn matrix, with $K_{wb}$ equal to the corresponding complex edge weight.
Then we see that $K$ is in a Coulomb gauge, since the sum of the~$K_{wb}$ around each vertex is zero.

It remains to see that the weights are in a liquid phase. Since all face weights are real,
$K$ is gauge equivalent to the matrix $K(\lambda_1,\lambda_2)$, which has real weights 
except on the dual curves $l_1, l_2$,
where the weights are multiplied by $\lambda_1^{\pm1},\lambda_2^{\pm1}$ as before. Thus
$K_{wb}=G(w)K_{wb}(\lambda_1,\lambda_2)F(b)$
for some functions $G,F$. If at least one of 
$\lambda_1,\lambda_2$ is nonreal, then $(\lambda_1,\lambda_2)$ is an interior zero of $P$,
so we are in a liquid phase. If $\lambda_1,\lambda_2$ are both real,
then $K$ is real; in this case $K$ must have two dimensional kernel: both $\Re(F)$ and $\Im(F)$ are in the kernel,
and $\Re(G)$ and $\Im(G)$ are in the left kernel; since the embedding is two-dimensional either $\Re(F)$ and $\Im(F)$
are independent vectors or $\Re(G)$ and $\Im(G)$ are independent vectors.
Thus $(\lambda_1,\lambda_2)$ is at a real node of $P$ and again we are in a liquid phase.
\end{proof}

\section{Spider move, central move and Miquel dynamics}
\label{sec:Miquel}

\subsection{A central relation}

Given five distinct points $u,u_1,u_2,u_3,u_4 \in \C$, consider the following equation for an unknown $z$: 
\begin{align} \label{eq:quadeq}
	\frac{(u_2-z)(u_4-z)}{(u_1-z)(u_3-z)} = \frac{(u_2-u)(u_4-u)}{(u_1-u)(u_3-u)}.
\end{align}
This quadratic equation has two roots $z=u$ and $z=\tilde{u}$ where 
\begin{align}\label{eq:centralmove}
	\tilde{u} = \frac{u u_1 u_3 - u_1 u_2 u_3 - u u_2 u_4 + u_1 u_2 u_4 - u_1 u_3 u_4 + u_2 u_3 u_4}{
		u u_1 - u u_2 + u u_3 - u_1 u_3 - u u_4 + u_2 u_4}.
\end{align}
We call the map $u \mapsto \tilde{u}$ \emph{a central move}.

The central relation is known to be an integrable discrete equation of octahedron type. Rewriting \eqref{eq:centralmove} yields
\[
\frac{(u-u_1)(u_2-\tilde{u})(u_3-u_4)}{(u_1-u_2)(\tilde{u}-u_3)(u_4-u)} = -1
\]
which coincides with type $\chi_2$ in the Adler-Bobenko-Suris list of integrable discrete equations \cite{Adler2012} and arises in the classical Menelaus theorem in projective geometry \cite{Schief2002b}.

The central move depends on four other points $u_1,u_2,u_3,u_4$. It is related to the following circle pattern relation:
Suppose we have circles $C,C_1,C_2,C_3,C_4$ centered at $u,u_1,u_2,u_3,u_4$ with, for $i=1,2,3,4$, the triple of circles  $C,C_i,C_{i+1}$ meeting at a point (here indices are taken modulo $4$). Then, by Miquel's six circles theorem \cite{Miquel}, there is another circle $\tilde C$ which, for each $i=1,2,3,4$ 
intersects $C_i,C_{i+1}$ at the other point of intersection of $C_i$ and $C_{i+1}$. This circle $\tilde C$ has center 
$\tilde u$. 

\begin{theorem}[Centers of Miquel's six circles]
	\label{thm:Miquelcenters}
	Suppose five circles have centers $u$, $u_1$, $u_2$, $u_3$, $u_4$ as in Figure \ref{fig:Miquelcenters}. Then the center $\tilde{u}$ of the remaining circle through $\tilde{A}, \tilde{B}, \tilde{C}, \tilde{D}$ coincides with the point given by the central move \eqref{eq:centralmove}.
\end{theorem}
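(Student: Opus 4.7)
The plan is to identify the Miquel center $\tilde{u}$ with the isogonal conjugate of $u$ with respect to the quadrilateral $u_1u_2u_3u_4$, and then invoke the remark following Lemma~\ref{lem:solutionssquare}, which identifies the two solutions of \eqref{eq:quadeq} as precisely such a pair of isogonal conjugates.

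To set notation, I denote by $A_i$ the common intersection point of the three circles $C$, $C_i$, $C_{i+1}$, so that $A_1,\ldots,A_4$ are exactly the four intersections of $C$ with $C_1\cup\cdots\cup C_4$, and I write $\tilde{A}_i$ for the second intersection point of $C_i$ and $C_{i+1}$ (other than $A_i$). By Miquel's theorem these four points $\tilde{A}_i$ lie on a common circle $\tilde{C}$, whose center is $\tilde{u}$.

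The core of the argument will be an angle chase at each vertex $u_i$. Exactly four other circles meet $C_i$ in two points each, namely $C$, $C_{i+1}$, $\tilde{C}$, and $C_{i-1}$, intersecting $C_i$ in the pairs $\{A_{i-1},A_i\}$, $\{A_i,\tilde{A}_i\}$, $\{\tilde{A}_{i-1},\tilde{A}_i\}$, $\{A_{i-1},\tilde{A}_{i-1}\}$ respectively. Since the two intersection points of a pair of circles are reflections of one another across the line joining their centers, each pair yields an angular identity at $u_i$:
\begin{align*}
\arg(A_{i-1}-u_i)+\arg(A_i-u_i)&\equiv 2\arg(u-u_i),\\
\arg(A_i-u_i)+\arg(\tilde{A}_i-u_i)&\equiv 2\arg(u_{i+1}-u_i),\\
\arg(\tilde{A}_{i-1}-u_i)+\arg(\tilde{A}_i-u_i)&\equiv 2\arg(\tilde{u}-u_i),\\
\arg(A_{i-1}-u_i)+\arg(\tilde{A}_{i-1}-u_i)&\equiv 2\arg(u_{i-1}-u_i),
\end{align*}
all modulo $2\pi$. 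Taking the alternating sum $(+,-,+,-)$ will cancel the four $A$- and $\tilde{A}$-arguments and give
\[
\arg(u-u_i)+\arg(\tilde{u}-u_i)\equiv \arg(u_{i-1}-u_i)+\arg(u_{i+1}-u_i)\pmod{\pi},
\]
which is the condition that the lines $u_iu$ and $u_i\tilde{u}$ are reflections of each other across the internal angle bisector of the quadrilateral $u_1u_2u_3u_4$ at $u_i$. Holding at all four vertices, this exactly says that $\tilde{u}$ is the isogonal conjugate of $u$ with respect to $u_1u_2u_3u_4$.

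To conclude, the remark following Lemma~\ref{lem:solutionssquare} states that the two solutions of \eqref{eq:quadeq} are exactly the pair of isogonal conjugates with respect to $u_1u_2u_3u_4$; since one of them is $u$, the other must be $\tilde{u}$, and it is given by the explicit formula \eqref{eq:centralmove}. The main delicate step I anticipate is upgrading the congruence modulo $\pi$ above to a genuine isogonal conjugacy rather than its external-bisector variant; I would resolve this by a continuity argument, perturbing from a nearly-isoradial configuration (where the reflections are manifestly across internal bisectors) and using that both $u$ and $\tilde{u}$ lie strictly in the interior of the convex hull of the $u_i$'s, in line with Lemma~\ref{lem:solutionssquare}.
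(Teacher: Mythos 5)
Your angle chase at each $u_i$ is correct and gives a clean geometric proof of one half of the statement: the alternating sum of the four reflection identities does show that the genuine Miquel center $\tilde u_M$ satisfies $\arg(u-u_i)+\arg(\tilde u_M-u_i)\equiv\arg(u_{i+1}-u_i)+\arg(u_{i-1}-u_i)\pmod\pi$ for each $i$, i.e.\ that $\frac{(u-u_i)(\tilde u_M-u_i)}{(u_{i+1}-u_i)(u_{i-1}-u_i)}$ is real, which is precisely the isogonal-line condition. (Your worry about internal versus external bisectors is actually a non-issue: reflecting a \emph{line} through $u_i$ across the internal or the external bisector produces the same line, so the mod-$\pi$ congruence already pins down the isogonal line and no perturbation argument is needed.) This is a genuinely different, more synthetic route than the paper's, which works instead with perpendicular bisectors of the new chords $\tilde A\tilde B$, etc.

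The genuine gap is in the other half. You close the argument by citing the Remark after Lemma~\ref{lem:solutionssquare}, but that remark is asserted in the paper without proof (``One can show that\dots''), and its content --- that the second root $\tilde u$ of \eqref{eq:quadeq} lies on the four isogonal lines of $u$ --- is exactly the algebraic heart of the theorem you are trying to prove. The paper supplies it via Vieta's formulas: writing $X=-\frac{(u_2-u)(u_4-u)}{(u_1-u)(u_3-u)}$, the two roots of \eqref{eq:quadeq} satisfy $u+\tilde u=\frac{(u_2+u_4)+X(u_1+u_3)}{1+X}$ and $u\tilde u=\frac{u_2u_4+Xu_1u_3}{1+X}$, whence $\frac{(\tilde u-u_i)(u-u_i)}{(u_{i+1}-u_i)(u_{i-1}-u_i)}=1/(1+X^{\pm1})\in\R$. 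Without this computation (or an equivalent one) your proof has no link between the geometric isogonal condition and equation \eqref{eq:quadeq}, so it is incomplete as written; with it, your argument closes (uniqueness of the common point of the four isogonal lines for a nondegenerate quadrilateral then forces $\tilde u_M=\tilde u$). You should also note that the cited remark is stated only for convex $Q$ and $X\in(0,\infty)$, whereas the computation above needs only that $X$ is real, which is all the paper uses.
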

\begin{proof}
The existence of the sixth circle follows from Miquel's six circles theorem. 
It remains to show the relation between the centers. 
Notice that the line connecting two centers is always perpendicular the common chord of the two circles. 
The center of the sixth circle is determined uniquely as the intersection of the perpendicular bisectors of 
$\tilde{A}\tilde{B}, \tilde{B}\tilde{C},\tilde{C}\tilde{D},\tilde{D}\tilde{A}$. 
It suffices to show that the point $\tilde{u}$ determined from \eqref{eq:centralmove} lies on these perpendicular lines.
	
	We show that $\tilde{u} - u_2$ is perpendicular to $\tilde{A}-\tilde{B}$. On one hand, since $\tilde{A} ,\tilde{B}, B, A$ are concyclic, their cross ratio is real. We know that $u_3-u_2 \perp \tilde{B} - B$,  $u-u_2 \perp B - A$, $u_1-u_2 \perp A-\tilde{A}$. Thus we have $\tilde{u} - u_2 \perp \tilde{A} - \tilde{B}$ if and only if
	$
	\frac{(\tilde{u}-u_2)(u-u_2)}{(u_1-u_2)(u_3-u_2)}$ is real. On the other hand, since $u,u_1,u_2,u_3,u_4$ are circumcenters, the quantity 
	$X := - \frac{(u_2-u)(u_4-u)}{(u_1-u)(u_3-u)}$ is real, see Section~\ref{circlestoX}. Considering the quadratic equation \eqref{eq:quadeq}, the roots $u, \tilde{u}$ satisfy
	\begin{align} \label{eq:roots}
		u + \tilde{u} = \frac{(u_2+u_4) + X(u_1+u_3)}{1+X}, \quad
		u \, \tilde{u} = \frac{ u_2 u_4 + X u_1 u_3}{1+X}
	\end{align}
	and hence 
	\begin{align*}
		\frac{(\tilde{u}-u_2)(u-u_2)}{(u_1-u_2)(u_3-u_2)} = X/(1+X)
	\end{align*}
	is real. This implies $\tilde{u} - u_2$ is perpendicular to $\tilde{A}-\tilde{B}$.
	
	Similarly we can show that $\tilde{u}$ lies on the other perpendicular lines and so $\tilde{u}$ is the circumcenter of the sixth circle.
\end{proof}

\begin{figure}[h!]
	\centering
	\center{\includegraphics[width=5.2in]{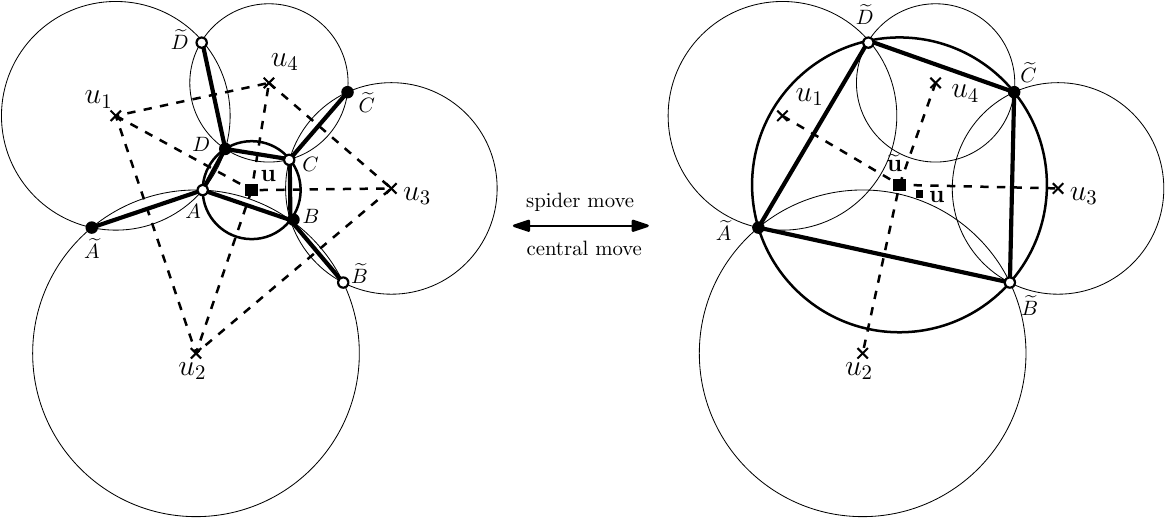}}
	\caption{Miquel's six circles theorem states that given a configuration of five circles, meeting three at a time
	at $A,B,C,D$ as indicated, there exists a sixth circle 
	passing through the four remaining intersection points. The circumcenters of the six circles satisfy the central relation~\eqref{eq:quadeq}.}
	\label{fig:Miquelcenters}
\end{figure}

\subsection{Cluster variables} \label{subsec:clustervar}

Given $u,u_1,u_2,u_3,u_4 \in \C$, we define $X=-\frac{(u_2-u)(u_4-u)}{(u_1-u)(u_3-u)}.$
More generally, if we have a bipartite circle pattern with circle centers $\{u_i\}$, and $f$ is a face of degree $2k$,
define 
$$X_f =  -\frac{(u_2-u)(u_4-u)\dots(u_{2k}-u)}{(u_1-u)(u_3-u)\dots(u_{2k-1}-u)}.$$ 
As discussed in Section \ref{circlestoX} above, this associates a real variable to every face (i.e. circle) of the circle pattern. 

\begin{theorem} \label{thm:localchange} Suppose we have a circle pattern with bipartite graph $\G$, and $f$ is a quad face of $\G$, with neighboring faces $f_1,f_2,f_3,f_4$. Let $u,u_1,u_2,u_3,u_4$ be the corresponding circle centers, and $X_1,\dots,X_4$ be the $X$ variables.
	Then under a spider move the circles undergo a Miquel transformation, the new circle center is $\tilde u$,
	and the $X$ variables are transformed as follows:
\begin{align}\nonumber X' &= X^{-1}\\
		\nonumber X_1' &= X_1(1+X)\\
	\label{1+X}	X_2'&=X_2(1+X^{-1})^{-1}\\
		\nonumber X_3' &= X_3(1+X)\\
		\nonumber X_4' &= X_4(1+X^{-1})^{-1}
\end{align}
\end{theorem}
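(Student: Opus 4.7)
The plan is to reduce the theorem to two ingredients: the formula (\ref{eq:realcenter}) that expresses the face weight as an alternating product of displacements from a circle center to its neighboring centers, and the standard algebraic effect of the spider move on edge weights depicted in Figure~\ref{elemtransfs}. Both the geometric transformation $u \mapsto \tilde{u}$ and the Y-variable transformations (\ref{1+X}) then fall out by comparing the formula (\ref{eq:realcenter}) before and after the move.

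First I would establish the central move. Applying (\ref{eq:realcenter}) to the central quad face of $\G$ yields
\[
X \;=\; -\frac{(u_2-u)(u_4-u)}{(u_1-u)(u_3-u)}.
\]
The spider move reverses the bipartition at the central face: the white/black roles of the four boundary vertices swap, so in the formula for the new central face the indices alternating in the numerator and denominator interchange. A direct computation from the edge-weight rule of Figure~\ref{elemtransfs} together with the definition (\ref{Xdef}) gives $X' = X^{-1}$ at the central face. Applying (\ref{eq:realcenter}) to the new central face, now with circumcenter $\tilde{u}$, therefore gives
\[
X^{-1} \;=\; -\frac{(u_1-\tilde{u})(u_3-\tilde{u})}{(u_2-\tilde{u})(u_4-\tilde{u})},
\]
which rearranges to precisely the quadratic equation (\ref{eq:quadeq}) in the unknown $\tilde{u}$. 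Its two roots are $u$ and $\tilde{u}$, and by Lemma~\ref{lem:solutionssquare} (applied to the convex quadrilateral with vertices $u_1, u_2, u_3, u_4$) the new vertex lies strictly inside that quadrilateral. Generically it is distinct from $u$, hence equals the second root, given explicitly by (\ref{eq:centralmove}).

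To conclude that the circles themselves undergo a Miquel transformation, I would invoke Theorem~\ref{thm:Miquelcenters}: the point $\tilde{u}$ is exactly the center of the Miquel sixth circle of the configuration $C, C_1, C_2, C_3, C_4$. The four new boundary vertices of the central face, obtained by reflecting across the perpendicular bisector lines joining adjacent centers (as in Proposition~\ref{angles}), are then precisely the four "opposite" intersection points of consecutive pairs among $C_1, C_2, C_3, C_4$, as required by Miquel's theorem. The remaining transformations (\ref{1+X}) for the neighboring face weights are a routine computation from the edge-weight rule of Figure~\ref{elemtransfs} and the definition (\ref{Xdef}); they are the familiar cluster Y-mutation formulas and have no specifically geometric content. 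The main obstacle is the careful bookkeeping around the bipartition reversal at the central face, ensuring that the odd and even indices in the product (\ref{eq:realcenter}) really swap under the spider move so that the same five points $u, u_1, \ldots, u_4$ give the correct quadratic equation for $\tilde{u}$; once this is handled, everything else reduces to Theorem~\ref{thm:Miquelcenters} and an elementary edge-weight calculation.
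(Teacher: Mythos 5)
Your treatment of the central face is sound and close in spirit to the paper: the new central weight is $X^{-1}$, the center of the new circle satisfies the quadratic \eqref{eq:quadeq}, and Theorem~\ref{thm:Miquelcenters} is the right tool to identify that center with the second root $\tilde u$ (this is also a cleaner way to rule out the root $u$ than the appeal to genericity). The gap is in your last step. The variables $X_1,\dots,X_4$ in this theorem are \emph{geometric} quantities, computed from the circle centers by the alternating product of Section~\ref{subsec:clustervar} (equivalently \eqref{eq:realcenter}); the content of the formulas \eqref{1+X} for the neighboring faces is that replacing $u$ by $\tilde u$ in that alternating product for $f_i$ (together with the parity flip of the shared edge) multiplies $X_i$ by exactly $1+X$ or $(1+X^{-1})^{-1}$. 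The edge-weight rule of Figure~\ref{elemtransfs} and the definition \eqref{Xdef} give only the combinatorial mutation of the dimer face weights; they do not show that the configuration of centers $\tilde u,u_1,\dots,u_4$ actually realizes those mutated weights. So the step you dismiss as "routine with no specifically geometric content" is precisely where the geometry enters, and without it you have only checked consistency at the central face.

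What is needed is the identity
\[
\frac{(\tilde{u}-u_i)(u-u_i)}{(u_{i+1}-u_i)(u_{i-1}-u_i)}=\begin{cases}1/(1+X) & i \text{ odd},\\ 1/(1+X^{-1}) & i \text{ even},\end{cases}
\]
which the paper derives by writing $u+\tilde u$ and $u\tilde u$ via Vieta's formulas \eqref{eq:roots} for the quadratic \eqref{eq:quadeq}, expanding $(\tilde u-u_i)(u-u_i)$, and observing that the resulting expression factors. Once this identity is in hand, the ratios $X_i'/X_i$ follow by inspection, since only the factors involving the central center (and the two adjacent ones) change in the alternating product for $f_i$. Your proposal should be amended to include this computation; as written, it proves that the spider move mutates the combinatorial weights and that the central circle becomes the Miquel circle, but not that the two facts are compatible on the four neighboring faces.
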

\begin{proof}
	Equation \eqref{eq:roots} implies
	\[
	(\tilde{u} - u_i)(u - u_i)  = \frac{ u_2 u_4 + X u_1 u_3}{1+X} - u_i  \frac{(u_2+u_4) + X (u_1+u_3)}{1+X} + u_i^2.
	\]
	This factors for $i=1,2,3,4$, yielding
	\begin{align} \label{eq:chX}
		\frac{(\tilde{u} - u_i)(u - u_i)}{(u_{i+1} - u_i)(u_{i-1} - u_i)}  = \begin{cases}
			1/(1+X) \quad \text{for odd i} \\
			1/(1+X^{-1}) \quad \text{for even i.}
		\end{cases} 
	\end{align}
	Now (\ref{1+X}) is a short verification.
\end{proof}

\begin{remark}
The transformation formulas for the $X$ variables corresponds  to the transformation formulas for so-called $Y$-systems, introduced in \cite{Zamolodchikov1991} and which fall into the broader framework of coefficient variables in cluster algebras \cite{Fock2006,Fomin2007}.
\end{remark}

\subsection{Miquel dynamics}

Miquel dynamics is a dynamical system on circle patterns with the combinatorics of the square grid \cite{Ramassamy2017}. We color the faces (corresponding to circles) black and white in a chessboard fashion. 
A black mutation is to remove all the black circles and replace them by new circles obtained from Miquel's theorem, and similarly for a white mutation. More precisely, black (resp. white) mutation moves each vertex to the other intersection point of the two white (resp. black) circles it belongs to. Applying two mutations of the same type gives the identity map. 
Miquel dynamics is the process of applying black and white mutations alternately on a circle pattern. Our notion of the central move shows that the centers under Miquel dynamics follow an integrable system equivalent to that of the octahedron recurrence (defined in~\cite{Fomin2002}).

Note that, while Miquel dynamics was originally defined as a dynamics on circle patterns, it is also a well-defined dynamics on centers of circle patterns. In terms of centers $u: F(\Z^2) \to \C$, a \emph{black mutation} $M_b$ 
simply applies a central move to all black centers. In terms of spider moves on (the dual graph) $\Z^2$, a black mutation
can be decomposed into two steps: \smallskip

\textbf{Step 1:} Apply a spider move to the black faces. 

\textbf{Step 2:} Contract all the degree-2 vertices. \smallskip

The new black centers and the old white centers define a map $M_{b}(u): F(\mathbb{Z}^2) \to \mathbb{C}$ giving 
the centers of the circle pattern $M_{b}(z)$. Similarly one defines the white mutation~$M_{w}$. 
Applying black and white mutations alternately yields a sequence of square grids 

\[
\{\dots ,M_{b}(M_{w}(u)), M_{w}(u), u, M_{b}(u), M_{w}(M_{b}(u)), \dots \} 
\]

\begin{figure}[h!]
	\centering
	\begin{minipage}{0.3\textwidth}
		\begin{tikzpicture}[line cap=round,line join=round,>=triangle 45,x=1.0cm,y=1.0cm,scale=0.5]
		\clip(-4.,-4.) rectangle (4.,4.);
		\node at (1.5,1.5) {w};
		\node at (-1.5,1.5) {b};
		\node at (1.5,-1.5) {b};
		\node at (-1.5,-1.5) {w};
		\coordinate [label=below right:$z$] (A)  at (0,0);
		\draw (-3.,3.)-- (0.,3.);
		\draw (0.,3.)-- (3.,3.);
		\draw (3.,3.)-- (3.,0.);
		\draw (3.,0.)-- (3.,-3.);
		\draw (3.,-3.)-- (0.,-3.);
		\draw (0.,-3.)-- (0.,0.);
		\draw (0.,0.)-- (0.,3.);
		\draw (-3.,3.)-- (-3.,0.);
		\draw (-3.,0.)-- (0.,0.);
		\draw (-3.,-3.)-- (0.,-3.);
		\draw (-3.,0.)-- (-3.,-3.);
		\draw (0.,0.)-- (3.,0.);
		\draw (3.,0.)-- (5.,0.);
		\draw (3.,3.)-- (5.,3.);
		\draw (3.,3.)-- (3.,5.);
		\draw (0.,3.)-- (0.,5.);
		\draw (-3.,3.)-- (-3.,5.);
		\draw (-3.,3.)-- (-5.,3.);
		\draw (-3.,0.)-- (-5.,0.);
		\draw (-3.,-3.)-- (-5.,-3.);
		\draw (3.,-3.)-- (5.,-3.);
		\draw (-3.,-3.)-- (-3.,-5.);
		\draw (0.,-3.)-- (0.,-5.);
		\draw (3.,-3.)-- (3.,-5.);
		\begin{scriptsize}
		\draw [fill=black] (0.,0.) circle (1.5pt);
		\draw [fill=black] (-3.,0.) circle (1.5pt);
		\draw [fill=black] (-3.,3.) circle (1.5pt);
		\draw [fill=black] (0.,3.) circle (1.5pt);
		\draw [fill=black] (3.,3.) circle (1.5pt);
		\draw [fill=black] (3.,0.) circle (1.5pt);
		\draw [fill=black] (3.,-3.) circle (1.5pt);
		\draw [fill=black] (0.,-3.) circle (1.5pt);
		\draw [fill=black] (-3.,-3.) circle (1.5pt);
		\draw [fill=black] (5.,0.) circle (1.5pt);
		\draw [fill=black] (5.,3.) circle (1.5pt);
		\draw [fill=black] (3.,5.) circle (1.5pt);
		\draw [fill=black] (0.,5.) circle (1.5pt);
		\draw [fill=black] (-3.,5.) circle (1.5pt);
		\draw [fill=black] (-5.,3.) circle (1.5pt);
		\draw [fill=black] (-5.,0.) circle (1.5pt);
		\draw [fill=black] (-5.,-3.) circle (1.5pt);
		\draw [fill=black] (5.,-3.) circle (1.5pt);
		\draw [fill=black] (-3.,-5.) circle (1.5pt);
		\draw [fill=black] (0.,-5.) circle (1.5pt);
		\draw [fill=black] (3.,-5.) circle (1.5pt);
		\end{scriptsize}
		\end{tikzpicture}
	\end{minipage}
	\begin{minipage}{0.3\textwidth}
		\begin{tikzpicture}[line cap=round,line join=round,>=triangle 45,x=1.0cm,y=1.0cm,scale=0.5]
		\clip(-4.,-4.) rectangle (4.,4.);
		\coordinate [label=below:$\tilde{z}$] (A)  at (0,0);
		\draw (-2.,1.)-- (-1.,1.);
		\draw (-1.,1.)-- (-1.,2.);
		\draw (-1.,2.)-- (-2.,2.);
		\draw (-2.,2.)-- (-2.,1.);
		\draw (-1.,1.)-- (0.,0.);
		\draw (0.,0.)-- (1.,-1.);
		\draw (2.,-1.)-- (3.,0.);
		\draw (3.,0.)-- (4.,1.);
		\draw (4.,1.)-- (4.,2.);
		\draw (4.,2.)-- (3.,3.);
		\draw (3.,3.)-- (2.,4.);
		\draw (2.,4.)-- (1.,4.);
		\draw (1.,4.)-- (0.,3.);
		\draw (0.,3.)-- (-1.,2.);
		\draw (-2.,2.)-- (-3.,3.);
		\draw (-3.,3.)-- (-4.,4.);
		\draw (-2.,1.)-- (-3.,0.);
		\draw (-3.,0.)-- (-4.,-1.);
		\draw (-4.,-1.)-- (-4.,-2.);
		\draw (-4.,-2.)-- (-3.,-3.);
		\draw (-3.,-3.)-- (-2.,-4.);
		\draw (-2.,-4.)-- (-1.,-4.);
		\draw (-1.,-4.)-- (0.,-3.);
		\draw (0.,-3.)-- (1.,-2.);
		\draw (1.,-2.)-- (1.,-1.);
		\draw (1.,-1.)-- (2.,-1.);
		\draw (2.,-1.)-- (2.,-2.);
		\draw (2.,-2.)-- (1.,-2.);
		\draw (2.,-2.)-- (3.,-3.);
		\draw (3.,-3.)-- (4.,-4.);
		\begin{scriptsize}
		\draw [fill=black] (0.,0.) circle (1.5pt);
		\draw [fill=black] (-3.,0.) circle (1.5pt);
		\draw [fill=black] (-3.,3.) circle (1.5pt);
		\draw [fill=black] (0.,3.) circle (1.5pt);
		\draw [fill=black] (3.,3.) circle (1.5pt);
		\draw [fill=black] (3.,0.) circle (1.5pt);
		\draw [fill=black] (3.,-3.) circle (1.5pt);
		\draw [fill=black] (0.,-3.) circle (1.5pt);
		\draw [fill=black] (-3.,-3.) circle (1.5pt);
		\draw [fill=black] (5.,0.) circle (1.5pt);
		\draw [fill=black] (5.,3.) circle (1.5pt);
		\draw [fill=black] (3.,5.) circle (1.5pt);
		\draw [fill=black] (0.,5.) circle (1.5pt);
		\draw [fill=black] (-3.,5.) circle (1.5pt);
		\draw [fill=black] (-5.,3.) circle (1.5pt);
		\draw [fill=black] (-5.,0.) circle (1.5pt);
		\draw [fill=black] (-5.,-3.) circle (1.5pt);
		\draw [fill=black] (5.,-3.) circle (1.5pt);
		\draw [fill=black] (-3.,-5.) circle (1.5pt);
		\draw [fill=black] (0.,-5.) circle (1.5pt);
		\draw [fill=black] (3.,-5.) circle (1.5pt);
		\draw [fill=black] (-2.,2.) circle (1.5pt);
		\draw [fill=black] (-2.,1.) circle (1.5pt);
		\draw [fill=black] (-1.,1.) circle (1.5pt);
		\draw [fill=black] (-1.,2.) circle (1.5pt);
		\draw [fill=black] (1.,-1.) circle (1.5pt);
		\draw [fill=black] (1.,-2.) circle (1.5pt);
		\draw [fill=black] (2.,-2.) circle (1.5pt);
		\draw [fill=black] (2.,-1.) circle (1.5pt);
		\draw [fill=black] (-4.,4.) circle (1.5pt);
		\draw [fill=black] (1.,4.) circle (1.5pt);
		\draw [fill=black] (2.,4.) circle (1.5pt);
		\draw [fill=black] (4.,2.) circle (1.5pt);
		\draw [fill=black] (4.,1.) circle (1.5pt);
		\draw [fill=black] (-2.,-4.) circle (1.5pt);
		\draw [fill=black] (-1.,-4.) circle (1.5pt);
		\draw [fill=black] (-4.,-1.) circle (1.5pt);
		\draw [fill=black] (-4.,-2.) circle (1.5pt);
		\draw [fill=black] (4.,-4.) circle (1.5pt);
		\end{scriptsize}
		\end{tikzpicture}
	\end{minipage}
	\begin{minipage}{0.3\textwidth}
		\begin{tikzpicture}[line cap=round,line join=round,>=triangle 45,x=1.0cm,y=1.0cm,scale=0.5]
		\clip(-4.,-4.) rectangle (4.,4.);
		\coordinate [label=below right:$M_{b}(z)$] (A)  at (0,0);
		\draw (-3.,3.)-- (0.,3.);
		\draw (0.,3.)-- (3.,3.);
		\draw (3.,3.)-- (3.,0.);
		\draw (3.,0.)-- (3.,-3.);
		\draw (3.,-3.)-- (0.,-3.);
		\draw (0.,-3.)-- (0.,0.);
		\draw (0.,0.)-- (0.,3.);
		\draw (-3.,3.)-- (-3.,0.);
		\draw (-3.,0.)-- (0.,0.);
		\draw (-3.,-3.)-- (0.,-3.);
		\draw (-3.,0.)-- (-3.,-3.);
		\draw (0.,0.)-- (3.,0.);
		\draw (3.,0.)-- (5.,0.);
		\draw (3.,3.)-- (5.,3.);
		\draw (3.,3.)-- (3.,5.);
		\draw (0.,3.)-- (0.,5.);
		\draw (-3.,3.)-- (-3.,5.);
		\draw (-3.,3.)-- (-5.,3.);
		\draw (-3.,0.)-- (-5.,0.);
		\draw (-3.,-3.)-- (-5.,-3.);
		\draw (3.,-3.)-- (5.,-3.);
		\draw (-3.,-3.)-- (-3.,-5.);
		\draw (0.,-3.)-- (0.,-5.);
		\draw (3.,-3.)-- (3.,-5.);
		\begin{scriptsize}
		\draw [fill=black] (0.,0.) circle (1.5pt);
		\draw [fill=black] (-3.,0.) circle (1.5pt);
		\draw [fill=black] (-3.,3.) circle (1.5pt);
		\draw [fill=black] (0.,3.) circle (1.5pt);
		\draw [fill=black] (3.,3.) circle (1.5pt);
		\draw [fill=black] (3.,0.) circle (1.5pt);
		\draw [fill=black] (3.,-3.) circle (1.5pt);
		\draw [fill=black] (0.,-3.) circle (1.5pt);
		\draw [fill=black] (-3.,-3.) circle (1.5pt);
		\draw [fill=black] (5.,0.) circle (1.5pt);
		\draw [fill=black] (5.,3.) circle (1.5pt);
		\draw [fill=black] (3.,5.) circle (1.5pt);
		\draw [fill=black] (0.,5.) circle (1.5pt);
		\draw [fill=black] (-3.,5.) circle (1.5pt);
		\draw [fill=black] (-5.,3.) circle (1.5pt);
		\draw [fill=black] (-5.,0.) circle (1.5pt);
		\draw [fill=black] (-5.,-3.) circle (1.5pt);
		\draw [fill=black] (5.,-3.) circle (1.5pt);
		\draw [fill=black] (-3.,-5.) circle (1.5pt);
		\draw [fill=black] (0.,-5.) circle (1.5pt);
		\draw [fill=black] (3.,-5.) circle (1.5pt);
		\end{scriptsize}
		\end{tikzpicture}
	\end{minipage}
	\caption{Starting with $z:\mathbb{Z}^2\to \mathbb{C}$ with black and white faces (Left), a black mutation $M_b$ applied a spider move to each black face (Center). We then contract all the degree-2 vertices and obtain a new immersion $M_{b}(z): \mathbb{Z}^2 \to \mathbb{C}$ (Right). }
	\label{fig:miqueldyn}
\end{figure}

As in section \ref{subsec:clustervar}, a weight $X: F(\mathbb{Z}^2) \to \R_{>0}$ is associated to the centers of the circle pattern.  Let $M_{b}( X_{m,n})$ be a weight of face $(m,n)$ after a mutation $M_b$.

\begin{proposition}\label{prop:changeofX}
	Under a black mutation
	\begin{align*}
	M_{b}( X_{m,n}) =\begin{cases}
	X^{-1}_{m,n} \quad &\text{if $(m,n)$ is a black face} \\
	X_{m,n} \frac{(1+X_{m,n+1})(1+X_{m,n-1})}{(1+X^{-1}_{m+1,n})(1+X^{-1}_{m-1,n})} \quad &\text{if $(m,n)$ is a white face} \\
	\end{cases}
	\end{align*}
	In particular, $X_{m,n} > 0$ for all $(m,n)\in\Z^2$ if and only if $M_b(X_{m,n}) >0$ for all $(m,n)\in\Z^2$. The same holds for a white mutation.
\end{proposition}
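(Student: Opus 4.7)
The plan is to reduce the proposition to a direct application of Theorem~\ref{thm:localchange} by decomposing the black mutation $M_b$ into elementary moves on the bipartite graph $\Z^2$. By the two-step description given just before the proposition, $M_b$ consists of a spider move at every black face, followed by contraction of every resulting degree-$2$ vertex. Since two distinct black faces of the square grid share no edges (only corners), the spider moves act on pairwise disjoint sets of edges and can therefore be performed independently and in any order without interference. Moreover, each contraction of a degree-$2$ vertex does not affect the face variable $X$, since after the spider move the two edges being collapsed carry equal weights (this is precisely item~2 of Figure~\ref{elemtransfs}). Hence it suffices to compute the cumulative effect of the independent spider moves on the $X$ variables using Theorem~\ref{thm:localchange}.

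Consider first a black face $(m,n)$. The only move affecting $X_{m,n}$ is the spider move centered at $(m,n)$ itself, which by Theorem~\ref{thm:localchange} inverts its variable, yielding $M_b(X_{m,n}) = X_{m,n}^{-1}$. For a white face $(m,n)$, its four neighbors in $\Z^2$ are the four black faces $(m\pm 1, n)$ and $(m, n\pm 1)$, and each of these undergoes a spider move that multiplies $X_{m,n}$ by either $(1 + X_f)$ or $(1 + X_f^{-1})^{-1}$, where $f$ is the black face and the choice depends on which pair of opposite neighbors of $f$ contains $(m,n)$. Reading off the convention from Figure~\ref{elemtransfs} and tracking it consistently around the lattice, the two vertical neighbors $(m, n\pm 1)$ play the role of the $(1+X)$-scaled pair (i.e.\ the $f_1, f_3$ in Theorem~\ref{thm:localchange}) while the two horizontal neighbors $(m\pm 1, n)$ play the role of the $(1+X^{-1})^{-1}$-scaled pair. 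Multiplying together the four contributions from the independent spider moves gives
\[
M_b(X_{m,n}) = X_{m,n}\,\frac{(1+X_{m,n+1})(1+X_{m,n-1})}{(1+X_{m+1,n}^{-1})(1+X_{m-1,n}^{-1})},
\]
which is the claimed formula on white faces.

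For the positivity statement, one direction is immediate: if $X > 0$ everywhere, then both $1 + X_f > 0$ and $(1+X_f^{-1})^{-1} > 0$ for every black face $f$, so $M_b(X) > 0$ by the formula. For the converse, observe that $M_b$ is an involution on circle patterns (applying a black mutation twice moves each vertex to the other intersection point and back), hence also an involution on the induced $X$-variables; so if $M_b(X) > 0$ then the same argument applied to $M_b(X)$ yields $X = M_b(M_b(X)) > 0$. The argument for a white mutation is identical after swapping the roles of black and white. The only genuine subtlety is the consistent identification of which pair of opposite white neighbors of a black face plays which role in Theorem~\ref{thm:localchange}; once this convention is fixed on one black face, it propagates to all others by translation invariance of the square grid.
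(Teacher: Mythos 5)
Your proposal is correct and follows essentially the same route as the paper: the paper's proof simply states that the formulas follow from Theorem~\ref{thm:localchange} and uses the identical involution argument $X=M_b(M_b(X))$ for the converse positivity claim. Your elaboration of why the spider moves at distinct black faces act independently and how the four neighboring contributions combine is just a more detailed unwinding of the same reduction.
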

\begin{proof}
The formulas follow from Theorem \ref{thm:localchange}. If $X_{m,n}>0$ for all $(m,n)\in\Z^2$, the equations yield $M_b(X_{m,n})>0$ immediately. If $M_b(X_{m,n}) >0$ for all $(m,n)\in\Z^2$, then 
$X_{m,n}=M_b(M_b (X_{m,n})) >0$.
\end{proof}

This implies that the class of circle patterns with positive face weights is preserved under Miquel dynamics. 

The class of spatially biperiodic circle patterns is also preserved by Miquel dynamics \cite{Ramassamy2017} and in that case, Miquel dynamics is integrable and one can deduce a complete set of 
invariants from the partition function of the underlying dimer model \cite{GK2013}.

\subsection{Fixed points of Miquel dynamics}

In this subsection we study the fixed points of Miquel dynamics. Given a circle pattern, one can construct its centers and given the centers, one can compute the $X$ variables. This gives three possible definitions of ``fixed point of Miquel dynamics", 
in increasing order of strength: either the collection of $X$ variables is preserved, or 
the collection of centers is preserved, or the circle pattern itself is preserved.

We first consider the case of a center being fixed by the central move. We rewrite the central move (\ref{eq:centralmove}) as follows: 
\begin{lemma}\label{lemma:quadcoeff} 
	Suppose $X:= - \frac{(u_1-u)(u_3-u)}{(u_2-u)(u_4-u)} >0$.
	If $u_1-u_2+u_3-u_4 \neq 0$, we have
	\[
	\tilde{u} = u + \frac{((u_1-u)(u_3-u)-(u_2-u)(u_4-u))^2-(u_1-u_2)(u_2-u_3)(u_3-u_4)(u_4-u_1)}{((u_1-u)(u_3-u)-(u_2-u)(u_4-u))(u_1-u_2+u_3-u_4)}
	\]
	If $u_1-u_2+u_3-u_4 = 0$, we have
	\[
	\tilde{u}= u + \frac{(u_1-u)+(u_2-u)+(u_3-u)+(u_4-u)}{2}.
	\]
\end{lemma}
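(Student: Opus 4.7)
The plan is to translate coordinates so that $u = 0$ and set $v_i := u_i - u$. Since $u_i - u_j = v_i - v_j$ and each factor $u_i - u$ becomes $v_i$, both formulas of the lemma are structurally preserved under this shift, so it suffices to prove them in the reduced form with $u = 0$ and $u_i$ replaced by $v_i$. With $u = 0$, the Vieta-type identity \eqref{eq:roots}, together with $1 + X = (v_1 v_3 - v_2 v_4)/(v_1 v_3)$, collapses the expression for $u + \tilde{u}$ to
\[
\tilde{u} \;=\; \frac{v_1 v_3\, S_2 - v_2 v_4\, S_1}{v_1 v_3 - v_2 v_4}, \qquad (\star)
\]
where $S_1 := v_1 + v_3$ and $S_2 := v_2 + v_4$.

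The degenerate case $u_1 - u_2 + u_3 - u_4 = 0$, i.e.\ $S_1 = S_2$, drops out of $(\star)$ immediately: the numerator factors as $S_1 (v_1 v_3 - v_2 v_4)$, so $\tilde u = S_1 = \tfrac{1}{2}(v_1 + v_2 + v_3 + v_4)$, which is the second assertion. For the generic case $S_1 \neq S_2$, the first formula of the lemma is equivalent, after cross-multiplication by $(v_1 v_3 - v_2 v_4)(S_1 - S_2)$, to the polynomial identity
\[
(v_1 v_3 - v_2 v_4)^2 \;-\; (v_1-v_2)(v_2-v_3)(v_3-v_4)(v_4-v_1) \;=\; (S_1 - S_2)\bigl(v_1 v_3\, S_2 - v_2 v_4\, S_1\bigr).
\]

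To verify this I would pair the difference factors as $(v_1-v_2)(v_3-v_4) = (v_1 v_3 + v_2 v_4) - (v_1 v_4 + v_2 v_3)$ and $(v_2-v_3)(v_4-v_1) = (v_1 v_3 + v_2 v_4) - (v_1 v_2 + v_3 v_4)$, then use the two auxiliary identities $(v_1 v_4 + v_2 v_3) + (v_1 v_2 + v_3 v_4) = S_1 S_2$ and $(v_1 v_4 + v_2 v_3)(v_1 v_2 + v_3 v_4) = S_1^2 v_2 v_4 + S_2^2 v_1 v_3 - 4 v_1 v_2 v_3 v_4$ (the second follows from $v_i^2 + v_j^2 = (v_i + v_j)^2 - 2 v_i v_j$). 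Combined with $(v_1 v_3 - v_2 v_4)^2 = (v_1 v_3 + v_2 v_4)^2 - 4 v_1 v_2 v_3 v_4$, all terms in $v_1 v_2 v_3 v_4$ cancel, and the remainder regroups as $v_1 v_3 S_2 (S_1 - S_2) - v_2 v_4 S_1 (S_1 - S_2)$, matching the right-hand side. The only real obstacle is the bookkeeping in this fourth-order expansion, but the substitution $S_1, S_2$ preserves the evident $v_1 \leftrightarrow v_3$ and $v_2 \leftrightarrow v_4$ symmetries and reduces the identity to a short sequence of quadratic manipulations rather than a twenty-four monomial grind.
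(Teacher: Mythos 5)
Your proof is correct and takes the route the paper intends: the lemma is stated there without proof as a direct algebraic rewriting of the central move via the Vieta relations \eqref{eq:roots}, and your normalization $u=0$ together with the polynomial identity $(S_1-S_2)(v_1v_3S_2-v_2v_4S_1)=(v_1v_3-v_2v_4)^2-(v_1-v_2)(v_2-v_3)(v_3-v_4)(v_4-v_1)$ checks out, including the degenerate case $S_1=S_2$. One harmless remark: your step $1+X=(v_1v_3-v_2v_4)/(v_1v_3)$ uses the convention $X=-v_2v_4/(v_1v_3)$ of \eqref{eq:roots} rather than the reciprocal convention appearing in the lemma's own hypothesis; since each is positive exactly when the other is, and all you need from positivity is $v_1v_3\neq v_2v_4$ so that the quadratic is nondegenerate, the argument is unaffected.
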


Recall that a tangential quadrilateral is a quadrilateral with an incircle, i.e. a circle tangent to the extended lines of the four sides. The incircle is unique if it exists. In this case, the center of the incircle is the intersection of bisectors of interior angles at the four corners.

\begin{proposition}\label{prop:localfix}
	Suppose $X= - \frac{(u_2-u)(u_4-u)}{(u_1-u)(u_3-u)} > 0$. Then the quadratic equation \eqref{eq:quadeq} has a repeated root $u = \tilde{u}$ if and only if $u_1 u_2 u_3 u_4$ forms a tangential quadrilateral with an incircle centered at $u$.  
\end{proposition}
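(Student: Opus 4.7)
The plan is first to derive an algebraic condition equivalent to $u = \tilde u$, then to interpret it geometrically via a Möbius inversion centered at $u$.

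Applying Vieta's formulas to the quadratic in \eqref{eq:quadeq}, the sum of its two roots is $((u_2+u_4)+X(u_1+u_3))/(1+X)$. Since $u$ is already one root, demanding the sum equal $2u$ (so that the other root coincides with $u$) and simplifying gives the relation
\[
\frac{1}{u_1-u} + \frac{1}{u_3-u} = \frac{1}{u_2-u} + \frac{1}{u_4-u}. \qquad (*)
\]
Setting $v_i := 1/(u_i-u)$---equivalently applying the Möbius map $f(z) := 1/(z-u)$, which sends $u \mapsto \infty$---the condition $(*)$ reads $v_1+v_3 = v_2+v_4$: the four images form a parallelogram.

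For the direction \emph{tangential implies double root}, suppose $u$ is the incenter of $u_1u_2u_3u_4$ with inradius $\rho$. Under $f$, the incircle maps to $\Gamma := \{|w|=1/\rho\}$, and each side (a line tangent to the incircle) maps to a circle $C_i$ passing through $0 = f(\infty)$ and tangent to $\Gamma$. A short check shows that any such $C_i$ has radius $1/(2\rho)$, with its center $c_i$ on the circle of radius $1/(2\rho)$ about $0$. Since $C_{i-1}$ and $C_i$ are equal-radius circles passing through both $0$ and $v_i$, the midpoint of the common chord $\{0, v_i\}$ coincides with the midpoint $(c_{i-1}+c_i)/2$ of their centers, giving $v_i = c_{i-1}+c_i$. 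Summing cyclically,
\[
v_1 + v_3 = c_4+c_1+c_2+c_3 = v_2+v_4,
\]
which is $(*)$.

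For the converse, assume $(*)$ and $X>0$; then the $v_i$ form a parallelogram and $X = -v_1 v_3/(v_2 v_4) > 0$, so $v_1 v_3/(v_2 v_4)$ is a negative real number. Let $C_i$ denote the circle through $\{0, v_i, v_{i+1}\}$ with center $c_i$; since $C_i$ passes through $0$, its radius equals $|c_i|$. It suffices to show $|c_1|=|c_2|=|c_3|=|c_4|$: then the circle $|w|=2|c_1|$ is internally tangent to every $C_i$, and its preimage under $f^{-1}$ is a circle centered at $u$ tangent to all four sides of the quadrilateral---the incircle, identifying $u$ as the incenter of a tangential quadrilateral. Parametrizing $v_1 = m+p$, $v_2 = m+q$, $v_3 = m-p$, $v_4 = m-q$ with $m := (v_1+v_3)/2$, the positivity $X>0$ reads $(m^2-p^2)/(m^2-q^2) < 0$ real, and a direct computation of $|c_i|^2$ in terms of $m, p, q$ shows this is precisely the input needed to make all four circumradii coincide.

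The principal obstacle is this final algebraic verification: when $m=0$ the $\pi$-rotational symmetry $v \mapsto -v$ of the parallelogram immediately yields $|c_1|=|c_3|$ and $|c_2|=|c_4|$, but this symmetry breaks for $m\neq 0$ and the equality of all four radii genuinely depends on $X>0$. A more conceptual route appeals to the remark after Lemma~\ref{lem:solutionssquare}: the two roots of \eqref{eq:quadeq} are isogonal conjugates with respect to $u_1u_2u_3u_4$, so $\tilde u = u$ expresses that $u$ is self-isogonal with respect to $Q$---a classical characterization of the incenter of a tangential polygon.
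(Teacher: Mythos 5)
Your reduction of the double-root condition to the parallelogram identity $v_1+v_3=v_2+v_4$ (with $v_i=1/(u_i-u)$) is correct, and your proof of the direction \emph{tangential $\Rightarrow$ double root} is complete and rather elegant: the observation that each $C_i$ is a circle through $0$ internally tangent to $\Gamma$, hence of radius $1/(2\rho)$, so that $c_{i-1}+c_i=v_i$ and the cyclic sum telescopes to $(*)$, is a genuinely different and arguably cleaner argument than anything in the paper (whose written proof in fact only treats the other direction).

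The problem is the converse, \emph{double root $\Rightarrow$ tangential}, which is exactly the direction the paper proves and is the heart of the proposition. You reduce it to the claim that the parallelogram condition together with $v_1v_3/(v_2v_4)\in\R_{<0}$ forces $|c_1|=|c_2|=|c_3|=|c_4|$, but you do not prove this claim: you only verify the special case $m=0$ (where the central symmetry gives $|c_1|=|c_3|$ and $|c_2|=|c_4|$ for free) and then state that ``a direct computation shows'' the general case, explicitly flagging it as the principal obstacle. That computation is not routine --- one must show that three real equalities of circumradii follow from the one real constraint $\Im\bigl(v_1v_3\overline{v_2v_4}\bigr)=0$ together with the two real constraints of the parallelogram identity, and the sign $X>0$ must enter to exclude ``excircle''-type configurations. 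The fallback via isogonal conjugacy does not close the gap either: the paper's remark on isogonal conjugates is itself stated without proof, and a self-isogonal point is a priori only characterized by $uu_i$ being the internal \emph{or external} bisector at each $u_i$, so one still needs $X>0$ to single out the incenter. For comparison, the paper closes this direction directly: substituting the double-root condition from Lemma~\ref{lemma:quadcoeff} into $X$ yields $\frac{(u_1-u_2)(u_3-u_2)}{(u-u_2)^2}=1+X^{-1}>0$, which says precisely that $uu_2$ is the internal angle bisector at $u_2$, and cyclically at the other three corners; importing that computation (or an equivalent one in your $v$-coordinates) is what your argument is missing.
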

\begin{proof}
	If $u_1-u_2+u_3-u_4 =0$, then setting $\tilde{u}=u$ in Lemma \ref{lemma:quadcoeff} implies 
	  \[
	u= \frac{u_1+u_3}{2} = \frac{u_2+u_4}{2}.
	\]
	is the center of the parallelogram $u_1 u_2 u_3 u_4$. Furthermore $X$ being positive implies the parallelogram is a rhombus whose incircle is centered at $u$.

   Otherwise, we assume $u_1-u_2+u_3-u_4 \neq 0$. Lemma \ref{lemma:quadcoeff} implies the segment $u\, u_i$ is an angle bisector of $\angle u_{i-1}u_i u_{i+1}$. To see this, setting $\tilde{u}=u$ in Lemma  \ref{lemma:quadcoeff} yields
   \[
   u_4 = -\frac{(u - u_1) (u - u_2) (u - u_3)}{ (u-u_2)^2 - (u_3-u_2)(u_1-u_2)} + u
   \]
   Substituting it into $X$, we have
   \[
   X= - \frac{(u_2-u)(u_4-u)}{(u_1-u)(u_3-u)} = \frac{(u-u_2)^2}{-(u-u_2)^2 + (u_3-u_2)(u_1-u_2)}.
   \]
   Since $X$ is positive, we can deduce $\frac{(u_3-u_2)(u_1-u_2)}{(u-u_2)^2}=\frac1X+1$ is positive as well and thus the segment $u\, u_2$ is an angle bisector of $\angle u_1 u_2 u_3$. Similarly we can deduce that $u$ lies on the angles bisectors of the other three corners and hence there is an incircle tangent to $u_1 u_2 u_3 u_4$ and centered at $u$. 
\end{proof}

We can now characterize the centers that are preserved under the Miquel dynamics.

\begin{theorem}
\label{thm:fixedcenters}
The centers of a circle pattern 
are preserved under Miquel dynamics if and only if they are also the centers of some circle pattern where diagonal circles are tangential, i.e. the circles centered at $u_{m,n}$ and $u_{m+1,n+1}$ are tangential, and the circles centered
at $u_{m,n}$ and $u_{m-1,n+1}$ are tangential. 
\end{theorem}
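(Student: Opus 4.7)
The proof reduces, via Proposition~\ref{prop:localfix} applied at every face, to the local condition $(\ast)$: for each face $(m,n)$, the quadrilateral $Q_{m,n}$ with cyclic vertices $u_{m-1,n},u_{m,n+1},u_{m+1,n},u_{m,n-1}$ admits an incircle centered at $u_{m,n}$. The theorem then amounts to showing that $(\ast)$ is equivalent to the existence of a circle pattern with the given centers and tangent diagonal circles.

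For the direction ``tangent-diagonal pattern $\Rightarrow(\ast)$'', fix a face $(m,n)$ with radii $r_{m,n}$ and primal corners $v_{NW}, v_{NE}, v_{SE}, v_{SW}$. Tangency of $C_{m-1,n}$ and $C_{m,n+1}$ at $v_{NW}$ puts $v_{NW}$ on the segment $u_{m-1,n}u_{m,n+1}$, which is the NW side of $Q_{m,n}$; similarly each of the other three corners lies on its matching side. From the corner $u_{m,n+1}$ of $Q_{m,n}$, both adjacent tangency points $v_{NW}$ and $v_{NE}$ are at distance $r_{m,n+1}$, so the triangle $u_{m,n+1}v_{NW}v_{NE}$ is isoceles and its apex-angle bisector coincides with the perpendicular bisector of $v_{NW}v_{NE}$. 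Since $|u_{m,n}v_{NW}|=|u_{m,n}v_{NE}|=r_{m,n}$, the center $u_{m,n}$ also lies on that perpendicular bisector, hence on the angle bisector of $Q_{m,n}$ at $u_{m,n+1}$. Running the same argument at all four corners places $u_{m,n}$ on all four angle bisectors, so it is the incenter.

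For the converse, I would reconstruct the pattern explicitly. The Pitot identity at each $Q_{m,n}$, implied by $(\ast)$, is equivalent to the global compatibility of the diagonal-tangency equations $r_{m,n}+r_{m+1,n+1}=|u_{m,n}-u_{m+1,n+1}|$ and $r_{m,n}+r_{m-1,n+1}=|u_{m,n}-u_{m-1,n+1}|$ around any closed loop in $\Z^2$. For each primal vertex $v$ shared by four faces, I would define $z_v$ as the intersection of the two diagonals of the 4-center quadrilateral around $v$; by construction $z_v$ lies on the line joining each diagonally opposite pair of centers, so setting $r_{m,n}:=|u_{m,n}-z_v|$ makes the corresponding diagonal circles tangent at $z_v$. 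The crucial point to verify is that $|u_{m,n}-z_v|$ is the same for all four corners $v$ of face $(m,n)$; once that equidistance is established, the four corner points lie on a common circle around $u_{m,n}$, and the reflection property needed for Proposition~\ref{angles} follows because any two adjacent corners of face $(m,n)$ are the two intersection points of $C_{m,n}$ with a neighboring face circle, hence reflections across the corresponding dual edge.

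The hard part will be the equidistance step in the converse. Pitot alone yields only a one-parameter family of tangent-radius assignments; pinning the four diagonal-intersection points around a given face to a common circle around $u_{m,n}$ requires exploiting the stronger incircle-at-$u_{m,n}$ condition jointly with those at the four diagonal neighbors $u_{m\pm1,n\pm1}$, via a careful diagonal-diagram chase.
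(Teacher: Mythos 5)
Your reduction via Proposition~\ref{prop:localfix} and your overall architecture match the paper's proof: the same construction of $z_v$ as the intersection of the diagonals of the quadrilateral of four centers around a primal vertex, and the same final observation that tangency follows because $z_v$ lies on the segment joining the diagonal centers at distance equal to each radius. Your argument for the direction ``tangent-diagonal pattern $\Rightarrow$ incircle condition'' (via the isoceles triangles at each corner of $Q_{m,n}$) is correct, and is in fact more explicit than the paper, which leaves that direction essentially implicit.

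However, the equidistance step that you flag as ``the hard part'' and leave open is precisely the content of the paper's proof, so as written your argument has a genuine gap. The paper closes it with a short reflection argument, and the mechanism is not the ``diagonal-diagram chase'' involving the incircle conditions at the diagonal neighbours $u_{m\pm1,n\pm1}$ that you anticipate: only the incircle conditions at the two \emph{edge-adjacent} centers are needed. Concretely, let $z_r = u_{m,n}u_{m+1,n+1}\cap u_{m,n+1}u_{m+1,n}$ and $z_l = u_{m,n}u_{m-1,n+1}\cap u_{m,n+1}u_{m-1,n}$. The incircle condition at $u_{m,n}$ says that the line $u_{m,n}u_{m,n+1}$ bisects the angle of $Q_{m,n}$ at $u_{m,n+1}$, so reflection across that line swaps the rays $u_{m,n+1}u_{m+1,n}$ and $u_{m,n+1}u_{m-1,n}$; the incircle condition at $u_{m,n+1}$ says that the same line bisects the angle of $Q_{m,n+1}$ at $u_{m,n}$, so the reflection also swaps the rays $u_{m,n}u_{m+1,n+1}$ and $u_{m,n}u_{m-1,n+1}$. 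Hence $z_r$ and $z_l$ are mirror images across $u_{m,n}u_{m,n+1}$, and in particular equidistant from $u_{m,n}$. Repeating this at all four corners of the face gives the common radius, and the same reflection statement simultaneously supplies the cyclic-face (circle pattern) property via Proposition~\ref{angles}. With this lemma inserted in place of your placeholder, your proof is complete.
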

\begin{proof}
	Suppose the centers $u:F(\mathbb{Z}^2) \to \mathbb{C}$ are fixed under the central move. We define $z: \mathbb{Z}^2 \to \mathbb{C}$ to be the intersection of the diagonals in each elementary 
	quadrilateral  $u_{m,n} u_{m+1,n+1} \cap u_{m+1,n} u_{m,n+1}$. 
	We claim the faces of $z$ are cyclic, centered at the $u$'s.  
	To see this it suffices to show that $z_r := u_{m,n} u_{m+1,n+1} \cap  u_{m,n+1} u_{m+1,n}$ is the image of $z_l := u_{m,n} u_{m-1,n+1} \cap u_{m,n+1} u_{m-1,n} $ under the reflection across $u_{m,n} u_{m,n+1}$ (See Figure \ref{fig:inquad}). Indeed, Proposition  \ref{prop:localfix} implies that $u_{m,n}$ is at the center of the inscribed circle of the quadrilateral
	$u_{m+1,n} u_{m,n+1} u_{m-1,n} u_{m,n-1}$. It yields that under the reflection the 
	ray through $u_{m,n+1}$ and $u_{m+1,n}$ is the image of the 
	ray through $u_{m,n+1}$ and $u_{m-1,n}$. By considering a nearby quadrilateral similarly, we can deduce the 
	ray through from $u_{m,n}$ and $u_{m+1,n+1}$ is the image of the 
	ray through $u_{m,n}$ and $u_{m-1,n+1}$.
	
Now consider the circles defined by the faces of the $z$'s; this is a \emph{new} set of circles centered at the $u$'s
(not those inscribed in the quadrilaterals). 
We claim that the diagonal circles are tangent to each other. To see this, consider the point $z_r$ which is the
intersection of the diagonals $u_{m,n} u_{m+1,n+1} \cap u_{m+1,n} u_{m,n+1}$. 
Notice that the distance between $z_r$ and $u_{m,n}$ is the radius of the circle at $u_{m,n}$ and 
similarly the distance between $z_r$ and $u_{m+1,n+1}$ is the radius of the circle at $u_{m+1,n+1}$. 
Since $z_r$ lies on the line joining $u_{m,n}$ and $u_{m+1,n+1}$, which is perpendicular to both circles, 
the opposite circles are tangential.
\end{proof}
\begin{figure}[h!]
\begin{tikzpicture}[line cap=round,line join=round,>=triangle 45,x=1.0cm,y=1.0cm]
\clip(-0.8,-0.1) rectangle (5.7,4.8);
\draw [line width=2.pt,dash pattern=on 5pt off 5pt] (3.,2.)-- (5.,4.);
\draw [line width=2.pt,dash pattern=on 5pt off 5pt] (3.,4.)-- (5.,2.);
\draw [line width=2.pt,dash pattern=on 5pt off 5pt] (3.,2.)-- (1.,4.);
\draw [line width=2.pt,dash pattern=on 5pt off 5pt] (3.,4.)-- (1.,2.);
\draw (2.6099248083708937,2.0159671446422041) node[anchor=north west] {$u_{m,n}$};
\draw (2.29051756905672246,0.95118090910435026) node[anchor=north west] {$u_{m,n-1}$};
\draw (3.50764130907734184,2.306236263982005) node[anchor=north west] {$u_{m+1,n}$};
\draw (2.29051756905672246,4.670586901763068) node[anchor=north west] {$u_{m,n+1}$};
\draw (4.033435932658721,4.670586901763068) node[anchor=north west] {$u_{m+1,n+1}$};
\draw (0.0119099759871242,4.670586901763068) node[anchor=north west] {$u_{m-1,n+1}$};
\draw (1.1061846719203975,2.306236263982005) node[anchor=north west] {$u_{m-1,n}$};
\draw (4.1018328475200375,3.3250146680112197) node[anchor=north west] {$z_r$};
\draw (1.3434365259683686,3.3250146680112197) node[anchor=north west] {$z_l$};
\draw (4.033435932658721,0.5584504437609059) node[anchor=north west] {$u_{m+1,n-1}$};
\draw (0.0119099759871242,0.5584504437609059) node[anchor=north west] {$u_{m-1,n-1}$};
\draw [line width=2.pt,dash pattern=on 5pt off 5pt] (5.,2.)-- (3.,0.);
\draw [line width=2.pt,dash pattern=on 5pt off 5pt] (3.,0.)-- (1.,2.);
\begin{scriptsize}
\draw [fill=black] (3.,2.) circle (2.0pt);
\draw [fill=black] (5.,4.) circle (2.0pt);
\draw [fill=black] (5.,2.) circle (2.0pt);
\draw [fill=black] (3.,4.) circle (2.0pt);
\draw [fill=black] (1.,4.) circle (2.0pt);
\draw [fill=black] (1.,2.) circle (2.0pt);
\draw [fill=black] (3.,0.) circle (2.0pt);
\draw [fill=black] (2.,3.) circle (2.0pt);
\draw [fill=black] (4.,3.) circle (2.0pt);
\draw [fill=black] (5.,0.) circle (2.0pt);
\draw [fill=black] (1.,0.) circle (2.0pt);
\end{scriptsize}
\end{tikzpicture}
	\caption{$z_r$ is the reflection of $z_l$ with respect to the ray through $u_{m,n}$ and $u_{m,n+1}$.}\label{fig:inquad}
\end{figure}

A particular case where opposite circles are tangential is the case of 
circle patterns with constant intersection angles. For a circle pattern $z: V(\mathbb{Z}^2) \to \mathbb{C}$, one can measure the intersection angle $\theta: E(\mathbb{Z}^2) \to [0, \pi]$ between neighboring circles. We say a circle pattern has \emph{constant horizontal and vertical intersection angles} if there exists $\alpha \in [0, \pi]$ such that
\[
\theta = \begin{cases}
\alpha &\quad \text{along vertical edges} \\
\pi - \alpha  &\quad \text{along horizontal edges}
\end{cases}
\]
It is an \emph{orthogonal circle pattern} if $\alpha = \pi/2$, see \cite{Schramm1997}. An example of a circle pattern with constant intersection angles $\alpha$ is that of a regular rectangular grid with rectangles of width $1$ and height $\cot\alpha$.

If a circle pattern has the same centers as that of a circle pattern with constant intersection angles, then its intersection points undergo rigid rotation under Miquel dynamics:
\begin{corollary}
Suppose $u:\Z^2\to\C$ gives the centers of a circle pattern with constant intersection angle $\alpha$ and $z$ is some 
other circle pattern with $u$ as centers. Then the orbit of every intersection point $z$ of the circle pattern lies on a circle,
and $M_b\circ M_w$ rotates the point around the circle by angle $2\alpha$.
\end{corollary}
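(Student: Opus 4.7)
The plan is to translate Miquel dynamics into a composition of two reflections whose axes are determined purely by the center configuration $u$, and which therefore remain fixed along the entire orbit. Let $z^*$ denote the given circle pattern with centers $u$ and constant intersection angle. First I would verify that $u$ is invariant under Miquel dynamics: at each vertex $v$ of $\Z^2$ the four circles of $z^*$ meet with successive tangent-line angles $\alpha,\pi-\alpha,\alpha,\pi-\alpha$, so two diagonally opposite circles at $v$ have tangent directions at $z^*_v$ differing by $\alpha+(\pi-\alpha)=\pi$; thus each diagonal pair of circles is tangent at $z^*_v$. By Theorem~\ref{thm:fixedcenters}, this forces $u$ to be preserved by Miquel dynamics, so for any $z$ with centers $u$ every iterate $(M_b\circ M_w)^n(z)$ also has centers $u$.

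Next I would analyze the action at a single vertex $v\in V(\Z^2)$. Let $u_{W_1},u_{W_2}$ (resp.\ $u_{B_1},u_{B_2}$) be the centers of the two white (resp.\ black) circles through $v$, and set $\ell_W:=\overline{u_{W_1}u_{W_2}}$ and $\ell_B:=\overline{u_{B_1}u_{B_2}}$. The two intersection points of any two distinct circles are mirror images of each other across the line through their centers; hence $M_w$ sends $z_v$ to its reflection in $\ell_B$, and $M_b$ then sends the result to its reflection in $\ell_W$. Since $u$ is invariant, the axes $\ell_W$ and $\ell_B$ do not change between the two moves. Consequently $M_b\circ M_w$ acts at $v$ as the composition of reflections in $\ell_B$ followed by $\ell_W$, i.e.\ as a rotation about $\ell_W\cap\ell_B$ by twice the oriented angle from $\ell_B$ to $\ell_W$.

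It remains to identify this rotation center and angle. In the reference pattern $z^*$, the tangency of the diagonal white pair at $z^*_v$ forces $z^*_v\in\ell_W$ (the tangency point of two tangent circles lies on their line of centers), and similarly $z^*_v\in\ell_B$; thus $\ell_W\cap\ell_B=\{z^*_v\}$. The common tangent line to a pair of tangent circles at the tangency point is perpendicular to the line of their centers, so the angle between $\ell_W$ and $\ell_B$ equals the angle at $z^*_v$ between a tangent line to a white circle and a tangent line to a black circle, which is the intersection angle $\alpha$. Therefore $(M_b\circ M_w)|_v$ is rotation about $z^*_v$ through angle $2\alpha$, and the orbit of $z_v$ lies on the circle of radius $|z_v-z^*_v|$ centered at $z^*_v$.

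The only delicate point is ensuring that the reflection axes $\ell_W,\ell_B$ do not drift between successive Miquel moves, which is precisely what invariance of $u$ (guaranteed by Theorem~\ref{thm:fixedcenters} under the constant-intersection-angle hypothesis) provides; everything else reduces to recognizing each single Miquel move as an elementary reflection across a line of circle centers and computing the angle of two diagonals in the four-center quadrilateral around $v$.
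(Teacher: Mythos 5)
Your proof is correct and follows essentially the same route as the paper's: each mutation is a reflection of $z_v$ across a diagonal of the quadrilateral of centers, the two diagonals meet at the tangency point $z^*_v$ of the reference constant-angle pattern at angle $\alpha$, and the composition of the two reflections is a rotation by $2\alpha$ about that point. You are somewhat more explicit than the paper in justifying that the centers (hence the reflection axes) are invariant via Theorem~\ref{thm:fixedcenters} and that the diagonals meet at $z^*_v$ at angle $\alpha$, but the underlying argument is the same.
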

\begin{proof}
	For each elementary quad $Q=u_{m,n}u_{m+1,n}u_{m+1,n+1}u_{m,n+1}$, the diagonals intersect at angle $\alpha$. We denote $\tilde{z}_0$ the intersection of the diagonals while $z_0$ is the intersection of the circles at $u_{m,n}u_{m+1,n}u_{m+1,n+1}u_{m,n+1}$. Generally, $z_0 \neq \tilde{z}_0$ unless $z$ is the circle pattern of constant intersection angle. Applying a Miquel's move $M_w$ once, $\tilde z_0$ is fixed while $z_0$ is reflected across one of the diagonals to $M_{w}(z_0)$. Applying a black mutation $M_b$, the point is reflected across the other diagonal. In both cases, the distance to $\tilde{z}$ is preserved. Hence the orbit of $z_0$ lies on a circle centered at $\tilde{z}_0$.
	
Thus $z_0$ is reflected successively across two lines (emanating from $\tilde z_0$) meeting at angle $\alpha$; two such reflections define a rotation
around $\tilde z_0$ of angle $2\alpha$.
\end{proof}

\begin{corollary}
	A circle pattern is preserved under Miquel dynamics if and only if it has constant horizontal and vertical intersection angles.
\end{corollary}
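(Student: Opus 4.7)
The plan is to exploit the geometric fact that Miquel moves act as reflections on the vertices: $M_b$ sends a vertex $v$ to the other intersection of the two white circles incident to $v$, and that other intersection is exactly the reflection of $v$ across the line joining the two white circle centers. Denoting by $Q_v$ the convex quadrilateral formed by the four circle centers of the faces adjacent to $v$, this line of reflection is the ``white diagonal'' of $Q_v$ (joining its two diagonally opposite white corners), and symmetrically $M_w$ reflects $v$ across the ``black diagonal'' of $Q_v$. With this interpretation, preservation of the whole circle pattern by $M_b$ is equivalent to preservation of every individual black circle, which becomes, vertex by vertex, a geometric condition on the placement of $v$ inside $Q_v$.

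For the $(\Leftarrow)$ direction, I would start by noting that the intersection angle at $v$ of two circles meeting there equals the angle at $v$ between their two radii. Walking cyclically around $v$, the four consecutive angles between the rays from $v$ to the corners of $Q_v$ are therefore exactly the four edge intersection angles at $v$. Under the hypothesis of vertical angle $\alpha$ and horizontal angle $\pi-\alpha$, these four angles read $\alpha,\pi-\alpha,\alpha,\pi-\alpha$ in cyclic order. Since each pair of consecutive angles sums to $\pi$, the two black centers are collinear with $v$, and similarly for the two white centers. Hence $v$ lies on both diagonals of $Q_v$, i.e.\ at their intersection, so both $M_b$ and $M_w$ fix $v$. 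As all vertices are fixed, the circle pattern is preserved.

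For the $(\Rightarrow)$ direction, fix a black circle $B$ with center $b$ and any vertex $v$ on $B$. Preservation of $B$ under $M_b$ forces $|b-v| = |b-M_b(v)|$, so $b$ lies on the perpendicular bisector of $\overline{v\,M_b(v)}$, which is the white diagonal of $Q_v$---unless $v = M_b(v)$, in which case $v$ already lies on that diagonal. The first option is impossible because $b$ is a corner of the convex quadrilateral $Q_v$ and a corner of a convex quadrilateral never lies on the opposite diagonal; hence $v$ lies on the white diagonal of $Q_v$. By the symmetric argument using $M_w$, $v$ also lies on the black diagonal of $Q_v$, so $v$ is the intersection of both diagonals. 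Reversing the angle computation from the $(\Leftarrow)$ direction, the four intersection angles at $v$ must then be $\alpha_v,\pi-\alpha_v,\alpha_v,\pi-\alpha_v$, where $\alpha_v$ is the angle between the diagonals. Finally, since the intersection angle of two circles is intrinsic to the pair and agrees at both their intersection points, $\alpha_v = \alpha_{v'}$ whenever $v,v'$ are adjacent, so by connectedness $\alpha_v$ is a global constant.

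The principal obstacle will be dispatching the degenerate branch ``$b$ lies on the opposite diagonal'' in the $(\Rightarrow)$ direction; this reduction uses convexity of $Q_v$, which is supplied by the convex-dual-embedding property of the circle patterns under consideration, established in Section~\ref{sectionbiperiodic}. Apart from that point, the proof is a bookkeeping of two reflections combined with the elementary identity ``intersection angle equals angle between radii''.
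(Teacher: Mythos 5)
Your proof is correct and uses exactly the mechanism the paper relies on: the paper gives no separate proof of this corollary, deriving it from the preceding corollary's observation that $M_w$ and $M_b$ reflect each vertex across the two diagonals of the quadrilateral of adjacent centers, so the pattern is fixed precisely when every vertex sits at the diagonal intersection, which is equivalent to the alternating $\alpha,\pi-\alpha$ angle condition. Your write-up just makes explicit the two details the paper leaves implicit, namely the nondegeneracy of the convex quadrilateral $Q_v$ in the forward direction and the propagation of $\alpha_v$ to a global constant via shared edges.
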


In the infinite planar case, the intersection angles between neighboring circles do not determine the pattern, hence for each value of $\alpha\in[0,\pi]$, there exists a large class of circle patterns with constant intersection angles equal to $\alpha$ for vertical edges. This class was studied in \cite{Schramm1997} in the particular case when $\alpha=\tfrac{\pi}{2}$.
In the case of spatially biperiodic patterns of prescribed periods, or equivalently, circle patterns on a flat torus, 
the intersection angles do characterize the circle pattern up to similarity \cite{Bobenko2004}, 
so that the only spatially biperiodic diagonally tangent patterns are those corresponding to 
regular rectangular grids (all the columns have the same width and all the rows have the same height). 
Hence in the spatially biperiodic case, the regular rectangular grids are the only fixed points of Miquel dynamics, 
seen either as a dynamics on circle patterns or on circle centers. The $X$ variables of a given such fixed point are the same for all the faces (equal to the squared aspect ratio of the rectangle formed by a face).

For general, not necessarily biperiodic patterns we have the following.

\begin{proposition}
	Suppose a circle pattern has face weights $X:F(\mathbb{Z}^2) \to \mathbb{R}_{>0}$. If the centers are preserved under Miquel dynamics, then
	\begin{equation} \label{eq:fixX}
	X^2_{m,n} = \frac{(1+X^{-1}_{m+1,n})(1+X^{-1}_{m-1,n})}{(1+X_{m,n+1})(1+X_{m,n-1})}.
	\end{equation}
\end{proposition}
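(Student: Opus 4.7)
The plan is to translate the center-preservation hypothesis into a local geometric identity at each face, and then combine these identities over the four neighbors of $(m,n)$ to extract the claimed algebraic relation among the $X$-variables.

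First, center-preservation means that at every face the central move~\eqref{eq:quadeq} has $u$ as a double root. Substituting $\tilde u = u$ into the factored identities~\eqref{eq:chX} derived in the proof of Theorem~\ref{thm:localchange} gives, at each face and each cyclic index $i$ of a neighbor $u_i$,
\[
(u - u_i)^2\bigl(1 + X^{\varepsilon(i)}\bigr) \;=\; (u_{i+1} - u_i)(u_{i-1} - u_i),
\]
with $\varepsilon(i) = +1$ for odd $i$ and $\varepsilon(i) = -1$ for even $i$ in the bipartite cyclic ordering around that face.

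Second, for each dual edge joining $u_{m,n}$ to one of its neighbors $u_{m',n'}$, the squared distance $(u_{m,n} - u_{m',n'})^2$ admits two expressions via this identity: one from the face $(m,n)$ (at the cyclic position of $u_{m',n'}$), and one from the face $(m',n')$ (at the cyclic position of $u_{m,n}$). Because the bipartite cyclic orderings around two adjacent faces are shifted by one, the shared neighbor sits at indices of opposite parity in the two orderings. Equating the two right-hand sides produces a formula for the ratio $(1 + X_{m',n'}^{\pm 1})/(1 + X_{m,n}^{\mp 1})$ as an explicit ratio of products of center differences.

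Third, I would multiply these four identities, one per dual edge incident to $(m,n)$. After bookkeeping, the numerators collect into a product over the four diagonal differences $u_{m\pm 1, n\pm 1} - u_{m,n}$, and the denominators into a product over the cross differences $u_{m\pm 1, n} - u_{m, n\pm 1}$; these two products agree up to the sign $(-1)^4 = 1$, so the geometric factor collapses to $1$, leaving a purely algebraic identity among $X_{m,n}$ and the $X$-values of its four neighbors, which rearranges via $(1+X^{-1})/(1+X) = X^{-1}$ into the claim. The main obstacle is the bookkeeping of cyclic parities in the bipartite setup, which dictates whether each factor enters as $(1+X)$ or $(1+X^{-1})$, together with the sign-counting that makes the geometric ratio collapse; once these are settled, the algebraic conclusion is immediate.
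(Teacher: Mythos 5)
Your steps 1 and 2 are sound: the center-preservation hypothesis does give the double root $\tilde u=u$ at every face, substituting into \eqref{eq:chX} yields $(u-u_i)^2(1+X^{\varepsilon(i)})=(u_{i+1}-u_i)(u_{i-1}-u_i)$, and the shared neighbor does sit at opposite parities in the two adjacent faces' cyclic orders. The gap is in step 3. If you literally multiply the four edge-identities, each diagonal difference $u_{m\pm1,n\pm1}-u_{m,n}$ occurs twice in the numerator (once from a horizontal edge and once from a vertical one) and each cross difference $u_{m\pm1,n}-u_{m,n\pm1}$ occurs twice in the denominator (with opposite signs), so the geometric factor equals $\prod(u_{m\pm1,n\pm1}-u_{m,n})^2\big/\prod(u_{m\pm1,n}-u_{m,n\pm1})^2$. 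The two products you claim agree up to $(-1)^4$ are the squared \emph{different} diagonals of the four elementary quadrilaterals of centers around $(m,n)$; they are not equal for a generic configuration, and the resulting $X$-identity would in any case carry $(1+X_{m,n\pm1})$ in the numerator, so it cannot rearrange to \eqref{eq:fixX}. The fix is to take the alternating combination: multiply the two identities from the horizontal dual edges and \emph{divide} by the two from the vertical ones. Then every diagonal difference appears once upstairs and once downstairs and cancels exactly, every cross difference appears once upstairs and once downstairs with opposite sign (contributing $(-1)^4=1$), and the $X$-side reads
\[
\frac{(1+X^{-1}_{m+1,n})(1+X^{-1}_{m-1,n})}{(1+X_{m,n+1})(1+X_{m,n-1})}\cdot\left(\frac{1+X^{-1}_{m,n}}{1+X_{m,n}}\right)^{2}=1,
\]
which is \eqref{eq:fixX} after applying $(1+X^{-1})/(1+X)=X^{-1}$. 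With that correction your argument closes.

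Note also that the paper's own proof is much shorter and takes a different route: since the centers are unchanged while the black/white labeling of the vertices switches after a mutation, the new face weights computed from the same centers are the reciprocals of the old ones, so $M_b(X_{m,n})=X_{m,n}^{-1}$ holds at white faces as well; equating this with the mutation formula of Proposition~\ref{prop:changeofX} gives \eqref{eq:fixX} in one line. Your approach re-derives the same relation directly from the geometry of the centers, which is more work but independent of the mutation formalism.
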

\begin{proof}
	This follows from Proposition \ref{prop:changeofX} and the fact that the labeling of the black and white vertices switches.
\end{proof}

We showed that the centers of a circle pattern with constant intersection angles are fixed by the central moves and hence their $X$ variables satisfy Equation \eqref{eq:fixX}. The converse might not be true. For example the rectangular grid with even (resp. odd) columns of width $1$ (resp. $2$) and even (resp. odd) rows of height $1$ (resp. $2$) has all $X$ variables equal to $1$, hence satisfies Equation \eqref{eq:fixX}, but it is not diagonally tangent hence not a fixed point of Miquel dynamics. It would be interesting to find all such examples. 
\medskip

\noindent{\bf Question:} Characterize circles patterns with $X$ variables satisfying  Equation \eqref{eq:fixX}.

\subsection{Integrals of motion for Miquel dynamics}

Miquel dynamics seen as a dynamics on circle centers on an $m$ by $n$ square grid with $m$ and $n$ even
 on the torus corresponds to the dimer urban renewal dynamics on the same graph, which is a finite-dimensional integrable system \cite{GK2013}. It follows from \cite{GK2013} that the spectral curve of the dimer model associated with the successive collections of circle centers is kept invariant by Miquel dynamics. 
The integrals of motion of the dimer dynamics have an interpretation in terms of partition functions for dimer configurations with a prescribed homology (the coefficients of the polynomial used to define the spectral curve) and it would be interesting to find a geometric interpretation (in terms of circle patterns) of all these integrals of motion.

It was shown in \cite{Ramassamy2017} that the sum along any zigzag loop of intersection angles of circles is an integral of motion. This sum can actually be rewritten as the sum of the turning angles along a dual zigzag loop, which is equal to twice the argument of the alternating product along a primal zigzag loop of the associated complex edge weights. In Section \ref{sectionbiperiodic}, we associated to each collection of circle centers on the torus a point on the spectral curve of the associated dimer model. It follows from Theorem~\ref{thm:correspondence} and the conservation of the sum of angles along zigzag loops that this point on the spectral curve is kept invariant under Miquel dynamics.

\section{From planar networks to circle patterns}
\label{sec:networks}

\subsection{Harmonic embeddings of planar networks}

A \emph{circular planar network} is an embedded planar graph $\G=(V(\G), E(\G),F(\G))$, with a distinguished
subset ${B\subset V(\G)}$
of vertices on the outer face called boundary vertices, and with a conductance function $c:{ E(\G)}\to\R_{>0}$ on edges. 
Associated to this data is a Laplacian operator
$\Delta:\C^{{ V(\G)}}\to\C^{{ V(\G)}}$ defined by
$$\Delta h(v)  = \sum_{w\sim v} c_{vw}(h(v)-h(w)).$$
An embedding $h:{ V(\G)}\to\C$ is \emph{harmonic} if $\Delta h(v)=0$ for $v\in { V(\G)}\setminus B$. 
Harmonic embedding of circular planar networks arise in 
various contexts, e.g. resistor networks, equilibrium stress
configurations, and random walks \cite{Biskup11}.

Let $\G$ be a circular planar network, with boundary consisting of
vertices $B=\{v_1,\dots,v_k\}$ on the outer face. We define an augmented dual $\G^*$ to $\G$ in a similar way as for bipartite graphs, except that the intermediate
graph $\hat\G$ (whose dual is $\G^*$) has an edge to $v_\infty$ only from each boundary vertex of $\G$. 
Thus $\G^*$ is also a circular
planar network with boundary $B^*$ consisting of $k=|B|$ vertices, one between each pair of boundary vertices of $\G$.
Let $P$ be a convex $k$-gon with vertices $z_1,\dots,z_k$.
One can find a function $z:{V(\G)}\to\C$ harmonic on ${ V(\G)}\setminus B$ and with
values $z_i$ at $v_i$ for $i=1,\dots,k$. Then $z$ defines a harmonic embedding of $\G$, also known as the Tutte embedding, see~\cite{Tutte}.

We can also define a harmonic embedding of the dual graph $\G^*$ (harmonic on $\G^*\setminus B^*$)
as follows. If $v$ and $v'$ are two primal vertices and $f$ (resp. $f'$) denotes the 
dual vertex associated with the face to the right (resp. left) of the edge $vv'$ when traversed from $v$ to $v'$, then we set
\be\label{zzzz}
z(f')- z(f)=ic_{vv'}(z(v')-z(v)).
\ee
Since the function $z$ is harmonic, this defines a unique embedding of the dual $\G^*$
once one fixes the position of a single dual vertex. This embedding of the dual graph is also harmonic with respect to the inverse conductance (one should take $c_{ff'}=c_{vv'}^{-1}$). Each primal edge is orthogonal to its corresponding dual edge, hence the pair constituted of the harmonic embeddings of the primal and the dual graph form a pair of so-called \emph{reciprocal figures}. Note that given the harmonic embedding one can reconstruct the conductances from (\ref{zzzz}).

\subsection{From harmonic embeddings to circle patterns}

There is a map, known as Temperley's bijection \cite{Temperley1974,Kenyon2000}, from a circular planar network $\G$ to a face-weighted bipartite graph $\GD$, defined as follows. To every vertex and every face of $\G$ is associated a black vertex of $\GD$. 
To every edge of $\G$ is associated a white vertex of $\GD$. 
A white vertex and a black vertex of $\GD$ are connected if the corresponding edge in $\G$ 
is adjacent to the corresponding vertex or face in $\G$. Every bounded face of $\GD$ is a quadrilateral 
consisting of two white vertices and two black vertices as in the middle of Figure \ref{fig:powerdiagramtocirclepattern}. 
The bipartite graph has face weights
\[
X_f = c_{e_1}/c_{e_2}
\]
where $e_1,e_2$ are two consecutive edges of $\G$ adjacent to face $f$ of $\GD$. 
 
For
these weights the partition function of the planar network on $\G$ is equal to the partition function of the 
dimer model on $\GD$ up to a multiplicative constant \cite{Kenyon2000}. 
\begin{figure}[h!]
	\centering
	\begin{minipage}{0.32\textwidth}
		\definecolor{ffffff}{rgb}{1.,1.,1.}
		\begin{tikzpicture}[line cap=round,line join=round,>=triangle 45,x=1.0cm,y=1.0cm]
		\clip(-1.8,-0.45) rectangle (2.5,3.6);
		\draw [line width=0.4pt] (0.28,1.06)-- (-1.5047068281746292,2.0662923329465546);
		\draw [line width=0.4pt] (0.28,1.06)-- (2.111200236543201,2.177618701177263);
		\draw [line width=0.4pt] (0.28,1.06)-- (0.43249437150906295,-0.3323303513779528);
		\draw [line width=0.4pt] (-1.5552453216830588,0.025081322874544384)-- (2.129657248435265,0.4286686618607269);
		\draw [line width=0.4pt] (2.129657248435265,0.4286686618607269)-- (0.332557249046033,3.3731880587774703);
		\draw [line width=0.4pt] (0.332557249046033,3.3731880587774703)-- (-1.5552453216830588,0.025081322874544384);
		\draw [line width=0.4pt] (2.129657248435265,0.4286686618607269)-- (1.8945753048300835,-1.9466207653957424);
		\draw [line width=0.4pt] (2.129657248435265,0.4286686618607269)-- (4.8309293258441,-0.4721425660024394);
		\draw [line width=0.4pt] (2.129657248435265,0.4286686618607269)-- (4.301629459395222,2.0861401218338043);
		\draw [line width=0.4pt] (0.332557249046033,3.3731880587774703)-- (2.8019465044567324,4.064713432130801);
		\draw [line width=0.4pt] (0.332557249046033,3.3731880587774703)-- (-1.659295227040958,3.863075387769323);
		\draw [line width=0.4pt] (-1.5552453216830588,0.025081322874544384)-- (-3.3984233596587017,1.2039736777523409);
		\draw [line width=0.4pt] (-1.5552453216830588,0.025081322874544384)-- (-0.852743049595048,-1.6945732099438957);
		\begin{scriptsize}
		\draw [fill=black] (0.28,1.06) circle (2.0pt);
		\draw [fill=black] (-1.5047068281746292,2.0662923329465546) circle (2.0pt);
		\draw [fill=black] (2.111200236543201,2.177618701177263) circle (2.0pt);
		\draw [fill=black] (0.43249437150906295,-0.3323303513779528) circle (2.0pt);
		\draw [fill=ffffff] (-0.6697671999640902,1.5955184595699121) circle (2.0pt);
		\draw [fill=ffffff] (1.3469235952604774,1.7111651423993575) circle (2.0pt);
		\draw [fill=ffffff] (0.370251468204276,0.235970514887052) circle (2.0pt);
		\draw [fill=black] (0.332557249046033,3.3731880587774703) circle (2.0pt);
		\draw [fill=black] (2.129657248435265,0.4286686618607269) circle (2.0pt);
		\draw [fill=black] (-1.5552453216830588,0.025081322874544384) circle (2.0pt);
		\draw [fill=black] (1.8945753048300835,-1.9466207653957424) circle (2.0pt);
		\draw [fill=black] (4.8309293258441,-0.4721425660024394) circle (2.0pt);
		\draw [fill=black] (4.301629459395222,2.0861401218338043) circle (2.0pt);
		\draw [fill=black] (2.8019465044567324,4.064713432130801) circle (2.0pt);
		\draw [fill=black] (-1.659295227040958,3.863075387769323) circle (2.0pt);
		\draw [fill=black] (-3.3984233596587017,1.2039736777523409) circle (2.0pt);
		\draw [fill=black] (-0.852743049595048,-1.6945732099438957) circle (2.0pt);
		\end{scriptsize}
		\end{tikzpicture}
	\end{minipage}
	\begin{minipage}{0.32\textwidth}
		\definecolor{ffffff}{rgb}{1.,1.,1.}
		\begin{tikzpicture}[line cap=round,line join=round,>=triangle 45,x=1.0cm,y=1.0cm]
		\clip(-1.8,-0.45) rectangle (2.5,3.6);
		\draw [line width=0.4pt] (0.28,1.06)-- (-1.5047068281746292,2.0662923329465546);
		\draw [line width=0.4pt] (0.28,1.06)-- (2.111200236543201,2.177618701177263);
		\draw [line width=0.4pt] (0.28,1.06)-- (0.43249437150906295,-0.3323303513779528);
		\draw [line width=0.4pt] (-1.5552453216830588,0.025081322874544384)-- (2.129657248435265,0.4286686618607269);
		\draw [line width=0.4pt] (2.129657248435265,0.4286686618607269)-- (0.332557249046033,3.3731880587774703);
		\draw [line width=0.4pt] (0.332557249046033,3.3731880587774703)-- (-1.5552453216830588,0.025081322874544384);
		\draw [line width=0.4pt] (2.129657248435265,0.4286686618607269)-- (1.8945753048300835,-1.9466207653957424);
		\draw [line width=0.4pt] (2.129657248435265,0.4286686618607269)-- (4.8309293258441,-0.4721425660024394);
		\draw [line width=0.4pt] (2.129657248435265,0.4286686618607269)-- (4.301629459395222,2.0861401218338043);
		\draw [line width=0.4pt] (0.332557249046033,3.3731880587774703)-- (2.8019465044567324,4.064713432130801);
		\draw [line width=0.4pt] (0.332557249046033,3.3731880587774703)-- (-1.659295227040958,3.863075387769323);
		\draw [line width=0.4pt] (-1.5552453216830588,0.025081322874544384)-- (-3.3984233596587017,1.2039736777523409);
		\draw [line width=0.4pt] (-1.5552453216830588,0.025081322874544384)-- (-0.852743049595048,-1.6945732099438957);
		\draw [line width=0.8pt] (0.3062786245230165,2.216594029388735) circle (1.1568925252263027cm);
		\draw [line width=0.8pt] (-0.6376226608415293,0.5425406614372723) circle (1.053468326413122cm);
		\draw [line width=0.8pt] (1.2048286242176325,0.7443343309303635) circle (0.9772168637520846cm);
		\draw [line width=0.8pt] (-1.529976074928844,1.0456868279105496) circle (1.0209182786792117cm);
		\draw [line width=0.8pt] (-0.5860747895642985,2.719740195862012) circle (1.1273326624870545cm);
		\draw [line width=0.8pt] (1.221878742794617,2.7754033799773667) circle (1.0715592570881722cm);
		\draw [line width=0.8pt] (2.1204287424892336,1.303143681518995) circle (0.8745237134168086cm);
		\draw [line width=0.8pt] (1.2810758099721642,0.048169155241387054) circle (0.929984049455509cm);
		\draw [line width=0.8pt] (-0.561375475086998,-0.15362451425170423) circle (1.009808421531466cm);
		\begin{scriptsize}
		\draw [fill=black] (0.28,1.06) circle (2.0pt);
		\draw [fill=black] (-1.5047068281746292,2.0662923329465546) circle (2.0pt);
		\draw [fill=black] (2.111200236543201,2.177618701177263) circle (2.0pt);
		\draw [fill=black] (0.43249437150906295,-0.3323303513779528) circle (2.0pt);
		\draw [fill=ffffff] (-0.6697671999640902,1.5955184595699121) circle (2.0pt);
		\draw [fill=ffffff] (1.3469235952604774,1.7111651423993575) circle (2.0pt);
		\draw [fill=ffffff] (0.370251468204276,0.235970514887052) circle (2.0pt);
		\draw [fill=black] (0.332557249046033,3.3731880587774703) circle (2.0pt);
		\draw [fill=black] (2.129657248435265,0.4286686618607269) circle (2.0pt);
		\draw [fill=black] (-1.5552453216830588,0.025081322874544384) circle (2.0pt);
		\draw [fill=black] (1.8945753048300835,-1.9466207653957424) circle (2.0pt);
		\draw [fill=black] (4.8309293258441,-0.4721425660024394) circle (2.0pt);
		\draw [fill=black] (4.301629459395222,2.0861401218338043) circle (2.0pt);
		\draw [fill=black] (2.8019465044567324,4.064713432130801) circle (2.0pt);
		\draw [fill=black] (-1.659295227040958,3.863075387769323) circle (2.0pt);
		\draw [fill=black] (-3.3984233596587017,1.2039736777523409) circle (2.0pt);
		\draw [fill=black] (-0.852743049595048,-1.6945732099438957) circle (2.0pt);
		\end{scriptsize}
		\end{tikzpicture}
	\end{minipage}
	
	\caption{From the vertices of a reciprocal figure to a circle pattern.}
	\label{fig:powerdiagramtocirclepattern}
\end{figure}

In this section we convert a reciprocal figure into a circle pattern in such a way that the following diagram commutes:
$$\begin{array}{ccc}
\text{Planar network $\G$}&\longrightarrow&\text{Bipartite graph $\GD$}\\
\updownarrow&&\updownarrow\\
\text{Reciprocal figure}&\longrightarrow&\text{Circle pattern}
\end{array}
$$

\begin{theorem}
\label{thm:powerdiagramtocirclepattern}
Let $f:V(\G) \to \C$ be a harmonic embedding of a planar network $\G$ in a convex polygon $P$;
let $g: V(\G^{*}) \to \C$ be its dual. We define a realization $z:V(\GD) \to \mathbb{C}$ of the bipartite graph $\GD$ such that $z = f$ for the black vertices coming from the vertices of $\G$ and $z= g$ for those from the faces of $\G$. On the white vertices, we take $z$ as the intersection of the line through the primal edge and the line through the dual edge under $f$ and $g$.
Then $z$ has cyclic faces and thus is a circle pattern with the combinatorics of $\GD$. 
The face weights induced on $\GD$ from the circle pattern coincide with those from Temperley's bijection.
\end{theorem}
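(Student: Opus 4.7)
Plan: First I would establish that each face of $\G_H$ is cyclic. Each bounded face of $\G_H$ is a quadrilateral corresponding to a corner $(v,\phi)$ of $\G$, with vertices $f(v), z_e, g(\phi), z_{e'}$, where $e$ and $e'$ are the two edges of $\G$ at $v$ that lie on the boundary of $\phi$. By construction $z_e$ sits simultaneously on the primal line through $f(v),f(v'')$ (with $v''$ the other endpoint of $e$) and on the dual line through $g(\phi),g(\phi')$ (with $\phi'$ the face on the other side of $e$). The reciprocal-figure relation $g(\phi')-g(\phi)=\pm i c_e(f(v'')-f(v))$ shows these two lines are perpendicular, so the quadrilateral has a right angle at $z_e$; the identical argument gives a right angle at $z_{e'}$. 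Two opposite right angles force the quadrilateral to be inscribed in the circle of diameter $f(v)g(\phi)$ (converse of Thales), centered at $u_{(v,\phi)}=\tfrac{1}{2}(f(v)+g(\phi))$.

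Next I would compute, for each edge $wb$ of $\G_H$, the quantity $\omega(wb):=u_L-u_R$ (difference of adjacent circumcenters), since this is what enters the face-weight formula \eqref{eq:realcenter}. If $b=f(v)$ is a primal black vertex, the two faces of $\G_H$ adjacent to $wb$ are the corners $(v,\phi_L)$ and $(v,\phi_R)$ of $\G$ on either side of the edge $e$ associated with $w$, so $\omega(wb)=\tfrac{1}{2}(g(\phi_L)-g(\phi_R))$ is half the dual edge vector of $e$, with magnitude $\ell_{e^*}/2$. If instead $b=g(\phi)$ is a dual black vertex, then $\omega(wb)=\tfrac{1}{2}(f(v_1)-f(v_2))$ is half the primal edge vector of $e$, with magnitude $\ell_e/2$.

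Finally I would plug into the face weight formula for the quadrilateral of $\G_H$ at the corner $(v,\phi)$: the four edges contribute $|\omega|$ values $\ell_{e^*}/2$ and $\ell_{(e')^*}/2$ at the primal black vertex $f(v)$, and $\ell_e/2$ and $\ell_{e'}/2$ at the dual black vertex $g(\phi)$. Grouping as in \eqref{Xdef} and telescoping the factors of $1/2$ yields
\[
X_f \;=\; \frac{(\ell_{e^*}/2)(\ell_{e'}/2)}{(\ell_{(e')^*}/2)(\ell_e/2)} \;=\; \frac{\ell_{e^*}/\ell_e}{\ell_{(e')^*}/\ell_{e'}} \;=\; \frac{c_e}{c_{e'}},
\]
which is exactly the Temperley face weight.

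The main technical nuisance is orientation bookkeeping: determining which face of $\G_H$ sits on which side of each $\G_H$-edge, and verifying that the complex phases of the $\omega$'s in \eqref{eq:realcenter} combine into the positive ratio above rather than its negative. Positivity is in fact automatic once cyclicity is established, since the cross-ratio of four concyclic points is real, and convexity of the reciprocal figure pins down the sign; hence the content of the proof really lies in the perpendicularity argument of the first paragraph, with the rest being direct computation.
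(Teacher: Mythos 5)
Your proposal is correct and follows essentially the same route as the paper's proof: perpendicularity of primal and dual edges gives right angles at the white vertices, hence cyclic quadrilateral faces with circumcenter at the midpoint of the two black vertices, and the resulting half-length edge weights $\ell_{e^*}/2$ and $\ell_e/2$ combine in the alternating product to give $c_{e_1}/c_{e_2}$. Your explicit computation of center differences as half-differences of black vertices is just a more detailed version of the paper's ``by similarity of triangles'' step, so no substantive difference.
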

\begin{proof}
Since every dual edge of $\G$ is perpendicular to its primal edge under the harmonic embeddings, 
the quadrilateral faces of $\GD$ have right angles at their white vertices. 
Hence every face of $z$ is cyclic and hence we obtain a circle pattern. 
The circumcenter of each cyclic face of $z$ is the midpoint of the two black vertices. 
By similarity of triangles, the edge weight induced from the distance between circumcenters has the following form: For an edge of $\GD$ that is a half-edge of a primal edge $e$ of $\G$, it has weight $\ell_{e*}/2$. For an edge of $\GD$ that is a half-edge of a dual edge $e^*$, it has weight $\ell_{e}/2$. Thus for every quadrilateral face $\phi$, the face weight is
\[
X_{\phi} = \frac{\ell_{e_1^*}}{\ell_{e_1}} \frac{\ell_{e_2}}{\ell_{e_2^*}} = \frac{c_{e_1}}{c_{e_2}}
\]
which coincides with that from Temperley's bijection.
\end{proof}

\subsection{Star-triangle relation}

It is a well-known fact \cite{Epifanov1966} that a network can be reduced to the trivial network by performing star-triangle and triangle-star moves, as well as two other types of moves: replacing two parallel edges (sharing the same endpoints)
with a single edge, and replacing two edges in series with a single edge (that is, deleting a degree-$2$ vertex).

The star-triangle move has a simple interpretation in terms of reciprocal figures: it corresponds exactly to Steiner's theorem (see Figure~\ref{fig:Steiner}), as was observed in \cite{Schief2002a}. The star-triangle move corresponds to replacing a vertex which is the intersection of three primal edges (such as $D'$ on Figure~\ref{fig:Steiner}) by a dual vertex which is the intersection of three dual edges (such as $D$ on Figure~\ref{fig:Steiner}); 
Steiner's theorem guarantees that these three dual edges intersect at a common point.

\begin{figure}
\begin{center}
\includegraphics[width=4in]{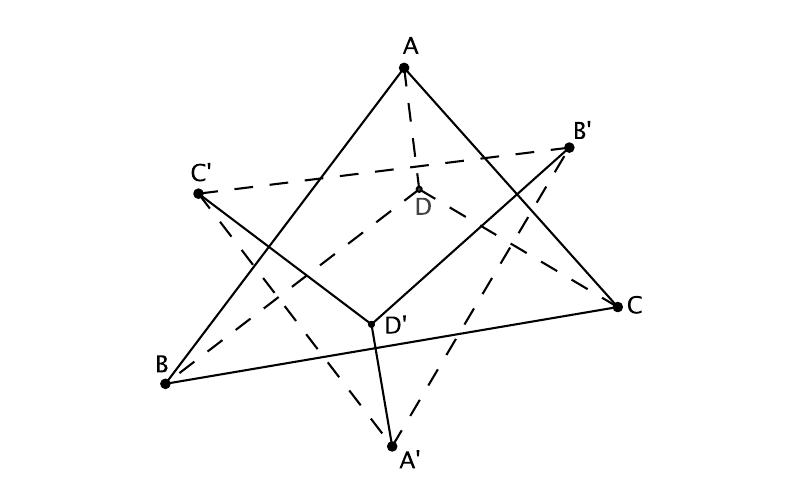}
\end{center}

\caption{Steiner's theorem (see e.g. \cite[Figure 4.9.18]{Akopyan2011}) states that the perpendicular to $(AB)$ going through $C'$, the perpendicular to $(BC)$ going through $A'$ and the perpendicular to $(AC)$ going through $B'$ are concurrent if and only if the perpendicular to $(A'B')$ going through $C$, the perpendicular to $(B'C')$ going through $A$ and the perpendicular to $(A'C')$ going through $B$ are concurrent.}\label{fig:Steiner}
\end{figure}

In \cite{GK2013} it was observed that a $Y-\Delta$ transformation for planar networks can be decomposed into a composition of four urban renewals for dimer models, upon transforming the planar network into a dimer model via Temperley's bijection. We show that this decomposition can be seen in purely geometric terms, using the correspondences between planar networks and reciprocal figures on the one hand, and between dimer models and circle patterns on the other hand.

\begin{theorem}
The star-triangle move for reciprocal figures can be decomposed into four Miquel moves, upon transforming the reciprocal figures into a circle pattern as described in Theorem~\ref{thm:powerdiagramtocirclepattern}.
\end{theorem}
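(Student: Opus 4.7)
The plan is to combine two known facts. First, in \cite{GK2013} the star-triangle move on a planar network $\G$ is shown to correspond, via Temperley's bijection, to a specific sequence of four spider (urban renewal) moves on the associated bipartite graph $\G_H$, interleaved with contractions of degree-$2$ vertices (which are trivial elementary transformations by Figure~\ref{elemtransfs}). Second, Theorem~\ref{thm:localchange} identifies any spider move on $\G_H$, applied at a quadrilateral face, with a Miquel (central) move on the four circles meeting that face in the associated circle pattern. Composing these two facts converts the star-triangle move into a sequence of four Miquel moves on the Temperley circle pattern of Theorem~\ref{thm:powerdiagramtocirclepattern}.

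First I would isolate the local subgraph of $\G_H$ affected by the star-triangle move. Around the degree-$3$ central vertex $v$ of the star, Temperley's bijection produces six quadrilateral faces of $\G_H$: three ``vertex-sector'' quads incident to $v$, and three ``face-sector'' quads incident to the face-blacks $f_1,f_2,f_3$ surrounding $v$. Their common outer boundary alternates between the vertex-blacks $a,b,c$ and the face-blacks $f_1,f_2,f_3$. After the $Y$-$\Delta$ move, the Temperley neighborhood of the triangle $abc$ consists of the same outer boundary of six blacks, together with three new edge-whites (for $ab$, $bc$, $ca$) and a central face-black corresponding to the new inner triangular face. I would then display the explicit sequence from~\cite{GK2013}: four spider moves, interspersed with degree-$2$ contractions, that carry the first configuration to the second, and invoke Theorem~\ref{thm:localchange} to replace each spider move by the corresponding Miquel move on the four incident circles of the circle pattern obtained from Theorem~\ref{thm:powerdiagramtocirclepattern}.

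The only real verification, and the main potential obstacle, is that the circle pattern resulting from these four successive Miquel moves coincides with the Temperley circle pattern associated via Theorem~\ref{thm:powerdiagramtocirclepattern} to the reciprocal figure of the transformed network $\G'$. The six outer circle centers are unchanged by all four moves, and the six outer face weights transform according to the cluster $Y$-variable rule~\eqref{1+X}, which under Temperley's bijection corresponds exactly to the Kirchhoff change of conductances $c\mapsto c'$ in the star-triangle move. On the geometric side this is the content of Steiner's theorem, which governs how the reciprocal figure updates. Agreement on both boundary data and face weights, combined with the uniqueness furnished by Theorem~\ref{thm:planardimerstocircles} applied to a finite window containing the affected region, forces the two circle patterns to coincide, completing the proof.
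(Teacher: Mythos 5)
Your overall strategy---pull back the $Y$-$\Delta$ move to four urban renewals on $\G_H$ via the decomposition from \cite{GK2013}, then convert each urban renewal into a Miquel move using Theorem~\ref{thm:localchange}---is the same skeleton the paper uses; the paper's proof likewise applies four explicit Miquel moves $M_{O_C}$, $M_{O_A}$, $M_{O_B}$, $M_{O}$ to the quadrilaterals of the Temperley circle pattern. The difference, and the problem, lies in how you close the argument.

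Your final step asserts that agreement of the six outer circle centers and of the face weights, ``combined with the uniqueness furnished by Theorem~\ref{thm:planardimerstocircles},'' forces the post-Miquel pattern to coincide with the Temperley pattern of the transformed network $\G'$. But Theorem~\ref{thm:planardimerstocircles} does not furnish uniqueness: even for a degree-$4$ outer boundary the canonical gauge equations typically have \emph{two} solutions (Lemma~\ref{lem:solutionssquare}; see also Figure~\ref{quadexample}, where both solutions for the cube graph are drawn), and for the hexagonal window relevant to a star-triangle neighborhood the outer face has degree greater than $4$, a case in which the paper explicitly declines to count solutions (the open Question at the end of Section~\ref{canonicalexistence}). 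So matching boundary data and face weights does not pin down the circle pattern, and your identification of the two patterns is a genuine gap. The paper avoids this by a direct synthetic verification: it tracks the images of the points through the four Miquel moves and shows that the final intersection point $D$ equals the point $\tilde D$ produced by the star-triangle move on the reciprocal figure, using the fact that in a Temperley circle pattern each circle center is the midpoint of its two black vertices (so the lines $(Aa)$, $(Bb)$, $(Cc)$ are the relevant perpendiculars) together with the orthogonality at $a,b,c$, which places $\tilde D$ on the three circumcircles and hence identifies it with $D$. To repair your proof you would need either this explicit geometric identification or a separate argument selecting the correct branch among the finitely many patterns with the given boundary and face weights (e.g.\ a continuity/deformation argument), neither of which is supplied by the results you cite.
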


\begin{figure}
\begin{center}
\includegraphics[width=6.0in]{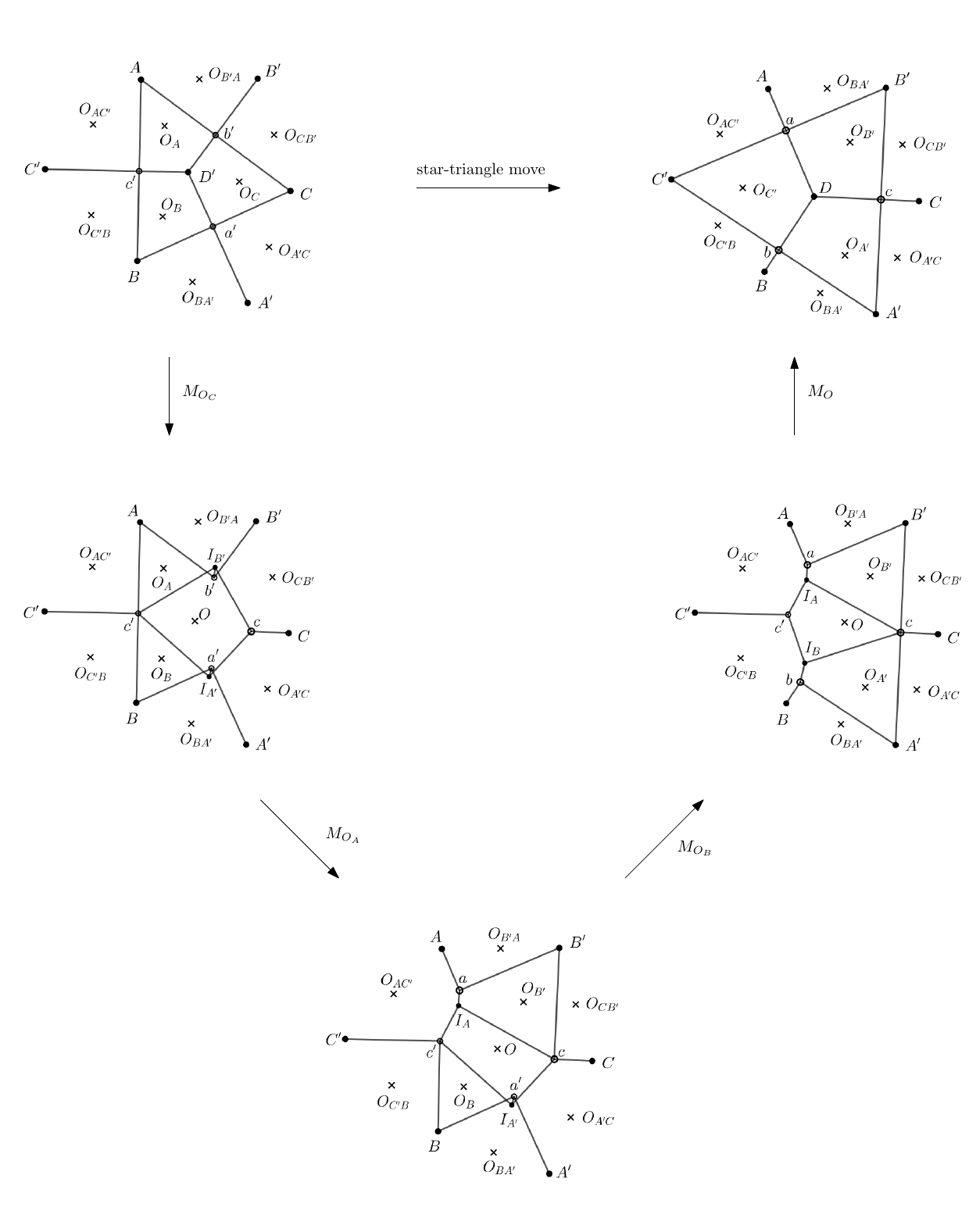}
\end{center}
\caption{\label{fig:SteinerMiquel}Decomposition of a star-triangle move for reciprocal figures into four Miquel moves.}
\end{figure}

\begin{proof}

This decomposition is illustrated in Figure~\ref{fig:SteinerMiquel}. We start with a triangle $ABC$ in a harmonic embedding, we denote by $D'$ the dual vertex associated with that triangle and by $A',B'$ and $C'$ the three dual vertices adjacent to $D'$. We construct the circle pattern associated with the reciprocal figures as described in Theorem~\ref{thm:powerdiagramtocirclepattern}, denoting by $a',b'$ and $c'$ the intersections of the primal edges and their associated dual edges. We respectively denote by $O_A$, $O_B$ and $O_C$ the centers of the circumcircles of the quadrilaterals $Ac'D'b'$, $Ba'D'c'$ and $Cb'D'a'$. We also respectively denote by $O_{AC'}$, $O_{C'B}$, $O_{BA'}$, $O_{A'C}$ $O_{CB'}$ and $O_{B'A}$ the circumcenters of the triangles $Ac'C'$, $C'c'B$, $Ba'A'$, $A'a'C$, $Cb'B'$ and $B'b'A$. We first apply the Miquel move $M_{O_C}$ to the quadrilateral $D'a'Cb'$ with circumcenter $O_C$. The points $D'$, $a'$, $C$ and $b'$ respectively transform into $c'$, $I_{A'}$, $c$ and $I_{B'}$, which form a cyclic quadrilateral with circumcenter denoted by $O$. Then we apply the Miquel move $M_{O_A}$ to the quadrilateral $Ac'I_{B'}b'$ with circumcenter $O_A$. The points $A$, $c'$, $I_{B'}$ and $b'$ respectively transform into $a$, $I_{A}$, $c$ and $B'$, which form a cyclic quadrilateral with circumcenter denoted by $O_{B'}$. Next we apply the Miquel move $M_{O_B}$ to the quadrilateral $Ba'I_{A'}c'$ with circumcenter $O_B$. The points $B$, $a'$, $I_{A'}$ and $c'$ respectively transform into $b$, $A'$, $c$ and $I_B$, which form a cyclic quadrilateral with circumcenter denoted by $O_{A'}$. Finally we apply the Miquel move $M_{O}$ to the quadrilateral $I_Ac'I_Bc$ with circumcenter $O$. The points $I_A$, $c'$, $I_B$ and $c$ respectively transform into $a$, $C'$, $b$ and $D$, which form a cyclic quadrilateral with circumcenter denoted by $O_{C'}$.

We now show that this point $D$ created by a composition of four Miquel moves coincides with the point $\tilde{D}$ created by the star-triangle move applied to the reciprocal figures. First, as observed in the proof of Theorem~\ref{thm:powerdiagramtocirclepattern}, in a circle pattern coming from reciprocal figures, the center of each circle is the midpoint of the segment formed by the two black vertices. Since $O_{AC'}$ is the circumcenter of the triangle $AC'a$ and is the midpoint of $[AC']$, this implies that the perpendicular to $(B'C')$ going through $A$ is the line $(Aa)$. Similarly, $(Bb)$ is the perpendicular to $(A'C')$ going through $B$ and $(Cc)$ is the perpendicular to $(A'B')$ going through $C$. Hence the point $\tilde{D}$ created by the star-triangle move is the intersection point of the three lines $(Aa)$, $(Bb)$ and $(Cc)$. Because of the orthogonality property at $a,b$ and $c$, the point $\tilde{D}$ lies on the circumcircles of the three triangles $aC'b$, $bA'c$ and $cB'a$ so $\tilde{D}=D$.
\end{proof}

\section{From Ising s-embeddings to circle patterns}
\label{sec:sembeddings}

We consider the Ising model on a planar graph $\G$ with edge weights $x_e$. 
Chelkak introduced in~\cite{Chelkak2017} an \textbf{s-embedding} of $\G$, which is an embedding $s$ defined on each vertex, dual vertex and vertices associated to edges of $\G$ with the following property: for any edge $e$ in $\G$, if $v_0^\bullet$ and $v_1^\bullet$ (resp. $v_0^\circ$ and $v_1^\circ$) denote the endpoints of $e$ (resp. of the edge dual to $e$) and $v_e$ a vertex associated to $e$ as on Figure~\ref{fig:Isingtodimer}, then $s(v_0^\bullet)$, $s(v_0^\circ)$, $s(v_1^\bullet)$ and $s(v_1^\circ)$ form a tangential quadrilateral with incenter $s(v_e)$, meaning that there exists a circle centered at $s(v_e)$ and tangential to the four sides of the quadrilateral.

\begin{figure}[h!]
\center{\includegraphics[width=6in]{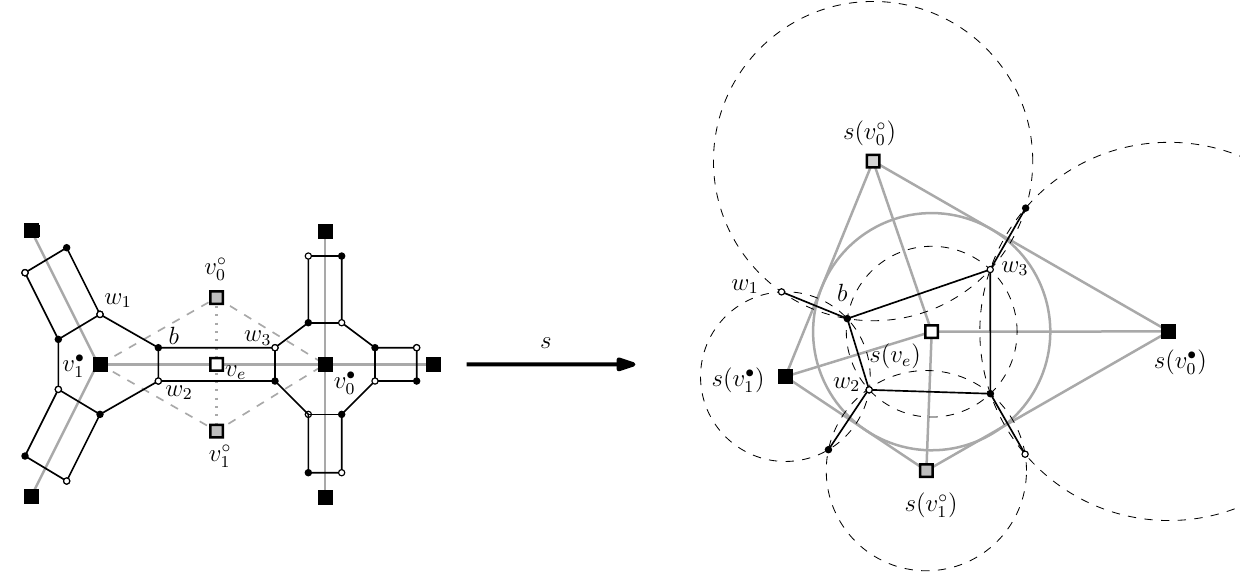}}
\caption{Ising graph $\G$ (black rhombi); dual graph $\G^*$ (gray rhombi); dimer graph $\G_D$ (black and white vertices).}\label{fig:Isingtodimer}
\end{figure}

On the other hand, Dub\'edat \cite{Dubedat2011} gave a natural map from the Ising model on $\G$ to a bipartite 
dimer model $\G_D$, as in Figure~\ref{fig:Isingtodimer}: Each edge in $\G$ is replaced by a quadrilateral in $\G_D$ and each vertex or face of degree $d$ in $\G$ is replaced by a face of degree $2d$ in $\G_D$. Every face of $\G_D$ corresponds to a vertex, an edge or a face of $\G$. For every edge $e$ of $\G$, define 
$\theta_e\in(0,\pi)$ by
\[
x_e=\tan\frac{\theta_e}{2}.
\]
Then we define the edge weights on $\G_D$ by the following formulas (adopting the notation of Figure~\ref{fig:Isingtodimer}):
\[
\omega(b,w_1)=1, \quad \omega(b,w_2)=\cos\theta_e, \quad \omega(b,w_3)=\sin\theta_e.
\]
For these weights the partition function of the Ising model on $\G$ is equal (up to a multiplicative constant) to the partition function of the dimer model on $\G_D$, see \cite{Dubedat2011}.

The goal of this section is to show that the following diagram commutes:
$$\begin{array}{ccc}\text{Ising model on $\G$} &\longrightarrow& \text{Bipartite graph $\G_D$} \\
\updownarrow&&\updownarrow\\
\text{s-embedding} &\longrightarrow& \text{Circle pattern} 
\end{array}
$$

In particular, the s-embedding of the vertices, dual vertices and edge midpoints of $\G$ coincide with the circle centers associated with the bipartite graph $\G_D$. Note that combinatorially this is consistent since each face of $\G_D$ corresponds to either a vertex, a face or an edge of $\G$.

\begin{theorem}
An $s$-embedding of $\G$ provides an embedding of $\G^*_D$ into $\mathbb{C}$ sending each vertex of $\G^*_D$ to the centers of a circle pattern associated with $\G_D$.
\end{theorem}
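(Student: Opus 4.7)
The plan is to invoke Proposition~\ref{angles} with the dual graph taken to be $\G_D^*$. Since the vertex set of $\G_D^*$ is in bijection with the face set of $\G_D$, which by the Dubedat construction recalled above decomposes canonically as $V(\G)\sqcup E(\G)\sqcup F(\G)$, the s-embedding $s$ directly furnishes a candidate map $u:V(\G_D^*)\to\C$ by $u(v^\bullet):=s(v^\bullet)$, $u(v^\circ):=s(v^\circ)$, $u(v_e):=s(v_e)$. To conclude, it suffices to verify that $u$ is a convex embedding of $\G_D^*$ and that around every vertex of $u(\G_D^*)$ the alternating sum of angles equals $\pi$.

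The angle condition at an edge midpoint $u(v_e)$ is immediate from the tangential quadrilateral hypothesis. The four dual neighbors of $u(v_e)$ in $\G_D^*$ are precisely the four corners of the tangential quadrilateral $s(v_0^\bullet)s(v_0^\circ)s(v_1^\bullet)s(v_1^\circ)$, of which $u(v_e)$ is by assumption the incenter. Using that the incenter of a tangential quadrilateral $ABCD$ lies on all four corner angle bisectors,
\[
\angle AIB+\angle CID=2\pi-\tfrac{1}{2}\bigl(\angle A+\angle B+\angle C+\angle D\bigr)=\pi,
\]
which is precisely the desired alternating condition at $u(v_e)$.

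The condition at a primal vertex $u(v^\bullet)$ of degree $d$ uses the bisector property in reverse. Let $e_1,f_1,e_2,f_2,\ldots,e_d,f_d$ be the cyclic sequence of incident edges and faces of $v^\bullet$ in $\G$ (with $f_k$ lying between $e_k$ and $e_{k+1}$); a combinatorial check on the Dubedat construction shows that the $2d$ dual neighbors of $u(v^\bullet)$ in $\G_D^*$ are correspondingly $u(v_{e_1}),u(f_1),\ldots,u(v_{e_d}),u(f_d)$ in cyclic order, alternating between edge midpoints and face centers. For each $k$, the point $u(v_{e_k})$ is the incenter of the tangential quadrilateral whose corner at $u(v^\bullet)$ has adjacent sides running to $u(f_{k-1})$ and $u(f_k)$, so the ray from $u(v^\bullet)$ to $u(v_{e_k})$ bisects the angle $\angle u(f_{k-1})u(v^\bullet)u(f_k)$. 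Denoting this half-angle by $\theta_k$, the $2d$ consecutive angles at $u(v^\bullet)$ read $\theta_1,\theta_1,\theta_2,\theta_2,\ldots,\theta_d,\theta_d$; both alternating subsums equal $\theta_1+\cdots+\theta_d$, and since they total $2\pi$ each equals $\pi$. The argument at a dual vertex $u(v^\circ)$ is identical by primal--dual symmetry in the Dubedat construction.

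Finally, to establish the convex embedding property: the s-embedding is by definition a proper planar embedding of $V(\G)\sqcup E(\G)\sqcup F(\G)$, and the dual edges of $\G_D^*$ are realised as segments from each incenter $u(v_e)$ to the four corners of its tangential quadrilateral, so no crossings arise; convexity of each face of $\G_D^*$ follows because its boundary is made of such angle-bisector and tangential-side segments around a single vertex of $\G_D$. Applying Proposition~\ref{angles} with the verified angle condition then yields the claimed circle pattern with combinatorics $\G_D$. The principal obstacle is the combinatorial check at a primal or dual vertex, namely confirming that the $2d$ dual neighbors of $u(v^\bullet)$ really do alternate between edge midpoints and face centers in the $\G$-induced cyclic order; once that is granted, the incenter bisector identities chain together and the angle sum drops out.
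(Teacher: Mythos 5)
Your angle computations are correct and reproduce one half of the paper's argument: the incenter identity $\angle AIB+\angle CID=\pi$ at an edge-midpoint vertex is exactly the paper's formula~\eqref{eq:edgeargument}, and the bisector argument at a primal/dual vertex is the content of~\eqref{eq:faceargument}. The combinatorial alternation of edge-type and vertex-type neighbours that you flag as the ``principal obstacle'' is indeed as you describe and is not the problem.

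The genuine gap is that you prove too little. The theorem asserts that $s$ gives the centers of the circle pattern \emph{associated with the weighted bipartite graph $\G_D$}, i.e.\ that the commuting-diagram claim holds: the face weights read off from the circle centers via~\eqref{eq:realcenter} must coincide with the face weights of the Dubedat dimer model with edge weights $1$, $\cos\theta_e$, $\sin\theta_e$. Your argument, which only verifies the angle condition and then invokes Proposition~\ref{angles}, shows that $s$ is the center configuration of \emph{some} circle pattern with the combinatorics of $\G_D$ --- equivalently, of a dimer model with \emph{some} positive face weights --- but says nothing about which weights. The paper therefore has to check, in addition to the two argument identities, the two modulus identities~\eqref{eq:faceXvariable} and~\eqref{eq:edgeXvariable}, namely
\[
\sin^2\theta_e=\frac{|s(v_e)-s(v_0^\circ)|^2}{|s(v_1^\bullet)-s(v_0^\circ)|\cdot|s(v_0^\bullet)-s(v_0^\circ)|}
\qquad\text{and}\qquad
\frac{\cos^2\theta_e}{\sin^2\theta_e}=\frac{|s(v_0^\bullet)-s(v_e)|\cdot|s(v_1^\bullet)-s(v_e)|}{|s(v_0^\circ)-s(v_e)|\cdot|s(v_1^\circ)-s(v_e)|}.
\]
These are where the Ising weights actually enter; they rely on Chelkak's formula (6.3) expressing $\tan^2\theta_e$ as a ratio of products of distances from $s(v_e)$ to the four corners, combined with a metric identity for tangential quadrilaterals (the paper's~\eqref{eq:propertyoftangentialquads}). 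Without this step the theorem's content --- the compatibility of the s-embedding with the specific dimer weights, hence the commutativity of the diagram --- is not established.
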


\begin{proof}
It suffices to prove that, for each face of the bipartite graph $\G_D$, the alternating product of the edge weights $\omega$ induced by $s$ satisfies (\ref{eq:realcenter}), where $X$ is the face weight of~$\G_D$. 

First, we check the conditions on the faces of $\G_D$ that correspond to vertices or faces of $\G$. By symmetry, it suffices to consider just a face of $\G$. Let $v^*$ be a vertex of the dual graph of $\G_D$ which corresponds to a face of $\G$ of degree $d$ and denote by $v_{e_1},v_1,v_{e_2},v_2,\ldots,v_{e_d},v_d$ the neighbors of $v^*$ in $\G_D^*$ in counterclockwise order, where the vertices of type $v_{e_i}$ correspond to an edge in $\G$ while the vertices of type $v_i$ correspond to a vertex in $\G$. The weight of an edge in $\G_D$ dual to an edge of type $v^* v_{e_i}$ (resp. $v^* v_i$) is of the form $\sin \theta_i$ (resp. is equal to $1$). Hence we need to show the following two formulas:
\[
\arg\frac{\prod_{i=1}^d s(v_{e_i})- s(v^*)}{\prod_{i=1}^d s(v_i)- s(v^*)} = \pi \quad \text{and} \quad
\frac{\prod_{i=1}^d |s(v_{e_i})- s(v^*)|}{\prod_{i=1}^d |s(v_i)- s(v^*)|} = \prod_{i=1}^d \sin \theta_i.
\]
By splitting each formula into $d$ equations centered around the edges of type $v^* v_{e_i}$, it suffices to prove the following two formulas, where we are using the notation of Figure~\ref{fig:Isingtodimer}:
\begin{align}
\arg \frac{s(v_e)-s(v_0^\circ)}{s(v_0^\bullet)-s(v_0^\circ)} &=\arg \frac{s(v_1^\bullet)-s(v_0^\circ)}{s(v_e)-s(v_0^\circ)}, \label{eq:faceargument} \\
\sin^2\theta_e &=\frac{|s(v_e)-s(v_0^\circ)|^2}{|s(v_1^\bullet)-s(v_0^\circ)|\cdot|s(v_0^\bullet)-s(v_0^\circ)|}. \label{eq:faceXvariable}
\end{align}

Formula~\eqref{eq:faceargument} follows from the fact that $s(v_e)$ is the center of the incircle of the quadrilateral with vertices $s(v_0^\bullet)$, $s(v_0^\circ)$, $s(v_1^\bullet)$ and $s(v_1^\circ)$. For the other formula, we start from~\cite[Formula (6.3)]{Chelkak2017} which implies that
\[
\tan^2 \theta_e = \frac{|s(v_0^\circ)-s(v_e)|\cdot|s(v_1^\circ)-s(v_e)|}{|s(v_0^\bullet)-s(v_e)|\cdot|s(v_1^\bullet)-s(v_e)|},
\]
hence
\begin{equation}
\label{eq:Dimarewritten}
\frac{1}{\sin^2 \theta_e} = \frac{|s(v_0^\circ)-s(v_e)|\cdot|s(v_1^\circ)-s(v_e)|+|s(v_0^\bullet)-s(v_e)|\cdot|s(v_1^\bullet)-s(v_e)|}{|s(v_0^\circ)-s(v_e)|\cdot|s(v_1^\circ)-s(v_e)|}.
\end{equation}
Furthermore, we have the following classical formula for tangential quadrilaterals
\begin{multline}
\label{eq:propertyoftangentialquads}
|s(v_1^\bullet)-s(v_0^\circ)|\cdot|s(v_0^\bullet)-s(v_0^\circ)| = \\
 \frac{|s(v_0^\circ)-s(v_e)|}{|s(v_1^\circ)-s(v_e)|}\left(|s(v_0^\circ)-s(v_e)|\cdot|s(v_1^\circ)-s(v_e)|+|s(v_0^\bullet)-s(v_e)|\cdot|s(v_1^\bullet)-s(v_e)|\right).
\end{multline}
To see this formula, suppose that the center $s(v_e)$ of the inscribed circle is at the origin of the complex plane. Let $a,b,c,d \in\mathbb{C}$ be the respective positions of vertices $s(v_1^\bullet),s(v_1^\circ),s(v_0^\bullet),s(v_0^\circ)$. Note that $\frac{(a-b)(c-b)}{b^2} \in \mathbb{R}_{+}$ and $\frac{ac}{bd}\in\mathbb{R}_{-}$ since $s(v_1^\bullet)s(v_1^\circ)s(v_0^\bullet)s(v_0^\circ)$ is a tangential quadrilateral. Note that $1/\bar a$ is the midpoint of the segment connecting the tangency points where the incircle touches the sides $s(v_1^\bullet)s(v_1^\circ)$ and $s(v_1^\bullet)s(v_0^\circ)$, and similarly
for $1/\bar b,1/\bar c,1/\bar d$. Thus $1/\bar a+1/\bar c=1/\bar b+1/\bar d$: 
both sides of this equation are twice the 
geocenter of the four tangency points. 
Conjugating, this is equivalent to the expression $\frac{(a-b)(c-b)}{b^2}=1-\frac{ac}{bd}$, which is in turn equivalent to~(\ref{eq:propertyoftangentialquads}).

Next, we check these conditions for faces of $\G_D$ corresponding to edges of $\G$. We need to show the following two formulas:
\begin{align}
\arg \frac{(s(v_0^\bullet)-s(v_e))(s(v_1^\bullet)-s(v_e))}{(s(v_0^\circ)-s(v_e))(s(v_1^\circ)-s(v_e))} &=\pi, \label{eq:edgeargument} \\
\frac{|s(v_0^\bullet)-s(v_e)|\cdot|s(v_1^\bullet)-s(v_e)|}{|s(v_0^\circ)-s(v_e)|\cdot|s(v_1^\circ)-s(v_e)|}&=\frac{\cos^2\theta_e}{\sin^2\theta_e}. \label{eq:edgeXvariable}
\end{align}

Formula~\eqref{eq:edgeargument} follows from subdividing the quadrilateral with vertices $s(v_0^\bullet)$, $s(v_0^\circ)$, $s(v_1^\bullet)$ and $s(v_1^\circ)$ into four triangles sharing the common vertex $s(v_e)$, taking the alternating sum of four formulas of the type of~\eqref{eq:faceargument}. Formula~\eqref{eq:edgeXvariable} follows immediately from formula (6.3) in~\cite{Chelkak2017}.
\end{proof}

\section{Appendix}

Let $M=(m_{ij})_{i,j = 1,\dots, n}$ be an invertible matrix such that $m_{j1}^{-1}\neq 0$ for each $j$. 
Can we find diagonal matrices $F$ and $G$ so that $FMG$ has given row and column sums
(subject to the obvious condition that the sum of the row sums equals the sum of the column sums)?
If $M$ has positive entries this follows from Sinkhorn's Theorem \cite{sinkhorn1964}.
We show here that for $M$ of the above form it is true for a Zariski dense set of row and column sums.

Consider the map $\mathfrak{F}: \mathbb C^{2n}\to \mathbb C^{2n}$ given coordinate-wise by
\begin{equation*}
  \begin{split}
    & p_i(f_1,\dots, f_n, g_1,\dots, g_n) = \sum_{j = 1}^n f_im_{ij} g_j,\\
    & q_j(f_1,\dots, f_n, g_1,\dots, g_n) = \sum_{i = 1}^n f_im_{ij} g_j,
  \end{split}
\end{equation*}
where $p_i,q_j$ are coordinates on the terminal $\mathbb C^{2n}$ and $f_i, g_j$ are coordinates on the initial~$\mathbb C^{2n}$. It is clear that the image of $\mathfrak{F}$ is contained in the hyperplane
\[
  \Sigma = \left\{ (p_1,\dots, p_n, q_1,\dots, q_n)\in \mathbb C^{2n}\ \mid\ \sum_{i = 1}^n p_i = \sum_{i = 1}^n q_i \right\}.
\]

\begin{lemma}\label{Zariski}
The image of $\mathfrak{F}$ is Zariski dense in $\Sigma$.
\end{lemma}

\begin{proof}
It suffices to prove that the Jacobian of $\mathfrak F$ is of (maximal) rank $2n-1$ at some point. In order to do this we first restrict $\mathfrak F$ to the subset $(\mathbb C^*)^n\times \mathbb C^n$ and then write it as a composition ${\Phi}\circ \Psi$, where
\begin{equation*}
\begin{split}
&\Psi(f,g) = (p,f),\quad \text{where } p_i = \sum_{j = 1}^n f_im_{ij} g_j;\\
&\Phi(p, f) = (p,q),\quad \text{where } q_j = \sum_{l = 1}^n f_l m_{lj}\sum_{i = 1}^n m^{-1}_{ji}\frac{p_i}{f_i}.
\end{split}
\end{equation*}
Note that the map $\Psi$ is invertible since given $p$ and $f$ one can reconstruct $g$. Indeed, $g_j = \sum_{i = 1}^n m^{-1}_{ji} \frac{p_i}{f_i}$. Since $\Psi$ is invertible it is enough to find a point where the Jacobian of $\Phi$ has the maximal rank. Fix variables $p$ and consider $\Phi$ as the map from the space with coordinates $f$ to the space with coordinates $q$. Let $p_1 = 1$ and $p_2= \dots = p_n = 0$. For this particular choice of $p$ we get
 \[
    \Phi(p, f) = (p, q), \quad q_j = m^{-1}_{j1}\frac{p_1}{f_1} \sum_{l = 1}^n f_l m_{lj}.
  \]
  Note that the right-hand side of the equation for $q$ is linear in $f_2,\dots, f_n$. Since the matrix $(m_{ij})$ is invertible we conclude that the Jacobian of the map that sends $f$ to $q$ is of rank $n-1$. Hence the Jacobian of $\Phi$ is of rank $2n-1$.
\end{proof}

\section*{Acknowledgements}

We are grateful to Dmitry Chelkak for sharing his ideas during many fruitful exchanges, 
and providing the proof of Lemma \ref{lem:doublepoint}. We also thank the referees
for helpful comments and suggestions.
R. Kenyon is supported by NSF grant DMS-1713033 and the Simons Foundation award 327929. S. Ramassamy acknowledges the support of the Fondation Simone et Cino Del Duca, of the ENS-MHI chair funded by MHI, of the Fondation Sciences Math\'ematiques de Paris and of the ANR-18-CE40-0033 project DIMERS. M. Russkikh is supported by NCCR SwissMAP of the SNSF and ERC AG COMPASP. 

\bibliographystyle{siam}
\bibliography{circles}

\end{document}